\newcommand{\placeholder}[2]{\DTLfetch{#1}{key}{#2}{value}}
\newcommand{\arro}[1]{[\, #1 \kern.1em )}
\newcommand{\Oh}[1]{\ensuremath{\mathcal{O}(#1)}\xspace}
\newcommand{\githash}[1]{$\FuncSty{\StrLeft{#1}{7}}$}
\newif\ifshowcomments
\definecolor{fuchsiapink}{rgb}{1.0, 0.47, 1.0}
\newcommand{\tobi}[1]{{\color{fuchsiapink}[TH: #1]}}
\newcommand{\lars}[1]{{\color{cyan}{[Lars: #1]}}}
\newcommand{\peter}[1]{{\color{blue}[PS: #1]}}
\newcommand{\seb}[1]{{\color{orange}[SS: #1]}}
\newcommand{\todo}[1]{{\textcolor{red}{\bf [TODO]} \emph{#1}}}
\newcommand{\tobi}[1]{}
\newcommand{\lars}[1]{}
\newcommand{\peter}[1]{}
\newcommand{\seb}[1]{}
\newcommand{\todo}[1]{}
\newcommand{\etal}{{et al}.}
\newcommand{\wrt}{w.r.t.\@\xspace}
\newcommand{\pluseq}{\mathrel{+}=}
\newcommand{\minuseq}{\mathrel{-}=}
\newcommand{\Increment}{\raisebox{.025ex}{\hbox{\tt ++}}}
\newcommand{\Decrement}{\raisebox{.025ex}{\hbox{\tt -}{\tt -}}}
\newcommand{\EndOfStatement}{;\quad}
\newcommand{\pluseqatomic}{\overset{\text{\tiny atomic}}{\mathrel{+}=}}
\newcommand{\minuseqatomic}{\overset{\text{\tiny atomic}}{\mathrel{-}=}}
\newcommand{\neighbors}{\ensuremath{{\Gamma}}}%
\newcommand{\incnets}{\ensuremath{{I}}}%
\newcommand{\maxsize}[1]{\ensuremath{\Delta_{#1}}}
\newcommand{\meddeg}{\ensuremath{\widetilde{d(v)}}}
\newcommand{\medsize}{\ensuremath{\widetilde{|e|}}}
\newcommand{\Partition}{\ensuremath{{\Pi}}}%
\newcommand{\DeltaPartition}{\ensuremath{\Delta \Partition}}%
\newcommand{\pinsinpart}{\ensuremath{{\Phi}}}
\newcommand{\con}{\ensuremath{\lambda}}
\newcommand{\conset}{\ensuremath{\Lambda}}
\newcommand{\quotientgraph}{\ensuremath{\mathcal{Q}}}
\newcommand{\nodeblock}[1]{\ensuremath{\Partition[#1]}}
\newcommand{\ocut}{\ensuremath{\mathfrak{f}_c}}%
\newcommand{\ocon}{\ensuremath{\mathfrak{f}_{\lambda-1}}}%
\newcommand{\osoed}{\ensuremath{\mathfrak{f}_s}}%
\newcommand{\gain}[2]{\ensuremath{g_{#1}(V_{#2})}}
\newcommand{\cutnets}{\ensuremath{E_{\text{Cut}}(\Partition)}}
\newcommand{\expgain}{\ensuremath{\Delta_{\text{exp}}}}
\newcommand{\attrgain}{\ensuremath{\Delta}}
\newcommand{\cluster}{\mathcal{C}}
\newcommand{\statearray}{\ensuremath{\text{\textsf{state}}}}
\newcommand{\statevar}[1]{\ensuremath{\text{\textsf{#1}}}}
\newcommand{\pending}{\ensuremath{\text{\textsf{pending}}}}
\newcommand{\rep}{\ensuremath{\text{\textsf{rep}}}}
\newcommand{\contractions}{\ensuremath{\mathcal{C}}}
\newcommand{\contractionforest}{\ensuremath{\mathcal{F}}}
\newcommand{\maxbatchsize}{\ensuremath{b_{\max}}}
\newcommand{\batches}{\ensuremath{\mathcal{B}}}
\newcommand{\flownetwork}{\ensuremath{\mathcal{N}}}
\newcommand{\flowhypergraph}{\ensuremath{\mathcal{H}}}
\newcommand{\flownodeset}{\ensuremath{\mathcal{V}}}
\newcommand{\flowedgeset}{\ensuremath{\mathcal{E}}}
\DeclareMathAlphabet{\mathpzc}{OT1}{pzc}{m}{n}
\newcommand{\capacity}{\ensuremath{\mathpzc{c}}}
\newcommand{\source}{\ensuremath{s}}
\newcommand{\sink}{\ensuremath{t}}
\newcommand{\sourcesink}{\ensuremath{(\source,\sink)}}
\newcommand{\innode}{\ensuremath{e_{\text{in}}}}
\newcommand{\outnode}{\ensuremath{e_{\text{out}}}}
\newcommand{\distance}{\ensuremath{d}}
\newcommand{\excess}{\ensuremath{\operatorname{exc}}}
\newcommand{\activequeue}{\ensuremath{A}}
\newcommand{\overbar}[1]{\mkern 1.5mu\overline{\mkern-1.5mu#1\mkern-1.5mu}\mkern 1.5mu}
\newcommand{\subhypergraph}[1]{\ensuremath{H[#1]}}
\newcommand{\balancedconstraint}[1]{\ensuremath{L_{#1}}}
\newcommand{\matharray}[1]{\ensuremath{\FuncSty{#1}}}
\newcommand{\atomicfunc}[1]{\ensuremath{\text{\texttt{#1}}}}
\newcommand{\textfunc}[1]{\texttt{#1}}
\newcommand{\gmean}[1]{\ensuremath{\overbar{#1}}}
\newcommand{\gmeantime}{geo\-metric mean running time}
\newcommand{\splitatcommas}[1]{%
  \begingroup
  \begingroup\lccode`~=`, \lowercase{\endgroup
    \edef~{\mathchar\the\mathcode`, \penalty0 \noexpand\hspace{0pt plus 1em}}%
  }\mathcode`,="8000 #1%
  \endgroup
}
\newcommand{\shortmediumhg}{M$_{\scriptstyle \text{HG}}$}
\newcommand{\shortlargehg}{L$_{\scriptstyle \text{HG}}$}
\newcommand{\shortmediumgr}{M$_{\scriptstyle \text{G}}$}
\newcommand{\shortlargegr}{L$_{\scriptstyle \text{G}}$}
\newcommand{\shortmediumparameterhg}{M$_{\scriptstyle \text{P}}$}
\newcommand{\mediumhg}{set \shortmediumhg}
\newcommand{\largehg}{set \shortlargehg}
\newcommand{\mediumgr}{set \shortmediumgr}
\newcommand{\largegr}{set \shortlargegr}
\newcommand{\SAT}{\textsc{Sat14}}
\newcommand{\Dual}{\textsc{Dual}}
\newcommand{\Primal}{\textsc{Primal}}
\newcommand{\ISPD}{\textsc{Ispd98}}
\newcommand{\Literal}{\textsc{Literal}}
\newcommand{\SPM}{\textsc{Spm}}
\newcommand{\DAC}{\textsc{Dac2012}}
\newcommand{\Random}{\textsc{Random Graphs}}
\newcommand{\Social}{\textsc{Social Networks}}
\newcommand{\Dimacs}{\textsc{Dimacs}}
\newcommand{\Partitioner}[1]{\textsf{#1}} 
\newcommand{\Algorithm}[1]{\textsc{#1}}
\newcommand{\CPP}{\texttt{C++}}
\newcommand{\gpp}{\texttt{g++}}
\newcommand{\ShortTBB}{\texttt{TBB}}
\newif\ifpdfplots
\newcommand{%
  \ifpdfplots
    \includegraphics{pdf_plots/.pdf}
  \else
    \tikzsetnextfilename{pdf_plots/}%
    \input{tikz_plots/}%
  \fi
}[1]{%
  \ifpdfplots
    \includegraphics{pdf_plots/#1.pdf}
  \else
    \tikzsetnextfilename{pdf_plots/#1}%
    \input{tikz_plots/#1}%
  \fi
}
\tikzset{external/system call={%
  pdflatex \tikzexternalcheckshellescape -halt-on-error-interaction=batchmode -jobname "\image"
  "\string\PassOptionsToPackage{bookmarks=false}{hyperref}\texsource"}}
\begin{document}

\title{Scalable High-Quality Hypergraph Partitioning}

\author{Lars Gottesbüren}
\email{lars.gottesbueren@kit.edu}
\affiliation{%
  \institution{Karlsruhe Institute of Technology}
  \streetaddress{Am Fasanengarten 5}
  \city{Karlsruhe}
  \state{BW}
  \country{Germany}
}


\author{Tobias Heuer}
\email{tobias.heuer@kit.edu}
\affiliation{%
  \institution{Karlsruhe Institute of Technology}
  \country{Germany}
}

\author{Nikolai Maas}
\email{nikolai.maas@student.kit.edu}
\affiliation{%
  \institution{Karlsruhe Institute of Technology}
  \country{Germany}
}

\author{Peter Sanders}
\email{sanders@kit.edu}
\affiliation{%
  \institution{Karlsruhe Institute of Technology}
  \country{Germany}
}

\author{Sebastian Schlag}
\email{research@sebastianschlag.com}
\affiliation{%
\institution{Independent Researcher}
\city{Sunnyvale}
\state{CA}
\country{USA}
}

\renewcommand{\shortauthors}{Gottesbüren et al.}

\begin{abstract}
Balanced hypergraph partitioning is an NP-hard problem with many applications, e.g., optimizing communication in distributed data placement problems.
The goal is to place all nodes across k different blocks of bounded size, such that hyperedges span as few parts as possible. This problem is well-studied
in sequential and distributed settings, but not in shared-memory. We close this gap by devising efficient and scalable shared-memory algorithms for all
components employed in the best sequential solvers without compromises with regards to solution quality.

This work presents the scalable and high-quality hypergraph partitioning framework \Partitioner{Mt-KaHyPar}. Its most important components are parallel
improvement algorithms based on the FM algorithm and maximum flows, as well as a parallel clustering algorithm for coarsening -- which are used in a
multilevel scheme with $\log(n)$ levels. As additional components, we parallelize the $n$-level partitioning scheme, devise a deterministic version of
our algorithm, and present optimizations for plain graphs.

We evaluate our solver on more than 800 graphs and hypergraphs, and compare it with 25 different algorithms from the literature. Our fastest
configuration outperforms almost all existing hypergraph partitioners with regards to both solution quality and running time. Our highest-quality
configuration achieves the same solution quality as the best sequential partitioner \Partitioner{KaHyPar}, while being an order of magnitude faster
with ten threads. Thus, two of our configurations occupy all fronts of the Pareto curve for hypergraph partitioning.
Furthermore, our solvers exhibit good speedups, e.g., 29.6x in the geometric mean on 64 cores (deterministic), 22.3x ($\log(n)$-level), and 25.9x
($n$-level).

\end{abstract}

\begin{CCSXML}
<ccs2012>
<concept>
<concept_id>10003752.10003809.10010170.10010171</concept_id>
<concept_desc>Theory of computation~Shared memory algorithms</concept_desc>
<concept_significance>500</concept_significance>
</concept>
<concept>
<concept_id>10002950.10003624.10003633.10003637</concept_id>
<concept_desc>Mathematics of computing~Hypergraphs</concept_desc>
<concept_significance>300</concept_significance>
</concept>
<concept>
<concept_id>10003752.10003809.10003635</concept_id>
<concept_desc>Theory of computation~Graph algorithms analysis</concept_desc>
<concept_significance>100</concept_significance>
</concept>
</ccs2012>
\end{CCSXML}

\ccsdesc[500]{Theory of computation~Shared memory algorithms}
\ccsdesc[300]{Mathematics of computing~Hypergraphs}
\ccsdesc[100]{Theory of computation~Graph algorithms analysis}
\keywords{graph and hypergraph partitioning, shared-memory, high-quality, multilevel algorithm, determinism,
          concurrent gain computations, clustering, community detection, work-stealing, FM algorithm, maximum flows}


\maketitle

\clearpage

\section{Introduction}

The \emph{balanced hypergraph partitioning problem} asks for a partition of the node set of a hypergraph into a fixed number of disjoint blocks with bounded size
such that an objective function defined on the hyperedges is minimized. The two most prominent objective functions
are the \emph{edge cut} and \emph{connectivity} metric. The former counts the number of hyperedges connecting more than
one block, while the latter additionally considers the number of blocks spanned by each hyperedge.
The problem has gained attraction in the field of very-large-scale-integration (VLSI) design already in the 1960s~\cite{HYPERGRAPH-KL,Rutman1964,HAUCK,FM}.
Since then, it has been widely adopted in many other areas, such as minimizing the communication volume in parallel scientific
simulations~\cite{PATOH,DBLP:conf/ipps/CatalyurekA01,DBLP:conf/sc/CatalyurekA01}, storage sharding in distributed
databases~\cite{schism,sword,SHP,clay,hepart,adp}, simulations of distributed quantum circuits~\cite{gray2021hyper,andres2019automated},
and as a branching strategy in satisfiability solvers~\cite{SATApplication}.

Unfortunately, balanced partitioning is NP-hard~\cite{LENGAUER,DBLP:journals/tcs/GareyJS76} and hard to
approximate~\cite{DBLP:conf/mfcs/Feldmann12}.
Thus, heuristic solutions are used in practice  -- with
the multilevel scheme emerging as the most successful method to achieve high solution quality in a reasonable
amount of time~\cite{DBLP:conf/sc/HendricksonL95,BarnardS93}.
Figure~\ref{fig:multilevel_paradigm} illustrates this
technique, which consists of three phases.
First, the hypergraph is \emph{coarsened} to obtain a hierarchy of successively smaller and structurally similar
approximations of the input hypergraph by \emph{contracting} pairs or clusters of highly-connected nodes.
Once the hypergraph is small enough, an \emph{initial partition} into $k$ blocks is computed.
Subsequently, the contractions are reverted level-by-level, and, on each level, \emph{local search} heuristics are
used to improve the partition from the previous level.

\begin{figure}
  \centering
  \includegraphics[width=0.8\textwidth]{}
  \caption{The multilevel paradigm.}
  \label{fig:multilevel_paradigm}
\end{figure}

There is a diverse landscape of algorithms that implement the multilevel framework with different time-quality trade-offs, as illustrated in
Figure~\ref{fig:pareto_plot}.
The plot shows two major shortcomings of existing solvers: (i) higher solution
quality comes at the cost of higher running times often by several orders of magnitude, and (ii) parallel algorithms do
not achieve the same solution quality as the best sequential systems because they use comparatively
weaker components that are easier to parallelize (with the exception of our new solver \Partitioner{Mt-KaHyPar}).
Historically, the parallel partitioning community has focused on algorithms
for the distributed-memory model, which turned out to be not well-suited for the fine-grained parallelism required
to effectively parallelize high-quality techniques. However, as the number of cores and main-memory
capacity in modern machines increases, we believe that the shared-memory model has become a viable alternative for
processing large (hyper)graphs and can be used for closing the quality gap between sequential and parallel partitioning algorithms.

\begin{figure}
	\centering
	\begin{minipage}{\textwidth}
		\centering
  \ifpdfplots
    \includegraphics{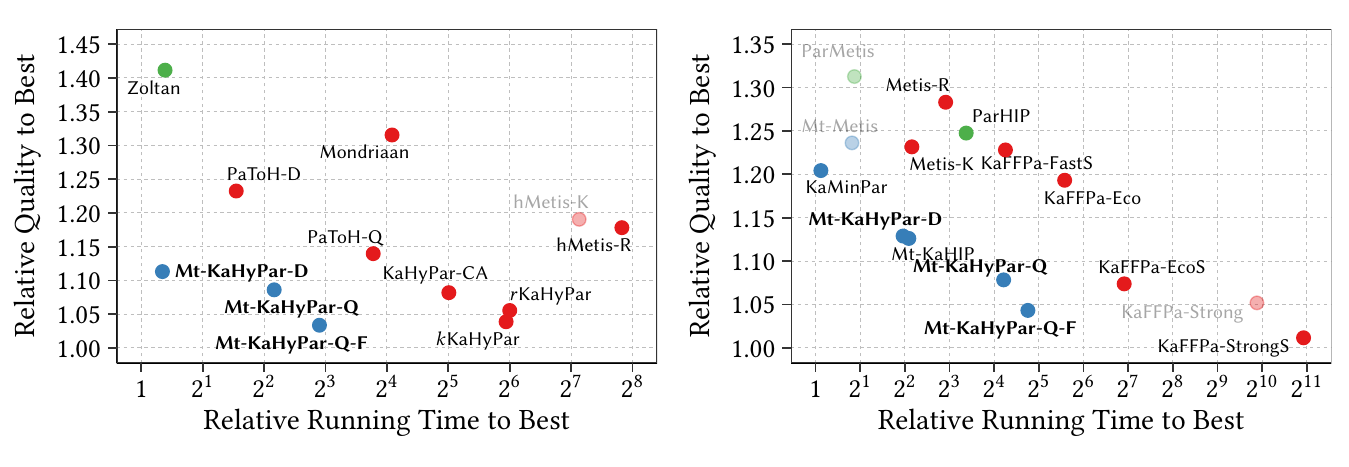}
  \else
    \tikzsetnextfilename{pdf_plots/pareto_plot}%
    \input{tikz_plots/pareto_plot}%
  \fi
	\end{minipage} %
	\begin{minipage}{\textwidth}
		\vspace{-0.1cm}
		\centering
  \ifpdfplots
    \includegraphics{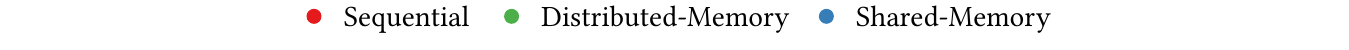}
  \else
    \tikzsetnextfilename{pdf_plots/pareto_plot_legend}%
    \input{tikz_plots/pareto_plot_legend}%
  \fi
	\end{minipage} %
	\vspace{-0.5cm}
  \caption{Solution quality and running times of existing algorithms for hypergraph partitioning (left, connectivity metric) and
           graph partitioning (right, edge cut metric).
           For the $y$-values in the plot (solution quality), we compute the ratios of the objective values of an algorithm relative to the best value produced by any algorithm for each instance and
           aggregate them using the harmonic mean (similarly for running times on the $x$-axis).
           Markers on the lower left side are considered better.
           We run each parallel algorithm using 10 threads.
           Instances are restricted to more than two million edges/pins to make parallelism worthwile (see \mediumgr~and~\shortmediumhg~in Section~\ref{sec:experiments}).
           Partially transparent markers indicate solvers producing more than 15$\%$ infeasible partitions (either imbalanced or timeout).}
	\label{fig:pareto_plot}
\end{figure}

\paragraph{Main Contributions}

Figure~\ref{fig:pareto_plot} highlights the main contribution of this work: A shared-memory multilevel algorithm (\Partitioner{Mt-KaHyPar}) that achieves
the same solution quality as the best sequential codes, while being faster than most of the relevant parallel algorithms in its fastest configuration.
In particular, our Mt-KaHyPar solvers occupy all points on the Pareto frontier for hypergraphs (left) as well as the middle segment for graphs (right).

This is achieved by implementing parallel formulations for the core techniques used in the best sequential algorithms
without compromises in solution quality.
Our coarsening algorithm contracts a clustering of highly-connected nodes on each level and is guided by the community structure of the hypergraph.
The clustering algorithm uses a less restrictive locking protocol than a previous approach~\cite{PARALLEL-PATOH} and resolves conflicting clustering decisions
\emph{on-the-fly}. Initial partitioning is done via parallel recursive bipartitioning
and a portfolio solver, leveraging work-stealing to account for load imbalances. The key feature distinguishing \Partitioner{Mt-KaHyPar} from previous parallel systems are the substantially stronger local search
algorithms. We present the first fully-parallel implementation of the FM algorithm and
a parallel version of flow-based refinement.
For these algorithms, we propose several novel and easy-to-implement solutions to overcome some fundamental parallelization challenges such as, for example, techniques to
(re)compute correct gain values for concurrent node moves.

We also present several extensions of the core multilevel algorithm.
We devise the first parallel formulation of the $n$-level partitioning scheme -- the most extreme instantiation of the multilevel technique --
contracting only a single node on each level.  Correspondingly, in each refinement step, only a single node is uncontracted followed by a highly-localized
search for improvements around the uncontracted node, leading to more fine-grained refinement and ultimately better solution quality in a single run.
Furthermore, we present a deterministic version of our multilevel algorithm.
This offers reproducible results and thus also stable results, whereas previous algorithms may have large variance from repeated runs.
Furthermore, some applications even require deterministic results or value them highly (e.g., VLSI design due to manual post-processing).
Moreover, we present data structure optimizations that speed up our algorithm by a factor of two when running on plain graphs instead of hypergraphs.

In our extensive experimental evaluation, we compare \Partitioner{Mt-KaHyPar} to $25$ different sequential and parallel
graph and hypergraph partitioners on over $800$ graphs and hypergraphs with up to 2 billion edges/pins.
As of today and to the best of our knowledge, this is the most comprehensive comparison of partitioning algorithms in the literature.
As a main result, the highest-quality configuration of \Partitioner{Mt-KaHyPar}
produces partitions that are on par with \Partitioner{KaHyPar}~\cite{KAHYPAR-DIS} -- the best sequential hypergraph partitioner -- while being
almost an order of magnitude faster with only ten threads. The fastest configuration of \Partitioner{Mt-KaHyPar} achieves a self-relative speedup
of $22.3$ with 64 threads and computes partitions that are $23\%$ better
than those of \Partitioner{Zoltan}~\cite{ZOLTAN} (distributed-memory), while being a factor of $2.72$ faster on average.
Out of all evaluated algorithms, \Partitioner{KaFFPa}~\cite{KAFFPA} (sequential) computes slightly better solutions, while
\Partitioner{KaMinPar}~\cite{KAMINPAR} (shared-memory) is faster than \Partitioner{Mt-KaHyPar}.

The work presents the main results of several conference publications~\cite{MT-KAHYPAR-D,MT-KAHYPAR-Q,MT-KAHYPAR-FLOWS,MT-KAHYPAR-SDET} and summarizes
the dissertations of Gottesbüren~\cite{GOTT-DIS} and Heuer~\cite{HEUER-DIS}. The added value of the paper is the detailed
overview of the overall framework that contains the highest-quality and one of the fastest algorithms for partitioning (hyper)graphs.
This paper puts particular focus on our multilevel partitioning algorithm which provides the best time-quality trade-off.
We describe the algorithm with a greater level of detail compared to the corresponding conference version~\cite{MT-KAHYPAR-D}.
Furthermore, the previously mentioned optimizations for graph partitioning are unpublished. Another key contribution is the large experimental
evaluation, going beyond the scope of the individual publications by including graph partitioning, breaking down the running times of individual components, and including even more competing baseline algorithms.
We included almost all publicly available multilevel graph
and hypergraph partitioning algorithms to provide a comprehensive overview on the landscape of partitioning tools.


\paragraph{Outline}
Section~\ref{sec:preliminaries} introduces basic notation and definitions used throughout this work. We then start
the algorithm description with a high-level overview of the multilevel partitioning algorithm in Section~\ref{sec:overview}.
The following sections are structured according to the different phases of the multilevel scheme: Section~\ref{sec:coarsening}
and~\ref{sec:initial_partitioning} describe the coarsening and initial partitioning algorithm, while we discuss
different concurrent gain (re)computation techniques and the implementation of the parallel FM and flow-based
refinement algorithms in Section~\ref{sec:gain_computation}--\ref{sec:flows}. In Section~\ref{sec:nlevel}, we
present the parallelization of the $n$-level partitioning scheme, and conclude the algorithmic part with our
data structure optimizations for graph partitioning and the deterministic version of the multilevel algorithm
in Section~\ref{sec:parallel_graph} and~\ref{sec:deterministic}. We then turn to the experimental evaluation
in Section~\ref{sec:experiments}.
Here, we evaluate the solution quality and scalability of the different configurations of
\Partitioner{Mt-KaHyPar}, and compare them to existing partitioning algorithms.
Section~\ref{sec:conclusion} concludes the work and presents directions for future research.

As this paper covers a wide range of partitioning techniques, we review relevant literature in the
corresponding sections. For a comprehensive overview on (hyper)graph partitioning, we refer the reader
to existing surveys~\cite{HYPERGRAPH-SURVEY,ALPERT-SURVEY,BulucMSS016,PAPA-MARKOV,GRAPH-SURVEY}
and the literature overviews in the theses of Lars Gottesbüren~\cite{GOTT-DIS}, Tobias Heuer~\cite{HEUER-DIS},
and Sebastian Schlag~\cite{KAHYPAR-DIS}.

\section{Preliminaries}\label{sec:preliminaries}

\paragraph{Hypergraphs}
A \emph{weighted hypergraph} $H=(V,E,c,\omega)$ is defined as a set of  $n$ nodes $V$ and a set of $m$
hyperedges $E$ (also called \emph{nets})
with node weights $c:V \rightarrow \mathbb{R}_{>0}$ and net weights $\omega:E \rightarrow \mathbb{R}_{>0}$, where each net $e$ is
a subset of the node set $V$. The nodes of a net are called its \emph{pins}.
We extend $c$ and $\omega$ to sets in a natural way, i.e., $c(U) :=\sum_{u \in U} c(u)$ and $\omega(F) :=\sum_{e \in F} \omega(e)$.
A node $u$ is \emph{incident} to a net $e$ if $u \in e$.
$\incnets(u) := \{e \mid u \in e\}$ is the set of all incident nets of $u$.
The set $\neighbors(u) := \{ v \mid \exists e \in E : \{u,v\} \subseteq e\}$ denotes the neighbors of $u$.
Two nodes $u$ and $v$ are \emph{adjacent} if $v \in \neighbors(u)$.
The \emph{degree} of a node $u$ is $d(u) := |\mathrm{I}(u)|$.
The \emph{size} $|e|$ of a net $e$ is the number of its pins.
Nets of size one are called \emph{single-pin} nets.
We denote the number of pins of a hypergraph with $p := \sum_{e \in E} |e| = \sum_{v \in V} d(v)$.
We call two nets $e_i$ and $e_j$ \emph{identical} if $e_i = e_j$.
Given a subset $V' \subset V$, the \emph{subhypergraph}
$\subhypergraph{V'}$ is defined as $\subhypergraph{V'}:=(V', \{e \cap V' \mid e \in E : e \cap V' \neq \emptyset \}, c, \omega')$
where $\omega'(e \cap V')$ is the weight of hyperedge $e$ in $H$.
The \emph{bipartite graph representation} $G_x := (V \cup E, E_x)$~\cite{BIPARTITE-GRAPH,HYPERGRAPH-KL}
of an unweighted hypergraph $H = (V,E)$ contains the nodes and nets of $H$ as node set and for each pin $u \in e$, we add an undirected edge $\{u,e\}$ to $E_x$.
More formally, $E_x := \{\{u,e\} \mid \exists e \in E: u \in e\}$.

\paragraph{Clusterings and Partitions}
A \emph{clustering} $\cluster = \{C_1, \ldots, C_l\}$ of a hypergraph $H = (V,E,c,\omega)$ is a partition of the node set $V$ into disjoint subsets.
A cluster $C_i$ is called a \emph{singleton} cluster if $|C_i| = 1$.
A node contained in a singleton cluster is called \emph{unclustered}.
A \emph{$k$-way partition} of a hypergraph $H$ is a clustering into a predefined number of disjoint blocks
$\Partition = \{V_1, \ldots, V_k\}$.
A $2$-way partition is also called a \emph{bipartition}.
We denote the block to which a node $u$ is assigned by $\nodeblock{u}$.
For each net $e$, $\conset(e) := \{V_i \mid  V_i \cap e \neq \emptyset\}$ denotes the \emph{connectivity set} of $e$.
The \emph{connectivity} $\con(e)$ of a net $e$ is $\con(e) := |\conset(e)|$.
A net is called a \emph{cut net} if $\con(e) > 1$.
A node $u$ that is incident to at least one cut net is called \emph{boundary node}.
The number of pins of a net $e$ in block $V_i$ is denoted by $\pinsinpart(e,V_i) := |e \cap V_i|$.
We refer to $\pinsinpart(e,V_i)$ as the \emph{pin count value} for a net $e$ and block $V_i$.
The set $E(V_i,V_j) := \{e \in E \mid \{V_i,V_j\} \subseteq \conset(e)\}$ represents the
cut nets connecting block $V_i$ and $V_j$. Two blocks $V_i$ and $V_j$ are \emph{adjacent} if $E(V_i,V_j) \neq \emptyset$.
The \emph{quotient graph} $\quotientgraph := (\Partition, E_{\Partition} := \{(V_i, V_j) \mid E(V_i,V_j) \neq \emptyset\})$
contains an edge between all adjacent blocks. 

\paragraph{The Balanced Hypergraph Partitioning Problem}

The \emph{balanced hypergraph partitioning problem} is to
find a $k$-way partition $\Partition$ of a hypergraph $H$ that minimizes an objective function defined on the hyperedges where
each block $V' \in \Partition$ satisfies the \emph{balance constraint}: $c(V') \leq \balancedconstraint{\max} := (1+\varepsilon) \lceil \frac{c(V)}{k} \rceil$\footnote{The
$\lceil\cdot\rceil$ in this definition ensures that there is always a feasible solution for inputs with unit node weights. However, this does not hold for general weighted inputs as finding
a balanced solution for its own is an NP-hard problem~\cite{garey1979computers}. There exists several alternative definitions~\cite{KAMINPAR,DEEP-BALANCE-PAPER}, but no commonly
accepted way how to deal with feasibility. In this work, we use the original definition since our benchmark instances are unweighted.}
for some \emph{imbalance ratio} $\varepsilon \in (0,1)$.
If $\Partition$ satisfies the balance constraint, we call $\Partition$ \emph{$\varepsilon$-balanced} or just say \emph{balanced} or \emph{feasible} when $\varepsilon$
is clear from the context.
For $k = 2$, we refer to the problem as the \emph{bipartitioning} problem.
The two most prominent objective functions are the \emph{cut-net} metric $\ocut := \sum_{e \in \cutnets} \omega(e)$ (also called \emph{edge cut} metric for graph partitioning)
and \emph{connectivity} metric $\ocon(\Partition) := \sum_{e \in \cutnets} (\lambda(e) - 1) \cdot \omega(e)$ (also called $(\lambda - 1)$-metric)
where $\cutnets$ denotes the set of all cut nets. The cut-net metric directly generalizes the edge cut metric from graphs to
hypergraphs and minimizes the weight of all cut hyperedges.
The connectivity metric additionally considers the number of blocks connected by a net and thus more accurately models
the communication volume for parallel computations~\cite{PATOH} (e.g., for the parallel sparse matrix-vector multiplication).
The hypergraph partitioning problem is NP-hard for both objective functions~\cite{LENGAUER}.

\paragraph{Recursive Bipartitioning vs Direct $k$-way Partitioning}
A $k$-way partition of a hypergraph can be obtained either by \emph{recursive bipartitioning} or
\emph{direct $k$-way partitioning}. The former first computes a bipartition and then
calls the bipartitioning routine on both blocks recursively until the input hypergraph is divided into the desired number of blocks. The latter partitions
the hypergraph directly into $k$ blocks and applies $k$-way local search algorithms to improve the solution.

\section{A Brief Overview of the Partitioning Algorithm}\label{sec:overview}

Algorithm~\ref{pseudocode:multilevel} shows the high-level structure of our multilevel partitioning algorithm.
While the pseudocode presented does not explicitly exhibit parallelism, it shows the algorithmic components for which we provide
parallel implementations.

The coarsening algorithm proceeds in rounds until the hypergraph is considered as small enough for initial partitioning.
In each round, we find a clustering of highly-connected nodes and subsequently contract the clustering in parallel.
The clustering algorithm iterates over the nodes in parallel and finds the best target cluster for a node according
to a rating function. Afterwards, the node joins its desired cluster for which we implement a novel locking protocol that detects
and resolves conflicting clustering decisions on-the-fly.

Initial partitioning is done via parallel
recursive bipartitioning using a novel work-stealing approach to account for load imbalances within the parallel bipartitioning
calls. To compute an initial bipartition, we use a portfolio of \emph{nine} different bipartitioning techniques, which is run several
times in parallel. The best bipartition out of all runs is then used as initial solution.

In the uncoarsening phase, we project the partition onto the next hypergraph in the hierarchy
by assigning the nodes to the block of their corresponding constituent in the coarser representation.
Subsequently, we improve the partition using three different parallel refinement algorithms:
label propagation refinement (used in most of the existing parallel partitioning algorithms),
a highly-localized version of the FM algorithm (improves an existing implementation used in \Partitioner{Mt-KaHIP}~\cite{MT-KAHIP}),
and a novel parallelization of flow-based refinement. The rationale behind the use of three different local search algorithms executed in this order
is that it allows for increasingly better solution quality at the cost of higher running times.

The following sections are structured according to the different phases of the multilevel scheme, and provide a more detailed
explanation of the different algorithmic components of Algorithm~\ref{pseudocode:multilevel}. In Section~\ref{sec:coarsening}
and~\ref{sec:initial_partitioning}, we present our coarsening and initial partitioning algorithm. The
description of the uncoarsening phase is split into three separate sections: Section~\ref{sec:gain_computation} describes the
partition data structure and several concurrent gain (re)computation techniques, while Section~\ref{sec:fm} and~\ref{sec:flows} presents our parallel
FM and flow-based refinement algorithm.

\begin{algorithm2e}[!t]
\KwIn{Hypergraph $H = (V,E)$, number of blocks $k$}
\KwOut{$k$-way partition $\Partition$ of $H$}
\caption{The Multilevel Partitioning Algorithm
}\label{pseudocode:multilevel}

$H_1 \gets H\EndOfStatement \mathcal{H} \gets \langle H_1 \rangle \EndOfStatement n \gets 1$\;
\While {$V_i$ has too many nodes}{
  $\cluster \gets \FuncSty{ComputeClustering}(H_n)$ \;
  $H_{n + 1} \gets H_n.\FuncSty{Contract}(\cluster)\EndOfStatement \mathcal{H} \gets \mathcal{H} \cup \langle H_{n+1}\rangle \EndOfStatement \Increment n$\; \label{multilevel:contraction}
}

$\Partition \gets$ \FuncSty{InitialPartition}($H_n, k$)\;
\For {$i = n - 1$ {\bf down to} $1$}{
  $\Partition \gets$ project $\Partition$ onto $H_i$\;
  $\FuncSty{LabelPropagationRefinement}(H_i,\Partition)$ \tcp*[r]{finds easy improvements by moving single nodes}
  $\FuncSty{FMRefinement}(H_i,\Partition)$   \tcp*[r]{finds short and non-trivial move sets}
  $\FuncSty{FlowBasedRefinement}(H_i,\Partition)$ \tcp*[r]{global optimization finding long and complex move sets}
}

\Return{$\Partition$}
\end{algorithm2e}


\section{The Coarsening Phase}\label{sec:coarsening}

The goal of the coarsening phase is to find successively smaller and structurally similiar approximations of the input
hypergraph~\cite{Walshaw2003} such that initial partitioning can find a partition of high quality not significantly worse
than the partition that can be found on the input hypergraph~\cite{karypis2003multilevel}.
This can be achieved by grouping highly-connected nodes together and merging each group into a single node,
which can be done by computing either a matching~\cite{DBLP:conf/sc/HendricksonL95,
KarypisK98, DBLP:journals/jpdc/KarypisK98a,WalshawC00,DIBAP,MONDRIAAN,PARJOSTLE,PARMETIS,PT-SCOTCH,KAPPA,MT-METIS,ZOLTAN} or clustering of the
nodes~\cite{KAFFPA,HMETIS,PATOH,KAMINPAR,PARHIP,MT-KAHIP,PARKWAY-2,BIPART}.
The latter was shown to be more effective in reducing the
size of (hyper)graphs with highly-skewed node degree distributions~\cite{Abou-RjeiliK06,PARHIP} (e.g., social networks).
In the following, we present our parallel clustering-based coarsening algorithm that works similar to the shared-memory version
of \Partitioner{PaToH}'s coarsening scheme~\cite{PARALLEL-PATOH}.
However, the algorithm of \citet{PARALLEL-PATOH} excessively locks nodes when
evaluating the rating function and adding nodes to clusters. We therefore propose a less restrive locking protocol that completly
omits locking nodes when computing the best target cluster for a node.
Moreover, it detects and resolves conflicting clustering decisions \emph{on-the-fly}, while previous approaches relied on a postprocessing step~\cite{PARALLEL-PATOH,MT-METIS,MT-KAHIP-DIS}.

\subsection{The Clustering Algorithm}
Our coarsening algorithm repeatedly finds a clustering $\cluster$ of the nodes and subsequently contracts it
until the hypergraph is small enough. We represent the clustering $\cluster$ using an array $\rep$ of size $n$. We then choose
one representative $v \in C$ for each cluster $C \in \cluster$ and store $\rep[u] = v$ for each node $u \in C$.
Initially, each node is unclustered (i.e., $\rep[u] = u$ for each node $u \in V$). The clustering algorithm
then iterates over the nodes in parallel and assigns each unclustered node to the best target cluster
according to a rating function, which we introduce in the subsequent paragraph.

\paragraph{Cluster Join Operation}

\begin{figure}[!t]
  \centering
  \includegraphics[width=0.85\textwidth]{}
  \caption{A path (top-left) and cyclic conflict (top-middle), and a combination of both conflicts (top-right)
           with their resolutions (bottom).}
  \label{fig:clustering_conflicts}
\end{figure}

Once a node $u$ chooses its desired target cluster $C$ represented by a node $v$, we have to set $\rep[u] = v$.
Since several nodes can join clusters simultaneously, there may occur conflicts that must
be resolved. As illustrated in Figure~\ref{fig:clustering_conflicts}, there are two types of conflicts:
\emph{path} and \emph{cyclic} conflicts. A path conflict involves several nodes $u_1, \ldots, u_l$ and occurs
when each node $u_i$ tries to join $u_{i+1}$. In a cyclic conflict, the last node $u_l$ additionally
tries to join $u_1$. It is also possible that a combination of both conflicts occurs, as illustrated in Figure~\ref{fig:clustering_conflicts}
(right). We can resolve a path conflict when each node $u_i$ waits until $u_{i+1}$ has joined its desired cluster.
Afterwards, we can set $\rep[u_i] = \rep[u_{i+1}]$ to resolve the conflict. However, applying this resolution scheme
to cyclic conflicts would result in a deadlock. Therefore, the threads must agree on a cluster join operation that
breaks the cycle and reduces it to a path conflict.

Algorithm~\ref{pseudocode:cluster_join_operation} shows the pseudocode of our cluster join operation, which takes a node $u$ as input, and adds it to a cluster represented by a node $v$.
The algorithm associates each node with one of the following three states: \emph{unclustered}, currently \emph{joining} a cluster, or \emph{clustered}.
Unclustered nodes ($\rep[u] = u$) can join clusters, while an already clustered node is not considered by the clustering algorithm
anymore and therefore its representative does not change. If a thread sets the state of a node $u$ from unclustered to joining via an
atomic \textfunc{compare-and-swap} operation, it acquires exclusive ownership for modifying $\rep[u]$ and setting its state to clustered.
Thus, if we succeed in setting the state of $u$ and $v$ to joining or $v$ is already clustered, we can safely set $\rep[u] = \rep[v]$
(see Line~\ref{cluster_join:simple_case_1}--\ref{cluster_join:simple_case_3}) since this guarantees that no other thread modifies $\rep[u]$ and $\rep[v]$.
Note that the representative of $v$ may have changed due to concurrent cluster join operations.
In that case, its representative is stored in $\rep[v]$. We therefore always set $\rep[u] = \rep[v]$ (instead of $\rep[u] = v$).

If another thread sets the state of $v$ to joining, we know that $v$ also tries to join a cluster.
To resolve the conflict, we spin in a busy-waiting loop until
the state of $v$ is updated to clustered (see Line~\ref{cluster_join:busy_waiting}), and then join its new cluster (path conflict).
In the busy-waiting loop, we additionally check if $u$ is part of a cycle of nodes trying to join each other.
To detect a cyclic conflict, each node writes its desired target cluster into a globally shared vector and checks if this induces a cycle.
If so, the node with the smallest ID in the cycle gets to join its desired cluster, thus breaking the cycle.

\begin{algorithm2e}[!t]
\KwIn{A node $u$ that wants to join $v$'s cluster}
\caption{Cluster Join Operation}\label{pseudocode:cluster_join_operation}

\If (\label{cluster_join:simple_case_1}) { $\atomicfunc{compare-and-swap}(\statearray[u], \statevar{Unclustered}, \statevar{Joining})$ } {
  \If (\label{cluster_join:simple_case_2}) { $\statearray[v] = \statevar{Clustered}$ \textbf{or} $\atomicfunc{compare-and-swap}(\statearray[v], \statevar{Unclustered}, \statevar{Joining})$ } {
    $\rep[u] \gets \rep[v]$\; \label{cluster_join:simple_case_3}
  }
  \Else (\tcp*[f]{Another thread tries to add $v$ to a cluster}) {
    \While (\tcp*[f]{busy-waiting loop}\label{cluster_join:busy_waiting}) { $\statearray[v] = \statevar{Joining}$ } {
      \If { cyclic conflict detected and $u$ is node with smallest ID in cycle } {
        $\rep[u] \gets \rep[v]\EndOfStatement \statearray[u], \statearray[v] \gets \statevar{Clustered}\EndOfStatement \textbf{break}$\;
      }
    }

    \lIf (\tcp*[f]{resolves path conflicts}) { $\statearray[u] = \statevar{Joining}$ } { $\rep[u] \gets \rep[v]$ }
  }
  $\statearray[u], \statearray[v] \gets \statevar{Clustered}$\;
}

\end{algorithm2e}

\paragraph{Rating Function}
A node $u$ joins the cluster $C$ maximizing the heavy-edge rating function
\[r(u,C) = \sum_{e \in \incnets(u) \cap \incnets(C)} \frac{\omega(e)}{|e| - 1}.\]
The rating function is commonly used in the partitioning literature~\cite{PATOH,HMETIS,KAHYPAR-K} and prefers clusters connected to $u$ via
a large number of heavy nets with small size. We evaluate the rating function by iterating
over the incident nets $e \in \incnets(u)$ and aggregating the ratings to the representatives $\rep[v]$ of each pin $v \in e$
in a thread-local hash table.
Afterwards, we iterate over the aggregated ratings and determine the representative $\rep[v]$ that maximizes $r(u,\rep[v])$.
Ties are broken uniformly at random. Subsequently, we perform the cluster join operation that sets $\rep[u] = \rep[v]$.

To aggregate ratings, we use fixed-capacity linear probing hash tables with $2^{15}$ entries and resort to a larger hash table
if the fill ratio exceeds $\nicefrac{1}{3}$ of the capacity. This technique can considerably reduce the number of cache misses
since most neighborhoods are small in real-world hypergraphs.
We further note that the representative of a node
can change during the evaluation of the rating function since we do not lock the nodes.
However, it has already been shown that such conflicts rarely happen in practice~\cite{PARALLEL-PATOH}
and therefore have a negligible impact on the partitioning result.

\paragraph{Contraction Limit}
We stop coarsening when the number of nodes in the smallest hypergraph reaches $160k$. This contraction limit
was chosen based on our prior research on sequential hypergraph partitioning~\cite{KAHYPAR-IP}. In addition, we terminate
the clustering algorithm when the number of nodes would drop below $\frac{c(V)}{2.5}$ after the contraction step.
This prevents the coarsening process from reducing the size of the hypergraph too aggressively~\cite{HMETIS,Abou-RjeiliK06}.
Conversely, we also stop coarsening if the contraction step does not reduce the number of nodes by more than $1\%$, even if the
$160k$ node limit is not reached. This can happen since we enforce an upper weight limit $c_{\max}$
on the weight of the heaviest cluster (set to $\frac{c(V)}{160k}$ as in \Partitioner{KaHyPar}~\cite{KAHYPAR-IP}),
which prevents highly-skewed node-weight distributions that would make it difficult for initial partitioning to find a balanced solution~\cite{KAHYPAR-K,PARHIP}.
When adding a node to a cluster $C \in \cluster$, we ensure that $c(C) \le c_{\max}$ by updating cluster weights via atomic \textfunc{fetch-and-add} instructions.
If $c(C) > c_{\max}$ after the update, we reject the corresponding cluster join operation and revert the cluster weight update.
The cluster weight limit can lead to coarsening passes that do not sufficiently reduce the size of the hypergraph.

\subsection{The Contraction Algorithm}

The hypergraph data structure stores the incident nets $I(u)$ of each node $u \in V$ and the pin-lists of each net $e \in E$
using two adjacency arrays. Each node $u$ and net $e$ additionally
stores its weight $c(u)$ and $\omega(e)$.
Contracting a clustering $\cluster = \{C_1,\ldots,C_l\}$ replaces each
cluster $C_i$ with one supernode $u_i$ with weight $c(u_i) = \sum_{v \in C_i} c(v)$. For each net $e \in E$,
we replace each pin $v \in e$ with the node $u_i$ representing the cluster $C_i$ in which $v$ is contained ($\rep[v] = u_i$).
After the replacement, multiple occurrences of the same supernode in a net are discarded.

Our contraction algorithm consists of several simple, easily parallelizable operations including remapping node IDs to a consecutive range,
aggregating cluster weights and degrees using atomic \textfunc{fetch-and-add} instructions,
eliminating duplicated entries in pin-lists, and using parallel prefix sum operations to construct the adjacency arrays of the contracted hypergraph.
As these steps are rather low level, we refer the reader to Ref.~\cite[p.~87]{HEUER-DIS} for more details.

A challenging aspect is removing duplicates from the set of nets.
We identify groups of identical nets and remove all but one representative per group to which we assign their aggregate weight.
This can reduce the number of pins significantly and therefore accelerates the other algorithmic components.
A simple algorithm is to perform pair-wise comparisons between all nets, which is however too expensive in practice.
To this end, we parallelize the \Algorithm{InrSrt} algorithm of Aykanat et al.~\cite{INR-Source, INR} for identical net detection.
It uses \emph{fingerprints} $f(e) := \sum_{v \in e} v^2$ to eliminate unnecessary pairwise comparisons between nets, by grouping nets with equal fingerprints via sorting.
Nets with different fingerprints or different sizes cannot be identical.
We distribute the fingerprints and their associated nets to the threads using a hash function.
Each thread sorts the nets by their fingerprint and size, and then performs pairwise comparisons on the subranges of potentially identical nets.
We aggregate the weights of identical nets at a representative and mark the others as invalid in a bitset.
A parallel prefix sum over the bitset maps the hyperedge IDs to a consecutive range in the contracted hypergraph.
Note that we also remove nets that contain only a single pin since they do not contribute to the cut.

\subsection{Community-Aware Coarsening}\label{sec:community_detection}

A popular approach to improve an existing $k$-way partition $\Partition$ is the \emph{iterated multilevel cycle} technique~\cite{WalshawVcycle} (also called \emph{V-cycle}).
In the coarsening phase, the algorithm forbids contractions between nodes that are not in the same block in $\Partition$, thus preserving the already identified cut structure. While the technique can be effective, using it as a postprocessing step in a multilevel algorithm almost doubles the running time.
As a more lightweight alternative, Heuer and Schlag~\cite{KAHYPAR-CA} proposed using a clustering of the nodes
computed via a community detection algorithm instead of an existing $k$-way partition.
Community detection still captures the sparse cut patterns that are often found in good $k$-way partitions.
The authors showed that this substantially improves the
quality of both the initial and the final partition, and only slightly increases the running time of the overall algorithm.

We also integrate the approach into our partitioning algorithm. We run the algorithm
as a preprocessing step before the coarsening phase and then use the clustering to restrict contractions
to nodes that belong to the same cluster.
The algorithm consists of two steps: transforming the hypergraph into its bipartite graph representation and then running
the parallel Louvain method of Staudt and Meyerhenke~\cite{PARALLEL-LOUVAIN,Louvain} for modularity
maximization, a widely used objective function for community detection~\cite{modularity.np, Newman04}.


\section{The Initial Partitioning Phase}\label{sec:initial_partitioning}

Partitioning algorithms based on the direct $k$-way partitioning scheme often use multilevel recursive
bipartitioning to obtain an initial $k$-way partition~\cite{DBLP:journals/jpdc/KarypisK98a,Schulz-DIS,KAHYPAR-K,kPaToH}, as this leads to partitions with significantly better solution quality than using flat (non-multilevel)
$k$-way partitioning methods. 
Many parallel partitioners run sequential initial partitioning algorithms in parallel~\cite{PARJOSTLE,PARMETIS,PARKWAY-1,PARKWAY-2,ZOLTAN,KAPPA,MT-KAHIP}. However, the sequential calls can become a bottleneck when the smallest
hypergraph is still large. A more scalable approach parallelizes the recursive calls after each bipartitioning operation~\cite{MT-METIS-OPT,PT-SCOTCH}.
The common approach is to statically split the thread pool along with the subproblems. Since this can lead to load imbalance when processing hypergraphs with unequal densities
in the recursive partitioning calls ,we instead generate tasks that can be dynamically load balanced using work stealing.

\paragraph{Parallel Recursive Bipartitioning}
We compute initial $k$-way partitions via parallel recursive bipartitioning using
Algorithm~\ref{pseudocode:multilevel} initialized with $k = 2$ (without flow-based refinement). For the bipartitioning case,
we replace the initial partitioning call with a portfolio of bipartitioning techniques.

Once we obtain a bipartition $\Partition = \{V_1,V_2\}$ of the input hypergraph $H$, we extract the subhypergraphs $\subhypergraph{V_1}$
and $\subhypergraph{V_2}$
and recurse on both in parallel by partitioning $\subhypergraph{V_1}$ into $\lceil \frac{k}{2} \rceil$
and $\subhypergraph{V_2}$ into $\lfloor \frac{k}{2} \rfloor$ blocks.
We ensure that the final $k$-way partition
obtained via recursive bipartitioning is $\varepsilon$-balanced by adapting the imbalance ratio for each bipartition individually~\cite{KaHyPar-R}.
Let $\subhypergraph{V'}$ be a subhypergraph that should be recursively partitioned into $k' \le k$ blocks. Then,
\begin{equation}
\label{eq:adaptive_epsilon}
\varepsilon' := \biggl( (1 + \varepsilon) \frac{c(V)}{k} \cdot \frac{k'}{c(V')} \biggr)^{\frac{1}{\lceil \log_2{k'} \rceil}} - 1
\end{equation}
is the imbalance ratio used for the bipartition of $H_{V'}$. If each bipartition is $\varepsilon'$-balanced, then it is guaranteed
that the final $k$-way partition is $\varepsilon$-balanced~\cite[Lemma~4.1 on p.~104]{KAHYPAR-DIS}.
%

\paragraph{Portfolio-Based Bipartitioning}
We implemented the same portfolio of initial bipartitioning techniques as in \Partitioner{KaHyPar} \cite{KaHyPar-R,KAHYPAR-IP},
including seven different variants of (greedy) hypergraph growing \cite{PATOH-MANUAL,HMETIS,PATOH,KarypisK98,Schulz-DIS,KaHyPar-R,KAHYPAR-DIS},
random assignment~\cite{PATOH-MANUAL,HMETIS,KaHyPar-R,KAHYPAR-DIS,MONDRIAAN}, and label propagation initial
partitioning~\cite{KaHyPar-R,KAHYPAR-DIS}.
We refer the reader to Ref.~\cite[p.~95--96]{HEUER-DIS} for more details on their implementation.
We run each algorithm independently in parallel for at least $5$ and at most $20$ times.
After $5$ runs, we only run an algorithm again if it is likely to improve the best solution $\Partition^*$ found so far.
We estimate this based on the arithmetic mean $\mu$ and standard deviation $\sigma$ of the connectivity values achieved by that algorithm so far, using the $95\%$ rule.
Assuming the connectivity values follow a normal distribution, roughly $95\%$ of the runs will fall between $\mu - 2 \sigma$ and $\mu + 2 \sigma$.
If $\mu - 2 \sigma > \ocon(\Partition^*)$, we do not run the algorithm again.
Additionally, we refine each bipartition using sequential $2$-way FM refinement~\cite{FM}.
We continue uncoarsening using the bipartition with the best connectivity value. In case of ties, we prefer the bipartition
with the best balance.

\section{Gain Computation Techniques}\label{sec:gain_computation}

Local search algorithms greedily move nodes to different blocks according to a \emph{gain value}.
The gain value reflects the change in the objective function for a particular node move. For the connectivity metric,
the gain $\gain{u}{t}$ of moving a node $u$ to a target block $V_t$
can be expressed as follows:
\[\gain{u}{t} := \omega(\{e \in \incnets(u) \mid \pinsinpart(e, \nodeblock{u}) = 1 \}) - \omega(\{e \in \incnets(u) \mid \pinsinpart(e, V_t) = 0 \}). \]
Moving node $u$ to block $V_t$ decreases the connectivity of all nets by one for which $u$ is the last remaining pin in its current block $\nodeblock{u}$.
Conversely, the move increases the connectivity of all nets $e \in \incnets(u)$ by one for which no pin $v \in e$ is assigned to the target block $V_t$.

\begin{figure}[!t]
  \centering
  \includegraphics[width=0.85\textwidth]{}
  \caption{Example of a move conflict when two nodes are moved simultaneously. Both threads assume that the individual node moves
           removes the edge with weight $5$ from the cut. However, the edge is still cut after moving both nodes and
           the edges with weight $2$ become cut edges.}
  \label{fig:move_conflict}
\end{figure}

To achieve meaningful speedups, parallel refinement algorithms need to move nodes concurrently.
The actual gain of a node move can change between the time it is initially calculated and the time it is applied to the partition, due to concurrent node moves in its neighborhood~\cite{PARMETIS}.
As a consequence, two concurrent node moves can worsen the connectivity metric, even if their individual gains suggested an improvement, as illustrated in Figure~\ref{fig:move_conflict}.
Thus, correctly calculating gains is a fundamental challenge for parallel refinement algorithms.

These conflicts occur when two adjacent nodes change their blocks simultaneously. Common remedies include computing a node coloring and only moving nodes of the same color at a time~\cite{PARMETIS}, scheduling $2$-way refinement
algorithms on block pairs that form a matching in the quotient graph in parallel~\cite{PARJOSTLE,KAPPA},
allowing only node moves from a block $V_s$ to $V_t$ if $s < t$~\cite{MT-METIS,PARKWAY-2,ZOLTAN} (and vice versa in a second phase),
or following an optimistic strategy assuming that conflicts happen rarely in practice~\cite{PARHIP,MT-KAHIP,KAMINPAR,BIPART}.

The presented approaches still allow all individual node moves,
but combining arbitrary moves into a single move sequence might be not always possible.
This is problematic for parallelizing local search techniques as their sequential counterparts often identify a
set of moves that only yield an improvement if moved together.
While ignoring search conflicts appears to be the preferred approach, their impact on solution quality is unpredictable and deserves further consideration.

We therefore contribute several parallel gain computation techniques to compute accurate gain values and detect conflicts
between moves without restricting possible moves. We present a technique named \emph{attributed gains} to double-check the gain of a node move
in Section~\ref{sec:partition_data_structure}, a concurrent \emph{gain table} to accelerate gain calculations and communicate updates between threads in Section~\ref{sec:gain_table},
and a novel parallel algorithm for \emph{recomputing exact gains} of a sequence of node moves in Section~\ref{sec:fm:gain_recalc}.
These techniques build on our concurrent partition data structure which we describe in the next section in more detail.


\subsection{The Partition Data Structure}\label{sec:partition_data_structure}

Our partition data structure stores and maintains the block assigments $\Partition$, the block weights $c(V_i)$,
the pin count values $\pinsinpart(e,V_i)$, and
connectivity sets $\conset(e)$ for each net $e \in E$ and block $V_i \in \Partition$.

\paragraph{The Move Node Operation}


Algorithm~\ref{pseudocode:node_move} shows the updates to the partition data structure when moving a node $u$ from its source block $V_s$ to a target block $V_t$.
We only perform a node move if it does not violate the balance constraint, which we ensure by adding the weight
of node $u$ to the weight of block $V_t$ via an atomic \textfunc{fetch-and-add} instruction.
If the node move is feasible, we update the block assignment of node $u$ to block $V_t$ and subtract the node weight $u$ from
its previous block $V_s$.
If the move is infeasible, we subtract the weight again and reject the move.

\begin{algorithm2e}[!t]
	\KwIn{A node $u$ that should be moved from its source block $V_s$ to a target block $V_t$}
	\KwOut{Attributed gain value $\attrgain$}
	\caption{The Move Node Operation}\label{pseudocode:node_move}

	$c_t \gets \atomicfunc{fetch-and-add}(c(V_t), c(u))$\; \label{node_move:block_update_1}
	\If (\tcp*[f]{Revert block weight update if balance constraint violated}\label{node_move:balance_check}) { $c_t + c(u) > \balancedconstraint{\max}$ } {
		$c(V_t) \minuseqatomic c(u)$\EndOfStatement \Return{0}
	}
	$\nodeblock{u} \gets V_t$\EndOfStatement $c(V_s) \minuseqatomic c(u)$\EndOfStatement $\attrgain \gets 0$\;
	\For { $e \in I(u)$ } {
  $\FuncSty{lock}(e)$\EndOfStatement $\Phi_s \gets \Decrement\pinsinpart(e,V_s)$\EndOfStatement $\Phi_t \gets \Increment\pinsinpart(e,V_t)$\EndOfStatement$\FuncSty{unlock}(e)$\; \label{node_move:pin_count_update}
	\tcp*[l]{Update connectivity set $\conset(e)$ and attributed gain value $\attrgain$}
	\lIf (\label{node_move:pin_count_decrease}) { $\Phi_s = 0$ } { $\conset(e) \gets \conset(e) \setminus \{ V_s \}$\EndOfStatement $\attrgain \pluseq \omega(e)$ }
	\lIf (\label{node_move:pin_count_increase}) { $\Phi_t = 1$ } { $\conset(e) \gets \conset(e) \cup \{ V_t \}$\EndOfStatement $\attrgain \minuseq \omega(e)$ }
	$\FuncSty{UpdateGainTable}(e, \Phi_s, \Phi_t)$ \tcp*[r]{see Section~\ref{sec:gain_table}}\label{node_move:gain_table_update}
	}
	\Return{\attrgain}
\end{algorithm2e}

\paragraph{Data Layout}

The size of a pin count value is bounded by the size of the largest hyperedge.
To save memory, we use a packed representation with $\lceil \log({\max_{e \in E} |e|}) \rceil$ bits per entry
for the $\pinsinpart(e,V_i)$ values. Furthermore, we use a bitset of size $k$ to store the connectivity set
$\conset(e)$ of each hyperedge $e \in E$. We iterate over the connectivity set $\conset(e)$ by taking a snapshot
of its bitset and then use \emph{count-leading-zeroes} instructions. 
We compute the connectivity $\con(e) = |\conset(e)|$ of a hyperedge $e$ using \emph{pop-count} instructions
(counts the number of $1$-bits in a machine word).
To add or remove a block from the connectivity set, we flip the corresponding bit using an atomic \textfunc{xor} operation.
The move node operation can be made lock-free by updating $\pinsinpart(e, V_i)$ with atomic \textfunc{fetch-and-add} instructions, but this requires one machine word per value.
We therefore use a spin-lock for each net $e$ due to the packed representation.
%

\paragraph{Attributed Gains}
As the gain value of a node move can change between its initial calculation and actual execution due to concurrent node moves in its neighborhood,
we additionally compute an \emph{attributed gain} value for each move based on the atomic updates
of the pin count values $\pinsinpart(e,V_s)$ and $\pinsinpart(e,V_t)$ in Line~\ref{node_move:pin_count_update}
of Algorithm~\ref{pseudocode:node_move}.
We attribute a connectivity decrease by $\omega(e)$ to the move that reduces $\pinsinpart(e, V_s)$ to zero (see Line~\ref{node_move:pin_count_decrease})
and an increase by $\omega(e)$ for increasing $\pinsinpart(e,V_t)$ to one (see Line~\ref{node_move:pin_count_increase}).

Since we do not lock all incident nets $e \in \incnets(u)$ before moving a node $u$,
there is no guarantee on the order in which concurrent moves perform the pin count updates.
Hence, this scheme may distribute the connectivity reductions to different threads,
but the sum of the attributed gains of all node moves equals the overall connectivity reduction~\cite{HEUER-DIS}.

\paragraph{Attributed Gains for Label Propagation Refinement}

The most widely used refinement technique in parallel partitioning
algorithms is label propagation~\cite{JOSTLE,PARMETIS,PARHIP,MT-KAHIP,KAMINPAR,PARKWAY-2,BIPART}.
The algorithm works in rounds. In each round, it iterates over all nodes in parallel, and whenever it
visits a node $u$, it moves it to the block $V_t$ maximizing its move gain $\gain{u}{t}$ (respecting the balance constraint).
The algorithm only performs moves with positive gain and therefore cannot escape from local optima.
However, we use it in our partitioning algorithm to find all \emph{simple} node moves such that our more advanced
refinement techniques can focus on finding non-trivial improvements (for more technical details on its implementation,
see Ref.~\cite[p.~68--69]{HEUER-DIS}).

Since the label propagation algorithm performs only positive gain moves, we immediately revert a node move
if it has negative attributed gain.
Note that reverting such a node move
does not guarantee to improve the connectivity metric again as other concurrent node moves may have
changed the pin count values of the corresponding nets in the meantime. However, reverting them directly after detection
decreases the likelihood of such conflicts.
Furthermore, we use attributed gains to track the value of the connectivity metric instead of recomputing it after each round.

\subsection{The Gain Table}\label{sec:gain_table}

For our FM algorithm, we use a gain table which stores and maintains the gain values for all possible moves.
This enables repeatedly looking up gains in $O(1)$ time and is a globalized way of updating the gains of nodes owned by other threads.
Gain tables are not a new idea \cite{FromPQs, KAHYPAR-K} but have gone \say{out of fashion}
due to their memory requirements \cite{KAFFPA, MT-KAHIP}.
To the best of our knowledge, our introduction of \emph{parallel} gain tables is novel.

We use atomic \textfunc{fetch-and-add} instructions to update the gains as soon as nodes are moved.
Updates on some nodes become visible while the overall update procedure is still in flight.
Therefore, updates \emph{trickle in} over time, and some outdated or inconsistent values may be read by other threads.
Still, with concurrent node moves this is the most accurate we can be.

Recall that the gain $\gain{u}{t}$ of moving a node $u$ to a target block $V_t$ can be expressed
as follows:
\[\gain{u}{t} := \omega(\{e \in \incnets(u) \mid \pinsinpart(e, \nodeblock{u}) = 1 \}) - \omega(\{e \in \incnets(u) \mid \pinsinpart(e, V_t) = 0 \}). \]
The first term $b(u) \coloneq \omega(\{e \in \incnets(u) \mid \pinsinpart(e, \nodeblock{u}) = 1 \})$ is the \emph{benefit} of moving $u$ out of its block.
Conversely, the term $p(u, V_t) \coloneq \omega(\{e \in \incnets(u) \mid \pinsinpart(e, V_t) = 0 \})$ is the \emph{penalty} for moving $u$ into $V_t$.

\paragraph{Update Rules}

Instead of storing $\gain{u}{i}$, we store $b(u)$ and $p(u, V_i)$ separately for each node $u$, so that changes to $b(u)$ only require one update,
instead of updates to $k$ gain values.
This approach uses $(k+1)n$ memory words in total.
For each net $e \in I(u)$, we update $b(u)$ and $p(u,V_i)$ using atomic \textfunc{fetch-and-add} instructions as follows.

\begin{alignat}{4}
  \setcounter{equation}{0}
  & \text{\textbf{If} } \pinsinpart(e, V_s) = 0 \text{ \textbf{then} } \forall v \in e           & \text{ \textbf{do} } & p(v, V_s) &~\pluseqatomic~& \omega(e) \\
  & \text{\textbf{If} } \pinsinpart(e, V_s) = 1 \text{ \textbf{then} } \forall v \in e \cap V_s  & \text{ \textbf{do} } & b(v)      &~\pluseqatomic~& \omega(e) \\
  & \text{\textbf{If} } \pinsinpart(e, V_t) = 1 \text{ \textbf{then} } \forall v \in e           & \text{ \textbf{do} } & p(v, V_t) &~\minuseqatomic~& \omega(e) \\
  & \text{\textbf{If} } \pinsinpart(e, V_t) = 2 \text{ \textbf{then} } \forall v \in e \cap V_t  & \text{ \textbf{do} } & b(v)      &~\minuseqatomic~& \omega(e)
\end{alignat}

The update conditions implement the \textfunc{UpdateGainTable} procedure from Line~\ref{node_move:gain_table_update} in Algorithm~\ref{pseudocode:node_move}.

\paragraph{Benefit Pecularities}
There is a race condition on $\nodeblock{v}$ in the check $\nodeblock{v} = V_s$ (case 2) or $\nodeblock{v} = V_t$ (case 4).
When $\nodeblock{v}$ changes, we may perform a benefit update on $v$ that was also intended for a different pin of $e$ in the new $\nodeblock{v}$.
The penalty values are not affected since they are independent of the pin's current block.
Our FM algorithm is organized in rounds in which each node can be moved at most once.
Therefore, once $u$ gets moved, we do not read $b(u)$ for the rest of the round.
Due to the race condition it may still be updated, which is why we recalculate $b(u)$ \textit{after} the round is finished instead of recalculating $b(u)$ for the new block \emph{immediately} after the move.
We note that it is possible to correctly update benefits by using $k$ benefit values per node~\cite{GOTT-DIS}.

\paragraph{Correctness and Complexity of Gain Updates}


In the following, we prove that once all updates for a given set of moves are completed and no further moves are performed, the gain values are correct.

\begin{lemma}
 After performing all gain updates associated with a set of moves $M$ in parallel, each unmoved node $v \in V \setminus M$ has correct $b(v)$, and each $v \in V$ has correct $p(v, V_i)$ terms.
\end{lemma}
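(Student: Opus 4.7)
The plan is to express the final $b(v)$ and $p(v, V_i)$ as the initial values plus telescoped per-move per-edge contribution changes, and verify that the four update cases precisely realize these changes via atomic \textfunc{fetch-and-add} operations. First I would fix a hyperedge $e$ and a single move $m \in M$ relocating a pin $u$ from $V_s$ to $V_t$, writing $\pinsinpart$ for the pin counts captured atomically under the per-net lock in Line~\ref{node_move:pin_count_update} of Algorithm~\ref{pseudocode:node_move}. The contribution of $e$ to $p(v, V_i)$ for a pin $v \in e$ changes only when $V_i \in \{V_s, V_t\}$: it goes from $0$ to $\omega(e)$ exactly when $V_i = V_s$ and $\pinsinpart(e, V_s) = 0$ after the move, and from $\omega(e)$ to $0$ exactly when $V_i = V_t$ and $\pinsinpart(e, V_t) = 1$. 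These are the triggers of cases~1 and~3, both of which apply the increment to every pin of $e$, matching the pin-independent structure of the penalty term. Analogously, the contribution of $e$ to $b(v)$ changes only for pins with $\nodeblock{v} \in \{V_s, V_t\}$: from $0$ to $\omega(e)$ when $\nodeblock{v} = V_s$ and $\pinsinpart(e, V_s) = 1$, and from $\omega(e)$ to $0$ when $\nodeblock{v} = V_t$ and $\pinsinpart(e, V_t) = 2$, matching cases~2 and~4 which restrict their updates to pins in $V_s$ and $V_t$ respectively.

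Next I would extend the per-move analysis to the whole set $M$ by additivity. Because each move's pin count update on $e$ is serialized by the spin-lock and the subsequent gain update uses the values captured inside that lock, summing the triggered updates over all moves in $M$ that touch $e$ yields the telescoped total change of $e$'s contribution to each $b$ and $p$ value. Atomic \textfunc{fetch-and-add} ensures no increment is lost, so once every update has completed, the final values equal the initial (correct) values plus the total change. Summing over all incident edges then gives the claimed final values of $p(v, V_i)$ for arbitrary $v$ and of $b(v)$ for $v \notin M$.

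The main obstacle is the race on $\nodeblock{v}$ inside the block filter of cases~2 and~4: since we never lock $v$ while another pin's move updates $b(v)$, the test $v \in e \cap V_s$ may read a stale block assignment. I would resolve this by exploiting the hypothesis $v \in V \setminus M$, which makes $\nodeblock{v}$ constant throughout the parallel phase, so the check is unambiguous and fires exactly when the contribution of $e$ to $b(v)$ actually changes. For moved nodes $v \in M$ the filter is genuinely racy, which is precisely why the lemma confines its benefit claim to unmoved nodes; the penalty cases, by contrast, apply to every pin of $e$ regardless of block, so $p(v, V_i)$ remains correct even for $v \in M$.
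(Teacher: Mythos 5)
Your proof is correct and follows essentially the same route as the paper's: both reduce to a linearized order of the per-net updates via the atomic/locked pin-count modifications and then argue that the updates triggered by increments and decrements of $\pinsinpart(e,V_i)$ compose so that the final $b$ and $p$ values depend only on the final pin counts (your telescoping of per-edge contribution changes is the paper's ``case 1 and 3 are complementary, as are case 2 and 4''). The only difference is presentational: you verify the per-case correctness directly and handle the $\nodeblock{v}$ race for unmoved nodes inside the proof, whereas the paper cites Sanchis for sequential correctness and defers the benefit race to the surrounding discussion.
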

\begin{proof}
First, we note that the updates are correct in the sequential setting~\cite{sanchis1989mwn}.
Due to the atomic consistency of pin-count and gain updates, it suffices to prove correctness for arbitrary linearized (sequential) orders of updates.
The remaining difficulty is that different orders may yield different intermediate values.
However, due to commutativity we arrive at the same final $\pinsinpart(e,V_i)$ values.
Thus, it suffices to argue that gain updates triggered by $\pinsinpart(e, V_i) \pluseq 1$ cancel out those triggered by $\pinsinpart(e, V_i) \minuseq 1$.
This statement holds, as case 1 and 3 are complimentary, as well as case 2 and 4.
Therefore, the final $p(v, V_i)$ and $b(v)$ values only depend on the final $\pinsinpart(e, V_i)$ values.
\end{proof}

\begin{lemma}[Sanchis~\cite{sanchis1989mwn}]\label{lemma:gain-updates}
	The work of gain updates for moving all nodes once is $\Oh{\sum_{e \in E}|e| \cdot \min(k, |e|)} = \Oh{kp}$.
\end{lemma}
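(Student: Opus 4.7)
The plan is to split the total work by net and bound, for each net $e$, the contribution from the four update rules (1)--(4) separately. Examining the rules, a single move of a pin $u \in e$ from $V_s$ to $V_t$ triggers case 1 or case 3 at cost $\Theta(|e|)$ (every pin of $e$ gets a penalty update), whereas cases 2 and 4 only touch the one remaining (resp.~newly created second) pin of $e$ in the affected block, costing $O(1)$. The task therefore reduces to counting how often each of the four conditions actually fires for net $e$ over the course of the round.

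The key structural observation is that cases 1 and 3 fire at most once per (net, block) pair. For case 1, note that after $\pinsinpart(e, V_s)$ first drops to $0$, any later return to a positive value must come from a pin that moves \emph{into} $V_s$; since the FM round moves each node at most once, such a pin can never leave $V_s$ again. Hence $\pinsinpart(e, V_s)$ is monotone nondecreasing thereafter and can never again transition from $1$ down to $0$. A symmetric argument handles case 3: once $\pinsinpart(e, V_t)$ first rises from $0$ to $1$, the pin that caused the rise has already used its move and cannot leave $V_t$, so the count stays $\ge 1$. Consequently each of cases 1 and 3 fires at most $k$ times per net (one per block) and also at most $|e|$ times (one per pin), giving a bound of $\min(k, |e|)$ firings per net for each case.

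For cases 2 and 4 it suffices to use the looser bound that their total number of firings per net is $O(|e|)$, since each firing is caused by one of the at most $2|e|$ pin-count changes attributable to net $e$ (each pin contributes at most one decrement and one increment); multiplied by $O(1)$ work per firing, their contribution is $O(|e|)$ per net. Combining all four cases gives $O(|e| \cdot \min(k, |e|))$ work per net, and summing over nets yields $O(\sum_{e \in E} |e| \cdot \min(k, |e|))$, which collapses to $O(kp)$ via $\min(k, |e|) \le k$ together with $\sum_{e \in E} |e| = p$.

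The non-trivial step is the monotonicity argument for cases 1 and 3. A direct counting of pin-count transitions only bounds their firings by $O(|e|)$, which yields $O(|e|^2)$ per net and misses the improvement in the regime $k < |e|$. The extra saving factor of $\nicefrac{k}{|e|}$ comes precisely from the single-move-per-round invariant, which prevents a block from being emptied (resp.~newly populated) with respect to a given net more than once.
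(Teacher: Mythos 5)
Your central argument --- that the once-per-round move invariant makes $\pinsinpart(e,V_i)$ effectively monotone after the critical transition, so that cases 1 and 3 each fire at most once per (net, block) pair and hence at most $\min(k,|e|)$ times per net, each firing costing $\Theta(|e|)$ --- is correct and is exactly the idea the paper attributes to Sanchis and to the restriction that each node is moved at most once. The final summation and the collapse to $\Oh{kp}$ are also fine.

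Where you diverge from the paper is in cases 2 and 4, and this is the one step that does not go through under the paper's cost model. You charge $O(1)$ per firing on the grounds that only the single pin in $e \cap V_s$ (resp.\ the at most two pins in $e \cap V_t$) receive a benefit update. That counts the gain-table writes but not the work to \emph{locate} those pins: with the paper's pin-list representation there is no per-block index, so executing ``$\forall v \in e\cap V_s$'' requires scanning all of $e$ and testing $\nodeblock{v}=V_s$, i.e., $\Theta(|e|)$ work per firing, just as for cases 1 and 3 (the paper states uniformly that each update costs $\Oh{|e|}$). Combined with your bound of $O(|e|)$ total firings per net for cases 2 and 4, this yields $O(|e|^2)$ per net, which exceeds $O(|e|\cdot\min(k,|e|))$ precisely in the regime $k<|e|$ that the $\min$ is there to capture. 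The repair is that the same stuck-pin argument you already use for cases 1 and 3 also bounds cases 2 and 4 to $O(1)$ firings per (net, block): every re-firing of case 4 (the count rising to exactly $2$) requires a fresh pin to have immigrated into $V_t$, where it is stuck for the rest of the round, so at the $i$-th firing the count is at least $i$ and therefore $i \le 2$; symmetrically, case 2 fires at most twice per block. This is the paper's actual claim --- each of the \emph{four} update cases is triggered a constant number of times per hyperedge and block --- applied uniformly rather than only to the first and third.
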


The core to the argument is that each of the update cases is only triggered a constant number of times per hyperedge and block~\cite{sanchis1989mwn,FM}, and costs $\mathcal{O}(|e|)$ work per update.
This hinges on moving each node at most once.
Note that due to the $\min(k, |e|)$ term, this bound matches the $\mathcal{O}(m)$ bound on plain graphs.
On real-word hypergraphs, we observed work much closer to $\Oh{p}$ since most nets have small size or few pins per block. 


\subsection{The Parallel Gain Recalculation Algorithm}\label{sec:fm:gain_recalc}

We now propose a parallel algorithm to recompute exact gain values of a sequence of node moves $M = \langle m_1, \dots, m_l \rangle$
if they are supposed to be performed in this order. Each move $m_i \in M$ is of the form $m_i = (u,V_s,V_t)$, which means
that node $u$ is moved from block $V_s$ to $V_t$.
Again, we assume that each node is moved at most once.
Recall that a move of a node $u$ from block $V_s$ to $V_t$ decreases the connectivity of a hyperedge $e$,
if $\pinsinpart(e,V_s)$ decreases to zero. Conversely, it increases the connectivity if $\pinsinpart(e,V_t)$
increases to one. The idea of the following algorithm is to iterate over the hyperedges in parallel, and
identify the node moves in $M$ that increase or decrease the connectivity of a hyperedge using Algorithm~\ref{pseudocode:recalculate_gain}.

Consider a hyperedge $e$ and a block $V_i \in \Partition$. The first observation is that if we move a pin $v \in e$ to $V_i$,
then $\pinsinpart(e,V_i)$ cannot decrease to zero anymore since each node is moved at most once. In order to decrease $\pinsinpart(e,V_i)$
to zero, we have to move all pins $u \in e \cap V_i$ out of block $V_i$ before we move the first pin $v \in e \setminus V_i$ to block $V_i$.
In this case, the last pin $u \in e$ moved out of block $V_i$ decreases the connectivity of $e$ and the first pin $v \in e$
moved to block $V_i$ increases its connectivity again. Thus, we can decide whether or not a move increases or decreases the
connectivity of a hyperedge by simply comparing the indices of the node moves in $M$, which were last moved out and first moved
to a particular block. Additionally, we need to know if the move sequence $M$ moves all pins out of block $V_i$.
To do so, we count the number of non-moved pins $v \in e$ in each block. If the number of non-moved pins is zero for a block $V_i$, then
either $\pinsinpart(e,V_i)$ was zero before, or the move sequence $M$ moved all nodes out of block $V_i$.

Algorithm~\ref{pseudocode:recalculate_gain} shows the pseudocode that identifies the node moves in $M$ that increase or decrease the connectivity
of a hyperedge $e$. The algorithm uses two loops, both iterating over the pins of hyperedge $e$.
The first loop computes the indices of the node moves
that first moved to and last moved out of each block $V_i \in \Partition$ (see Line~\ref{recalculate_gain:first_in_last_out}), in addition to the number of pins in $e$ that were not moved (see Line~\ref{recalculate_gain:non_moved_pins}).

The second loop then decides for each moved pin $u \in e$ whether or not it increases or decreases the connectivity of hyperedge $e$ by evaluating
the conditions shown in Lines~\ref{recalculate_gain:decrease_connectivity} and~\ref{recalculate_gain:increase_connectivity}.
Let $m_i := (u,V_s,V_t)$ be the corresponding node move of pin $u \in e$ in $M$.
If $M$ moves all nodes out of block $V_s$ ($\matharray{non\_moved}[V_s] = 0$) and $u$ is the last pin moved out
of block $V_s$ ($\matharray{last\_out}[V_s] = i$), while the first move that moves a pin into block $V_s$ happens strictly after
$m_i$ ($i < \matharray{first\_in}[V_s]$), then $m_i$ reduces the connectivity metric by $\omega(e)$.
Conversely, if $M$ moves all nodes out of block $V_t$ ($\matharray{non\_moved}[V_t] = 0$) and $u$ is the first pin
moved into block $V_t$ ($\matharray{first\_in}[V_t] = i$), while the last move that moves a pin out of block $V_t$ happens strictly
before $m_i$ ($i > \matharray{last\_out}[V_t]$), then $m_i$ increases the connectivity metric by $\omega(e)$.
Since we run the algorithm for each hyperedge in parallel, several threads can modify the gain value $g_i$ of a node move $m_i$ simultaneously.
We therefore use atomic \textfunc{fetch-and-add} instructions (see Line~\ref{recalculate_gain:decrease_connectivity} and~\ref{recalculate_gain:increase_connectivity}).

To further reduce the complexity of the algorithm, we only process hyperedges containing moved nodes. To do so, we iterate
over the node moves in $M$ in parallel and run Algorithm~\ref{pseudocode:recalculate_gain} only for incident edges
of moved nodes. We mark already processed hyperedges in a shared bitset using atomic \textfunc{test-and-set} instructions.

\begin{algorithm2e}[!t]
    \KwIn{Hyperedge $e$, a sequence of node moves $M = \langle m_1, \dots, m_l \rangle$ and
      a shared gain vector $\mathcal{G} = \langle g_1, \dots, g_l \rangle$ representing the recalculated gain values }
    \caption{Parallel Gain Recalculation}\label{pseudocode:recalculate_gain}

    $\matharray{first\_in} \gets [\infty,\ldots,\infty]\EndOfStatement \matharray{last\_out} \gets [-\infty,\ldots,-\infty]$ \tcp*[r]{Arrays of size $k$}
    $\matharray{non\_moved} \gets [0,\ldots,0]$ \tcp*[r]{Array of size $k$}
    \For { $u \in e$ } {
      \If (\tcp*[f]{moved nodes are marked in a bitset}) { $u$ was moved } {
        $m_i := (u,V_s,V_t) \gets$ find corresponding move in $M$\;
        $\matharray{last\_out}[V_s] \gets \max(i, \matharray{last\_out}[V_s])$\EndOfStatement $\matharray{first\_in}[V_t] \gets \min(i, \matharray{first\_in}[V_t])$\;\label{recalculate_gain:first_in_last_out}
      } \lElse {
        $\Increment\matharray{non\_moved}[\nodeblock{u}]$\label{recalculate_gain:non_moved_pins}
      }
    }

    \For { $u \in e$ } {
      \If { $u$ was moved } {
        $m_i := (u,V_s,V_t) \gets$ find corresponding move in $M$\;
        \lIf (\label{recalculate_gain:decrease_connectivity}) { $\matharray{last\_out}[V_s] = i~\land~i < \matharray{first\_in}[V_s]~\land~\matharray{non\_moved}[V_s] = 0$ } {
          $g_i \pluseqatomic \omega(e)$
        }
        \lIf (\label{recalculate_gain:increase_connectivity}) { $\matharray{first\_in}[V_t] = i~\land~i > \matharray{last\_out}[V_t]~\land~\matharray{non\_moved}[V_t] = 0$ } {
          $g_i \minuseqatomic \omega(e)$
        }
      }
    }
\end{algorithm2e}

\section{The Fiduccia-Mattheyses Algorithm}\label{sec:fm}

The Fiduccia-Mattheyses (FM) algorithm~\cite{FM} is the most widely used local search algorithm in sequential partitioning algorithms.
Most of the existing variants insert all possible moves or only the highest gain move for each boundary node into a PQ (boundary FM)
and then perform the following two steps:
(i) repeatedly perform the highest gain move subject to the balance constraint, followed by (ii) reverting moves back to the prefix
with the highest cumulative gain in the sequence of performed moves.
The revert is necessary, since moves with negative gains are allowed, so the algorithm is able to escape from local minima.
Unfortunately, calculating the same move sequence as FM is P-hard~\cite{PHARD}, i.e., it is unlikely that a parallel algorithm with poly-log depth exists.

Sanders and Schulz~\cite{KAFFPA} proposed a relaxed version that inserts only the highest gain move for a single seed node into a PQ and then
gradually expands around the node by claiming neighbors of moved node (localized FM). The algorithm not only produces better solutions
than boundary FM~\cite{DBLP:journals/tcad/HagenHK97}, it is also highly amenable to parallelization as multiple FM searches can
run in parallel, each starting from a different seed node. In the following, we present our parallel implementation of the localized FM algorithm, and
discuss its main differences to an existing parallelization~\cite{MT-KAHIP}.


\paragraph{The Parallel $k$-Way FM Algorithm}
Algorithm~\ref{pseudocode:fm_algo} shows the pseudocode of our parallel FM algorithm.
The algorithm proceeds in rounds, and each round starts with inserting all boundary nodes into a globally
shared task queue $Q$.
The threads then poll a fixed number of nodes ($= 25$) from $Q$ that they use as seed nodes for the
localized FM searches, which expand to neighbors of moved nodes.

The searches are non-overlapping, i.e., threads acquire exclusive ownership of nodes,
while hyperedges can touch multiple searches.
Node moves performed by the different searches are not visible to other threads, as they are performed locally
using thread-local hash tables.
However, once a thread finds an improvement, it immediately applies it to the global partition.
The local moves are atomically appended to a global move sequence (using one atomic \textfunc{fetch-and-add} for all local moves).
We repeatedly start localized FM searches until the task queue is empty.
Note that we initialize the searches with multiple seed nodes instead of a single node as this substantially
accelerates the algorithm in practice without sacrifices in solution quality.

Once the task queue is empty, we proceed to the second phase, where we recalculate the gains of the global move sequence
(see Section~\ref{sec:fm:gain_recalc}) and then use a parallel prefix
sum and reduce operation on the recomputed gain values to identify and revert to the best seen solution.
We perform multiple rounds until a maximum number is reached or the connectivity
metric is not improved.

\paragraph{Localized $k$-Way FM Search}


The localized FM search uses a single PQ storing the move with the highest gain for each inserted node.
We initialize the PQ with several seed nodes
and use the gain table to compute the initial best move for each node (see Line~\ref{fm_algo:init}).
Then, we repeatedly select the move with the highest gain 
and apply it to a thread-local partition $\DeltaPartition$. 
Changes on $\DeltaPartition$ are not visible to other threads for now. However, we apply the move sequence
to the global partition $\Partition$ as soon as we find an improvement (see Line~\ref{fm_algo:apply_global}), then triggering gain updates in the global gain table.

When we move a node $u$ locally, we collect the nets $e \in \incnets(u)$ affected
by gain updates. We use them to update the gain values of nodes in the PQ --
combining global gain table and $\DeltaPartition$ data, thus gradually infusing updates from other threads into the search
-- and expand the search to neighbors of moved nodes. 
A localized search terminates when the PQ becomes empty or the adaptive stopping rule of Osipov and Sanders~\cite{kaspar,KAHYPAR-K}
is triggered. The stopping rule assumes that the observed gain values follow a normal distribution and terminates a search
when it becomes unlikely to find further improvements.
We release the ownership of non-moved nodes at the end such that other searches
can acquire them again. We do not release the ownership of moved nodes to
ensure that each node is moved at most once during an FM pass.

\begin{algorithm2e}[!t]
    \caption{Parallel $k$-Way FM Algorithm}\label{pseudocode:fm_algo}

    \Function (\label{fm_algo:parallel_fm}) {$\FuncSty{FMRefinement}($\emph{Hypergraph} $H,$ \emph{$k$-way partition} $\Partition)$} {
      \While{improvement found and maximum number of rounds not reached}{
        $Q \gets$ initialize task queue with all boundary nodes\;
        \ParallelWhile { $Q$ not empty } {
          $V_{\text{\footnotesize seed}} \gets$ poll $25$ seed nodes from $Q$ \;
          $\FuncSty{LocalizedFMRefinement}(H,\Partition,V_{\text{\footnotesize seed}})$\;
        }
        recompute gains of global move sequence and revert to best prefix \tcp*[r]{see Section~\ref{sec:fm:gain_recalc}} \label{fm_algo:parallel_revert}
      }
    }
    \;
    \Function {$\FuncSty{LocalizedFMRefinement}(H,\Partition,V_{\text{\footnotesize seed}})$} {
      \For (\tcp*[f]{Initialize PQ with seed nodes}\label{fm_algo:init}) { $u \in V_{\text{\footnotesize seed}}$ } {
        $(g_t,V_t) \gets \FuncSty{ComputeMaxGainMove}(u)$\EndOfStatement
        PQ.$\FuncSty{Insert}(u, g_t, V_t)$\;
      }
      $\Delta \gets 0\EndOfStatement M \gets \emptyset$\;
      \While { PQ not empty \emph{and} search should continue }{
        $(u,g_t,V_t) \gets$ PQ.$\FuncSty{PopMaxGainMove}()$\;\label{fm_algo:extract_2}
        move $u$ to $V_t$ in thread-local partition $\DeltaPartition$ with gain table update\; \label{fm_algo:apply_thread_local}
        $\Delta \gets \Delta + g_t\EndOfStatement M \gets M \cup \{(u,V_t)\}$\;
        \If { $\Delta > 0$ \emph{or} ($\Delta = 0$ \emph{and} move improved balance) } {
          apply move sequence $M$ to global partition $\Partition$\;\label{fm_algo:apply_global}
          $\Delta \gets 0\EndOfStatement M \gets \emptyset\EndOfStatement \DeltaPartition \gets \Partition$\; \label{fm_algo:reset}
        }

        \For (\label{fm_algo:update_1}) { all nets $e \in I(u)$ involved in a gain update } {
          \For { $v \in e$ } {
            \If { $v$ is not marked } {
              \lIf (\label{fm_algo:update_2}) { PQ.$\FuncSty{Contains}(v)$ } { update gain of $v$ in PQ }
              \ElseIf (\label{fm_algo:acquire}) { try to acquire node $v$ } {
                $(g_t,V_t) \gets \FuncSty{ComputeMaxGainMove}(v)$\EndOfStatement PQ.$\FuncSty{Insert}(v, g_t, V_t)$\; \label{fm_algo:update}
              }
              mark $v$\;
            }
          }
        }
        unmark all nodes\;
      }
    }

\end{algorithm2e}

We explicitly allow moves with negative gains, which will worsen the solution quality intermediately.
At the end of each localized search, we thus revert back to the best seen solution.
If we directly applied moves to the global partition, other searches could base their decisions on states that will later be reverted.
Therefore, we apply node moves to a thread-local partition $\DeltaPartition$ first and only perform them on the global partition if they lead to an improvement, i.e., will not be reverted for now.

The thread-local partition $\DeltaPartition$ stores changes relative to the global partition in a set of hash tables.
For example, we compute the weight of a block $V_i$ by calculating $c(V_i) + \Delta c(V_i)$ where $c(V_i)$
is the weight of block $V_i$ stored in the global partition data structure and $\Delta c(V_i)$ is the weight
of all nodes that locally moved to block $V_i$ minus the weight of nodes that moved out of block $V_i$. We maintain
the block ID, pin count values, as well as benefit and penalty terms of the gain table analogously.

Applying a move sequence to the global partition makes it immediately visible to the searches on other threads.
Since $\DeltaPartition$ stores local changes relative to the global partition, the block weights and pin count values
are still correct. However, some gain values may be incorrect since the gain table updates on the global partition
do not consider moves performed locally.
This is only a small issue since thread-local deltas are cleared after applying the moves to the global partition.
In practice, the scheme drastically reduces conflicts\footnote{We have found that the recomputed gain values of
the global move sequence match the observed gain values during the localized FM searches in most cases.}.
Another reason for applying moves as soon as possible is to keep the memory footprint of the hash tables small.
The overall peak memory incurred by thread-local partition data is small, because the memory is proportional to the number of moves,
and long-running searches must find improvements to keep going.

\paragraph{Differences to \Partitioner{Mt-KaHIP}}
The FM implementation in \Partitioner{Mt-KaHIP}~\cite{MT-KAHIP} performs node moves only locally, which are therefore \emph{not
visible to other threads}. At the end of an FM pass, the move sequences found by the different searches are concatenated to a global
move sequence, for which \emph{gains are recomputed sequentially}. We improved the algorithm by making improvements immediately
visible to other threads using the thread-local partition and gain table data structure, leading to more accurate gain values.
Moreover, we removed the last sequential part of the algorithm with our parallel gain recomputation technique.

\section{Flow-Based Refinement}\label{sec:flows}
A major shortcoming of move-based local search algorithms is that they greedily move nodes to other
blocks based on a gain value considering only the block assignment of adjacent nodes.
Thus, the decision to apply a move depends only on \emph{local} information, which may not be sufficient to find some non-trivial improvements~\cite{Saab95}.
Maximum flows overcome this limitation by deriving a minimum cut
seperating two nodes~\cite{MINCUT-THEOREM} and therefore have a more \emph{global} view on the partitioning problem.
Although it seems natural to use them as local search strategy in partitioning algorithms, maximum flows were long perceived as computationally expensive and it was unclear how to derive balanced partitions~\cite{KL,yang-wong-fbb}.
This changed over the last two decades as flow-based refinement techniques were successfully implemented in the highest-quality
sequential graph and hypergraph partitioning algorithms~\cite{KAFFPA,KAHYPAR-MF,KAHYPAR-HFC,KAHYPAR-FLOW,KAHYPAR-JOURNAL}.
Today it is considered to be the most powerful improvement heuristics for (hyper)graph partitioning.
However, since they come at the cost of substantially higher running times, they can be impractical for partitioning very large hypergraphs.

\paragraph{Algorithm Overview}

In this section, we present the first parallel formulation of the sequential flow-based refinement approach
used in \Partitioner{KaHyPar}~\cite{KAHYPAR-MF,KAHYPAR-HFC}.
The high-level pseudocode of the algorithm is outlined in Algorithm~\ref{pseudocode:flow_refinement}.
Flow-based refinement works on bipartitions and can be scheduled on different block pairs to improve $k$-way partitions~\cite{KAFFPA,KAHYPAR-HFC,KAHYPAR-MF}.
We therefore start with a parallel scheduling scheme of adjacent block pairs based on the quotient graph in
Section~\ref{sec:scheduling} (see Line~\ref{pseudocode:build_quotient_graph} and~\ref{pseudocode:active_block_scheduling}).
In Section~\ref{sec:network_construction}, we describe the flow network construction algorithm that extracts a subhypergraph induced by a region $B \subseteq V$ around the boundary nodes
of two adjacent blocks, which then yields a flow network (see Line~\ref{pseudocode:construct_region} and~\ref{pseudocode:construct_flow_network}).
On each network, we run the \Partitioner{FlowCutter} algorithm~\cite{FLOW-CUTTER, yang-wong-fbb} to derive a balanced minimum cut using incremental maximum flow computations.
\Partitioner{FlowCutter} and its parallelization are discussed in Sections~\ref{sec:flowbased_refinement} and~\ref{sec:parallel_flow_algo}.
We then convert the minimum cut into a set of moves $M$ and an expected connectivity reduction $\expgain$.
If \Partitioner{FlowCutter} claims an improvement, i.e., if $\expgain \ge 0$, we apply the moves to the global partition and compute the exact reduction $\attrgain$,
based on which we either mark the blocks for further refinement, or revert the moves (see Line~\ref{pseudocode:active_block_scheduling_3} and~\ref{pseudocode:revert_moves}).

\begin{algorithm2e}[!t]
    \KwIn{Hypergraph $H = (V,E,c,\omega)$ and $k$-way partition $\Partition$ of $H$}
    \caption{Parallel Flow-Based Refinement}\label{pseudocode:flow_refinement}

    $\quotientgraph \gets \FuncSty {BuildQuotientGraph}(H, \Partition)$ \tcp*[r]{see Section~\ref{sec:scheduling}}  \label{pseudocode:build_quotient_graph}
    \ParallelWhile (\tcp*[f]{see Section~\ref{sec:scheduling}}) {$\exists$ active $(V_i,V_j) \in \quotientgraph$} { \label{pseudocode:active_block_scheduling}
      $B := \gets \FuncSty {ConstructRegion}(H,V_i,V_j)$ \tcp*[r]{see Section~\ref{sec:network_construction}} \label{pseudocode:construct_region}
      $(\flowhypergraph,\source,\sink) \gets \FuncSty {ConstructFlowNetwork}(H,B)$ \tcp*[r]{see Section~\ref{sec:network_construction}} \label{pseudocode:construct_flow_network}
      $(M, \expgain) \gets \FuncSty {FlowCutterRefinement}(\flowhypergraph,\source,\sink)$ \tcp*[r]{see Section~\ref{sec:flowbased_refinement} --~\ref{sec:parallel_flow_algo}} \label{pseudocode:flow_cutter_call}
      \If (\tcp*[f]{potential improvement}) { $\expgain \ge 0$ } {
        $\attrgain \gets \FuncSty {ApplyMoves}(H,\Partition,M)$ \tcp*[r]{see Section~\ref{sec:scheduling}} \label{pseudocode:apply_moves}
        \lIf (\tcp*[f]{found improvement}) { $\attrgain > 0$ } {
          mark $V_i$ and $V_j$ as active \label{pseudocode:active_block_scheduling_3}
        }
        \lElseIf (\tcp*[f]{no improvement}) {$\attrgain < 0$} { $\FuncSty{RevertMoves}(H, \Partition, M)$ \label{pseudocode:revert_moves} }
      }
    }
\end{algorithm2e}

\paragraph{Maximum Flows}
A flow network $\flownetwork = (\flownodeset,\flowedgeset,\capacity)$
is a directed graph with a dedicated source node
$\source \in \flownodeset$ and sink node $\sink \in \flownodeset$ in which
each edge $e \in \flowedgeset$ has capacity $\capacity(e) \ge 0$. An $\sourcesink$-flow is a function
$f: \flownodeset \times \flownodeset \rightarrow \mathbb{R}$
that satisfies the \emph{capacity constraint} $\forall u,v \in \flownodeset: f(u,v) \le \capacity(u,v)$,
the \emph{skew symmetry constraint} $\forall u,v \in \flownodeset: f(u,v) = -f(v,u)$ and the
\emph{flow conservation constraint} $\forall u \in \flownodeset \setminus \{\source,\sink\}: \sum_{v \in \flownodeset} f(u,v) = 0$.
The value of a flow $|f| := \sum_{v \in \flownodeset} f(\source,v) = \sum_{v \in \flownodeset} f(v,\sink)$ is defined as the
total amout of flow transferred from $\source$ to $\sink$.
An $\sourcesink$-flow $f$ is a maximum $\sourcesink$-flow if there exists no other $\sourcesink$-flow $f'$ with $|f| < |f'|$.
The \emph{residual capacity} is defined as $r_f(e) = \capacity(e) - f(e)$.
An edge $e$ is \emph{saturated} if $r_f(e) = 0$.
The \emph{residual network} $\flownetwork_f = (\flownodeset, \flowedgeset_f, r_f)$ with
$\flowedgeset_f := \{(u,v) \in \flownodeset \times \flownodeset \mid r_f(u,v) > 0\}$ contains all non-saturated edges.
The max-flow min-cut theorem states that the value $|f|$ of a maximum $\sourcesink$-flow equals the weight
of a minimum cut that separates $\source$ and $\sink$~\cite{MINCUT-THEOREM}.
This is also called a \emph{minimum $\sourcesink$-cut}.
The minimum $\sourcesink$-cut can be derived by exploring the nodes reachable from the source or sink via
residual edges ($r_f(e) > 0$), which is also called the \emph{source-side} or \emph{sink-side cut}.

\subsection{Parallel Active Block Scheduling}\label{sec:scheduling}

Sanders and Schulz~\cite{KAFFPA} propose the active block scheduling strategy to apply
their flow-based refinement algorithm for bipartitions on $k$-way partitions.
Their algorithm proceeds in rounds.
In each round, it schedules all pairs of adjacent blocks where at least one is marked as \emph{active}.
Initially, all blocks are marked as active.
If a search on two blocks finds an improvement, both are marked as active for the next round.

\paragraph{Parallelization}
Our parallel implementation schedules multiple flow computations on adjacent block pairs in parallel.
We do not enforce any constraints on the block pairs processed concurrently, e.g., there can
be multiple threads running on the same block and they can also share some of their nodes.
We use $\min(t,\tau \cdot k)$ threads to process the active block pairs in parallel,
where $t$ is the number of available threads in the system and
the parameter $\tau$ controls the available parallelism in the scheduler.
With higher values of $\tau$, more block pairs are scheduled in parallel, which
can lead to more interferences between searches that operate on overlapping regions.
Threads that are not involved in scheduling can join parallel flow computations.
In a parameter study~\cite[p.~108]{HEUER-DIS}, we found that $\tau = 1$ offers a good
trade-off between conflicting searches and scalability.

Initially, we push all pairs of adjacent blocks into a concurrent FIFO queue $\activequeue$.
The threads then poll from $\activequeue$ and if a search finds an improvement on a block pair $(V_i,V_j)$,
we mark both as active using a seperate bitset for each round. If either $V_i$ or $V_j$ becomes active, we
push all adjacent blocks into $\activequeue$ if they are not contained yet.
Thus, active block pairs of different rounds are stored interleaved in $\activequeue$ and the end of a
round does not induce a synchronization point as in the original algorithm~\cite{KAFFPA}.
A round ends when all of its block pairs have been processed and all prior rounds have ended.
If the relative improvement at the end of a round is less than $0.1\%$, we immediately terminate the algorithm.


\paragraph{Apply Moves}
Since concurrently scheduled flow computations can operate on overlapping regions, there are three
conflict types that can occur when applying a sequence of node moves $M$ to the global partition $\Partition$:
balance constraint violations, $\expgain \neq \attrgain$ (i.e., the expected does not
match the actual connectivity reduction), and nodes in $M$ may already be moved by other searches.

In practice, the running time to apply a sequence of node moves is negligible compared to solving flow problems~\cite[see Figure~5.21 on p.~119]{HEUER-DIS}.
Thus, we can afford to use a lock so that only one thread applies moves at a time to address these conflicts.
First, we remove all nodes from $M$ that are not in their expected block.
Afterwards, we compute the block weights as if all remaining moves were applied.
If the resulting partition is balanced, we perform the moves, during which we aggregate the attributed gains $\attrgain$ of
each move. If $\attrgain < 0$, we revert all moves.


\subsection{Flow Network Construction}\label{sec:network_construction}

\begin{figure}[!t]
  \centering
  \includegraphics[width=\textwidth]{}
  \caption{A hypergraph $\flowhypergraph$ (left) induced by a region $B := B_1 \cup B_2$ and the flow network $\flownetwork$ (right)
           given by the Lawler expansion of $\flowhypergraph$.}
  \label{fig:flow_network}
\end{figure}

To improve the cut of a bipartition $\Partition = \{V_1, V_2\}$, we grow a size-constrained
region $B := B_1 \cup B_2$ with $B_1 \subseteq V_1$ and $B_2 \subseteq V_2$
around the cut hyperedges of $\Partition$ via two breadth-first-searches (BFS)~\cite{KAFFPA}.
The first BFS is initialized with all boundary nodes of block $V_1$ and continues to add nodes to $B_1$ as long as
$c(B_1) \le (1 + \alpha \varepsilon) \lceil \frac{c(V)}{2} \rceil - c(V_2)$, where $\alpha$ is an input parameter.
The second BFS that constructs $B_2$ proceeds analogously.
We then contract all nodes in $V_1 \setminus B$ to the source $\source$ and
$V_2 \setminus B$ to the sink $\sink$~\cite{KAFFPA,REBAHFC}
and obtain a coarser hypergraph $\flowhypergraph = (\flownodeset, \flowedgeset)$.
The flow network $\flownetwork$ is then given by the Lawler expansion of $\flowhypergraph$~\cite{Lawler}, which is illustrated
in Figure~\ref{fig:flow_network}.
For each hyperedge $e \in \flowedgeset$, we add two nodes $\innode$ and $\outnode$ and a \emph{bridging edge}
$(\innode,\outnode)$ with capacity $\capacity(\innode,\outnode) = \omega(e)$ to $\flownetwork$. For each pin $u \in e$,
we add two edges $(u,\innode)$ and $(\outnode,u)$ with infinite capacity to $\flownetwork$.
Note that we do not construct $\flownetwork$ explicitly in our actual implementation, since our maximum flow algorithm runs
on $\flowhypergraph$ by implicitly exploiting the structure of the Lawler expansion.


The parameter $\alpha$ controls the size of the flow network.
For $\alpha = 1$, each flow computation yields a balanced bipartition
with a possibly smaller cut in the original hypergraph, since only nodes of $B$ can move
to the opposite block ($c(B_1) + c(V_2) \le (1 + \varepsilon)\lceil \frac{c(V)}{2} \rceil$ and vice versa for block $B_2$).
Larger values for $\alpha$ lead to larger flow problems with potentially smaller minimum cuts,
but also increase the likelihood of violating the balance constraint.
However, this is not a problem since the flow-based refinement routine guarantees balance through incremental minimum cut computations (see Section~\ref{sec:flowbased_refinement}).
In practice, we use $\alpha = 16$ (also used in \Partitioner{KaHyPar}~\cite{KAHYPAR-MF,KAHYPAR-HFC}).
We additionally restrict the distance of each
node $v \in B$ to the cut hyperedges to be smaller than or equal to a parameter $\delta$ ($= 2$). We observed that it
is unlikely that a node \emph{far} way from the cut is moved to the opposite block by the flow-based refinement.


\subsection{The \Partitioner{FlowCutter} Algorithm}\label{sec:flowbased_refinement}

In this section, we discuss the flow-based refinement on a bipartition.
We introduce the aforementioned \Partitioner{FlowCutter} algorithm~\cite{FLOW-CUTTER, yang-wong-fbb},
which parallelization is described in the next section.
To speed up convergence and make parallelism worthwhile, we propose an optimization named \emph{bulk piercing}.

\paragraph{Algorithm Overview}

\Partitioner{FlowCutter} solves a sequence of incremental maximum flow problems until a balanced bipartition is found.
Algorithm~\ref{pseudocode:flowcutter} shows the pseudocode for the approach.
In each iteration, first the previous flow (initially zero) is augmented to a maximum flow regarding the current source set $S$ and sink set $T$.
Subsequently, the node sets $S_r, T_r \subset \flownodeset$ of the source- and sink-side cuts are derived.
This is done via residual (parallel) BFS (forward from $S$ for $S_r$, backward from $T$ for $T_r$).
The node sets induce two bipartitions $(S_r, \flownodeset \setminus S_r)$ and $(\flownodeset \setminus T_r, T_r)$.
If neither is balanced, all nodes on the side with smaller weight are transformed to a source (if $c(S_r) \leq c(T_r)$) or a sink otherwise.
As this would yield the same cut in the next iteration, we add one additional node, called \emph{piercing node}, to the terminal set of the smaller side.
Thus, the bipartitions contributed by the currently smaller side will be more balanced with a possibly larger cut in future iterations.
Since the smaller side is grown, this process will converge to a balanced bipartition.

\begin{algorithm2e}[!t]
	\caption{The \Partitioner{FlowCutter} Algorithm}\label{pseudocode:flowcutter}
  \KwIn{Original hypergraph $H = (V,E,c,\omega)$, flow network $\flowhypergraph = (\flownodeset, \flowedgeset)$
        and a source $\source \in \flownodeset$ and sink $\sink \in \flownodeset$}
  \KwOut{Balanced Bipartition of $\flowhypergraph$}
  $S \gets \{ \source \} , T \gets \{ \sink \}$ \tcp*[r]{initialize source and sink set}
  initialize flow $f: V \times V \rightarrow \mathbb{R}_{\ge 0}$ with $\forall (u,v) \in V \times V: f(u,v) = 0$\;
	\While() {no balanced bipartition found} {
		$f \gets \FuncSty{ParallelMaxPreflow}(\flowhypergraph, S, T, f)$ \tcp*[r]{augment $f$ to a maximum preflow}
		$(S_r,T_r) \gets$ derive source- and sink-side cut $S_r, T_r \subset \flownodeset$\;
    \lIf() {$(S_r, \flownodeset \setminus S_r)$ is balanced} { \Return{$(S_r, \flownodeset \setminus S_r)$} }
    \lElseIf { ($\flownodeset \setminus T_r, T_r)$ is balanced } { \Return{$(\flownodeset \setminus T_r, T_r)$} }
		\lIf() {$c(S_r) \leq c(T_r)$} {
			$S \gets S_r \cup \FuncSty{selectPiercingNode}(S \cup S_r)$
		} \lElse() {
			$T \gets T_r \cup \FuncSty{selectPiercingNode}(T \cup T_r)$
		}
	}
\end{algorithm2e}

For our purpose, there are two important piercing node selection heuristics: \emph{avoid augmenting paths}~\cite{FLOW-CUTTER, yang-wong-fbb} and \emph{distance from cut}~\cite{KAHYPAR-HFC}.
Whenever possible, a node that is not reachable from the source or sink should be picked, i.e., $v \in \flownodeset \setminus (S_r \cup T_r)$.
Such nodes do not increase the weight of the cut, while improving balance~\cite{PicardQ82}.
As a secondary criterion, larger distances from the original cut are preferred, to reconstruct parts of it.


\paragraph{Bulk Piercing Optimization}

On larger instances, piercing only one node per iteration converges slowly. We therefore increase the amount of work in each iteration
by piercing multiple nodes, as long as we are far from balance.

To achieve a small number of iterations (e.g., poly-log) we set a goal on the weights of the sides of the bipartition, and pierce more aggressively the further we are from it.
Assume, we want to pierce the source side next and have already performed $r-1$ piercing iterations on it.
In the $r$-th iteration we want to add $\frac{1}{2^r} \cdot (\frac{c(V)}{2} - c(S))$ new weight to the source side, where $c(S)$ is the weight of the initial source-side terminals (before any piercing) and $\frac{c(V)}{2}$ is the weight of a perfectly balanced bipartition.
Thus the overall weight goal for the $r$-th iteration on the source side is set to $(\frac{c(V)}{2} - c(S)) \sum_{i=1}^{r} \frac{1}{2^i}$.
This is chosen such that we allow a lot of progress early on and become more careful as we get closer to a balanced bipartition.
We track the average weight added per node in previous iterations and from this estimate the number of required piercing nodes to reach the goal for the $r$-th iteration.
To boost measurement accuracy, we pierce only one node for the first few rounds, and then switch to bulk piercing.

\subsection{Parallel Maximum Flow Algorithm}\label{sec:parallel_flow_algo}

Maximum flow algorithms are notoriously difficult to parallelize
efficiently~\cite{ShiloachVishkin, BaumstarkSyncPushRelabel, AndersonSetubalPushRelabel, ColoringPushRelabel}.
The synchronous push-relabel approach of Baumstark \etal~\cite{BaumstarkSyncPushRelabel} is a recent algorithm that
sticks closely to sequential FIFO and thus shows good results.
We first describe the sequential push-relabel algorithm proposed by Goldberg and Tarjan~\cite{PUSH-RELABEL} and
then briefly outline its parallelization.
We conclude with implementation details and intricacies of using \Partitioner{FlowCutter} with preflows.

\paragraph{Push-Relabel Algorithm}
The \emph{push-relabel} algorithm~\cite{PUSH-RELABEL} stores a distance label $\distance(u)$ and an excess value $\excess(u) := \sum_{v \in \flownodeset} f(v,u)$ for each node.
It maintains a \emph{preflow}~\cite{PREFLOW} which is a flow where the conservation constraint is replaced by $\excess(u) \geq 0$.
The distance labels represent a lower bound for the distance of each node to the sink.
A node $u \in \flownodeset$ is \emph{active} if $\excess(u) > 0$. An edge $(u,v) \in \flowedgeset$ is \emph{admissible}
if $r_f(u,v) > 0$ and $\distance(u) = \distance(v) + 1$.
A \emph{$\text{push}(u,v)$} operation sends $\delta = \min(\excess(u),r_f(u,v))$ flow units over $(u,v)$.
It is applicable if $u$ is active and $(u,v)$ is admissible.
A \emph{$\text{relabel}(u)$} operation updates the distance label of $u$ to $\min(\{\distance(v) + 1 \mid r_f(u,v) > 0\})$, which is applicable if $u$ is active and has no admissible edges.
The distance labels are initialized to $\forall u \in \flownodeset \setminus \{ \source \}: \distance(u) = 0$ and $\distance(\source) = |V|$ and all source edges are saturated.
Efficient variants use the \emph{discharge} routine, which repeatedly scans the edges of an active node until its excess is zero.
All admissible edges are pushed and at the end of a scan, the node is relabeled.
The \emph{global relabeling} heuristic~\cite{DBLP:journals/algorithmica/CherkasskyG97} frequently assigns exact
distance labels by performing a reverse BFS from the sink to reduce relabel work in practice.
Note that a maximum preflow already induces a minimum sink-side cut, so if only a minimum cut is required, the algorithm can already stop
once no active nodes with distance label $< n$ exist.

The parallel push-relabel algorithm of Baumstark~\etal~\cite{BaumstarkSyncPushRelabel} proceeds in rounds in which all active nodes are discharged in parallel.
The flow is updated globally, the nodes are relabeled locally and the excess differences are aggregated in a second array using atomic instructions.
After all nodes have been discharged, the distance labels $\distance$ are updated to the local labels $\distance'$ and the excess deltas are applied.
The discharging operations thus use the labels and excesses from the previous round.
This is repeated until there are no nodes with $\excess(v) > 0$ and $\distance(v) < n$ left.
To avoid concurrently pushing flow on residual arcs in both directions (race condition on flow values), a deterministic winning criterion on the old distance labels is used to determine which direction to push, if both nodes are active.
If an arc cannot be pushed due to this, the discharge terminates after the current scan, as the node may not be relabeled in this round.
The rounds are interleaved with global relabeling~\cite{DBLP:journals/algorithmica/CherkasskyG97}, after linear push and relabel work, using parallel reverse BFS in the residual network.
We additionally fixed an undocumented bug in the original algorithm (not source code) for which we refer the reader to Ref.~\cite{MT-KAHYPAR-FLOWS}.

%
%

\paragraph{Intricacies with Preflows and \Partitioner{FlowCutter}}
A maximum preflow only yields a sink-side cut via the reverse residual BFS, but we also need the source-side cut.
We can run flow decomposition~\cite{DBLP:journals/algorithmica/CherkasskyG97} to push excesses back to the source.
However, flow decomposition is difficult to parallelize~\cite{BaumstarkSyncPushRelabel}.
Instead, we initialize the forward residual BFS with all active non-sink excess nodes.
This finds the reverse paths that carry flow from the source to the excess nodes, which is what we need.

Furthermore, when transforming a node with positive excess to a sink, its excess must be added to the flow value.
This only happens when piercing, as sink-side nodes have no excess. 

Finally, we want to reuse the distance labels from the previous round to avoid re-initialization overheads.
However, as the labels are a lower bound on the distance from the sink, piercing on the sink side invalidates the labels.
In this case, we run global relabeling to fix the labels and collect the existing excess nodes, before starting the main discharge loop.
When piercing on the source side the labels remain valid and new excesses are created.
These are added to the active nodes and we do not run an additional global relabeling.
The existing excess nodes are collected during regular global relabel runs.

\paragraph{Implementation Details}

Since $(\innode, \outnode)$ is the only outgoing edge of $\innode$ with non-zero capacity in the Lawler expansion (see Figure~\ref{fig:flow_network}),
the flow on edges $(u, \innode)$ is also bounded by $\omega(e)$ (instead of $\infty$). Adding these capacities is a trivial optimization, but
significantly accelerates the algorithm and increases the available parallelism. This can be explained by the fact that a hypernode $u$ does not
immediately relieve all of its excess to one of its incident nets $e \in \incnets(u)$ during the discharge routine, which is later pushed back
due to the $\omega(e)$ bound. We set the capacities $c(u,\innode)$ to $\infty$ again when deriving the source- and sink-side cut, since only
bridging edges can be cut in the Lawler expansion.

Moreover, we observed that the number of active nodes follows a power-law distribution. Due to little work in later rounds, it takes
many rounds to trigger the global relabeling step that also terminates the algorithm when a maximum preflow is found.
Therefore, we perform additional relabeling if the flow value has not changed for some rounds (500),
and only few active nodes ($< 1500$) were available in each.

\section{A Parallelization of the $n$-Level Partitioning Scheme}\label{sec:nlevel}

Our multilevel algorithm contracts a clustering of highly-connected nodes on each level, which
induces a hierarchy with a logarithmic number of levels. In contrast, \Partitioner{KaHyPar}~\cite{KAHYPAR-K,KAHYPAR-DIS} -- the currently
best sequential partitioning algorithm with regards to solution quality -- contracts only a single node on each level.
Correspondingly, in each refinement step, only a single node is uncontracted followed
by a highly-localized search for improvements around the uncontracted node.
This technique produces almost $n$ levels and is therefore known as the $n$\emph{-level partitioning scheme}.
More levels provide \say{more opportunities to refine the current solution}~\cite{ALPERT-SURVEY}
at different
granularities but also increases the running time of multilevel algorithms. Therefore, \Partitioner{KaHyPar} is
the method of choice for computing high-quality partitions but comes at the cost of substantially higher running times than
other systems -- prohibitively so for very large hypergraphs. Although $n$-level partitioning seems inherently sequential,
we present the first shared-memory parallelization of the technique which achieves good speedups and comparable
solution quality to \Partitioner{KaHyPar} in a fraction of its running time.

We start this section with a formal definition of the (un)contraction operation, and provide
a high-level overview of our $n$-level partitioning algorithm.
We then discuss and present solutions for the main challenges in this algorithm:
finding a parallel schedule of (un)contraction operations and performing them on a dynamic hypergraph data structure in parallel.
We conclude the algorithm description with a discussion on how the refinement algorithms from the previous section are
integrated into the $n$-level algorithm.

\paragraph{The Contraction and Uncontraction Operation}
Contracting a node $v$ onto another node $u$ replaces $v$ with $u$ in all nets $e \in \incnets(v) \setminus \incnets(u)$
and removes $v$ from all nets $e \in \incnets(u) \cap \incnets(v)$. The weight of node $u$ is then $c(u) + c(v)$. We call
$u$ the \emph{representative} of the contraction and $v$ its \emph{contraction partner}. Uncontracting a node $v$ reverses
the corresponding contraction operation.

\begin{algorithm2e}[!t]
  \KwIn{Hypergraph $H = (V,E)$, number of blocks $k$}
  \KwOut{$k$-way partition $\Partition$ of $H$}
  \caption{The $n$-Level Partitioning Algorithm}\label{pseudocode:nlevel}

    \While() {$V$ has too many nodes}{
      \ParallelFor(\label{pseudocode:coarsening_pass}) {$u \in V$ in random order}{
        $v \gets \operatorname{arg\,max}_{v} \sum_{e \in \incnets(u) \cap \incnets(v)} \frac{\omega{(e)}}{|e| - 1}$ \tcp*[r]{find best contraction partner for $u$}
        \lIf (\label{pseudocode:safe_contractions}) {$(v,u)$ is eligible for contraction} {
          contract $v$ onto $u$
        }
      }
    }

    $\Partition \gets$ \FuncSty{InitialPartition}($H, k$)\;
    $\batches = \langle B_1, \dots, B_l \rangle \gets \FuncSty {ConstructBatches}(\contractionforest)$\;
    \For(\tcp*[f]{\small $|B| \approx \maxbatchsize$}) {$B \in \batches$}{
      \ParallelFor() {$v \in B$}{
        uncontract $v$ from $\rep[v]$\EndOfStatement $\Partition[v] \gets \Partition[\rep[v]]$\;
      }
      $\FuncSty{LabelPropagationRefinement}(H,\Partition,B)$\EndOfStatement $\FuncSty{FMRefinement}(H,\Partition,B)$\;
    }
\end{algorithm2e}

\paragraph{Algorithm Overview}
Algorithm~\ref{pseudocode:nlevel} shows the high-level pseudocode of our $n$-level partitioning
algorithm. In the coarsening phase, we iterate in parallel over all nodes and find
the best contraction partner $v$ for each node $u$ using the heavy-edge rating function~\cite{PATOH,HMETIS,KAHYPAR-K} (similar as in our
multilevel algorithm). We then check whether $v$ can be contracted right away
onto $u$ or if there are any other pending contractions that must be performed before.
In the latter case, we transfer the responsibility
of contracting $v$ onto $u$ to the thread resolving the last dependency that defers the contraction.
Once the hypergraph is small enough, we compute an initial partition into $k$ blocks.

Uncontracting only a single node followed by a localized refinement step is
inherently sequential, which is why we have to relax the $n$-level idea in the uncoarsening phase.
We construct a sequence of batches $\batches = \langle B_1, \dots, B_l \rangle$ of contracted nodes,
such that $|B_i| \approx \maxbatchsize$ where $\maxbatchsize$ is an input parameter.
Batches are processed one after another, enabling the uncontraction of nodes in subsequent batches.
Nodes in the same batch are uncontracted in parallel.
The main challenge is to identify which nodes can or even must appear in the same batch.
After uncontracting each batch, we apply highly-localized refinement algorithms around the batched nodes.

\paragraph{A Forest-Based Scheduling of Contraction Operations}
Let us consider a sequence of contractions $\contractions := \splitatcommas{\langle (v_1,u_1),\ldots, (v_n,u_n) \rangle}$
executed exactly in this order ($v_i$ is contracted onto $u_i$).
In this sequence, each node is contracted onto at most one representative, and there are no cyclic contraction dependencies.
Thus, the sequence of contractions form a forest $\contractionforest := (V,\contractions)$ if interpreted
as a graph with directed edges $(v_i,u_i) \in \contractions$.  We will refer to $\contractionforest$ as the
\emph{contraction forest}. Our parallelization uses the observation that there exists several permutations of $\contractions$
leading to the same contraction forest $\contractionforest$. We can contract a node as soon as all of its
children in $\contractionforest$ have been contracted. To obtain parallelism, different subtrees and siblings
can be contracted independently, i.e., we traverse $\contractionforest$ in a bottom-up fashion in parallel.

In our actual algorithm, we do not know $\contractionforest$ in advance.
However, we show how to construct $\contractionforest$ dynamically and from that we derive a
parallel schedule of contraction operations. We call a contraction $(v,u)$ \emph{compatible} with existing
contractions if it satisfies the following three conditions: (i) $v$ must be a root of $\contractionforest$, (ii) adding $(v,u)$ to
$\contractionforest$ must not induce a cycle, and (iii)
the contraction of $u$ onto its parent in $\contractionforest$ must not have started yet.
We represent $\contractionforest$ using an array $\rep$ of size $n$ storing for each node its representative ($u$ is a root
if $\rep[u] = u$). Additionally, we use a zero-initialized array $\pending$, where $\pending[u]$ stores the number of
children of $u$ whose contraction is not finished. If $\pending[u] = 0$ and $\rep[u] \neq u$, we assume that the contraction of
$u$ onto $\rep[u]$ has started and prevent further contractions onto $u$. The entries $\rep[u]$ and $\pending[u]$ are only modified while
holding a node-specific lock for $u$.

If we add a contraction $(v,u)$ to $\contractionforest$, we first lock $v$ and check if $v$ is a root.
If $\rep[v] \neq v$, we discard the contraction as another thread has already selected a representative for $v$.
Otherwise, we walk the path towards the root of $u$'s tree in $\contractionforest$ to find the lowest ancestor $w$
of $u$ whose contraction has not started yet ($rep[w] = w$ or $\pending[w] > 0$, in most cases $w = u$). If $v$ is found
on this path, the contraction is discarded, as it would induce a cycle in $\contractionforest$. If no cycle is found,
we lock $w$, and check $\rep[w]$ and $\pending[w]$ again. If they changed, we find a new suitable ancestor and perform the
cycle check again. Otherwise, we set $\rep[u] = w$ and increment $\pending[w]$ by one, and unlock $v$ and $w$.
We immediately contract $v$ onto $w$ if $\pending[v] = 0$ and subsequently reduce $\pending[w]$ by one. If this
reduces $\pending[w]$ to zero, we recursively apply this process to the contraction $(w, \rep[w])$ if $\rep[w] \neq w$.

\paragraph{Batch Uncontractions}
For the uncoarsening phase, we construct a sequence of batches $\batches = \langle B_1, \ldots, B_l \rangle$ where each
batch $B_i$ contains roughly $\maxbatchsize$ contracted nodes that can be uncontracted independently in parallel.
The batch size $\maxbatchsize$ is an input parameter (set to $\maxbatchsize = 1000$ in our implementation) that interpolates between
scalability (high values) and the inherently sequential $n$-level scheme ($\maxbatchsize = 1$). Uncontracting a batch $B_i$
resolves the last dependencies required to uncontract the next batch $B_{i+1}$. After each batch uncontraction, we apply a highly-localized version of the
label propagation and FM algorithm searching for improvements in a small region around the uncontracted nodes.

\begin{figure}
	\centering
	\includegraphics[width=\textwidth]{}
  \caption{Uncontracting $w$ increases the cut by one since we do not uncontract $v$ and $w$ in reverse order.}\label{fig:nlevel_bug}
\end{figure}

We construct the batches via a top-down traversal of the contraction forest $\contractionforest$.
There are two constraints that we need to consider when constructing the batches:
(i) a node must appear in a batch strictly after the batch containing its representative, and (ii) siblings in
$\contractionforest$ must be uncontracted in reverse order of contraction. The second condition prevents uncontractions
increasing the cut size as illustrated in Figure~\ref{fig:nlevel_bug}, which would violate a fundamental property of
the multilevel scheme. Since contractions can be performed at the same time, it is often not possible to define a strict order in which we have to
revert the contractions. We therefore associate each contraction $(v,u)$ with a time interval $[s_v, e_v]$ by atomically incrementing
a counter before starting ($s_v$) and after finishing ($e_v$) a contraction operation. If the time interval of two nodes overlap,
we assume they were contracted at the same time, otherwise one is strictly earlier than the other. Among siblings, we compute
the transitive closure of nodes with overlapping time intervals and order them decreasingly if one is stricly earlier than the other.
We then use this as the reverse order of contractions, while we add siblings with overlapping time intervals to the same batch.
As this is only a high-level description of the batch construction algorithm, we refer the reader to Ref.~\cite[p.129--131]{HEUER-DIS} for more details.

\paragraph{The Dynamic Hypergraph Data Structure}
Figure~\ref{fig:nlevel_example} illustrates the dynamic hypergraph data structure that stores the pin-list of each net $e$ and the incident
nets $\incnets(u)$ of each node $u$ using two seperate adjacency arrays. When we contract a node $v$ onto another node $u$, we iterate over the incident nets
of $v$ and search for $u$ and $v$ in the pin-list of each net $e \in \incnets(v)$. If we do not find $u$ in $e$, we replace $v$ with
$u$ ($e \in \incnets(v) \setminus \incnets(u)$). Otherwise, we swap $v$ to the end of $e$'s pin-list and decrement
the size of $e$ by one ($e \in \incnets(u) \cap \incnets(v)$), dividing its pin-list into an \emph{active} and \emph{inactive} part.
We use a seperate lock for each net to synchronize edits to the pin-lists. We further mark the nets $e \in \incnets(u) \cap \incnets(v)$
in a bitset $X$, which we then use to update the incident nets of $u$.

The key idea for updating the incident nets is to remove $\incnets(u) \cap \incnets(v)$ from $\incnets(v)$ and concatenate $u$
and $v$ in a doubly-linked list $L_u$. All nodes contracted onto $u$ are then stored in $L_u$. We can then iterate over the incident
nets of $u$ by iterating over all entries $w \in L_u$ and the modified $\incnets(w)$ arrays.
We associate each $\incnets(w)$ array with a counter $t_w$ and each entry $e \in \incnets(w)$ with a marker $t_{w,e}$ (initially set to zero).
Entries with markers $\ge t_w$ are \emph{active}, i.e., were not remove yet.
For removing $\incnets(u) \cap \incnets(v)$ from $\incnets(v)$
(marked in bitset $X$), we iterate over all nodes $w \in L_v$ and increment $t_w$ by one. We then iterate over the previously
active entries of $\incnets(w)$ (now marked with $t_w - 1$) and if an entry is not in $X$, we set its marker to $t_w$.
Otherwise, we swap the entry to the end of the active part but keeping its marker at $t_w - 1$.
A simpler approach would be to represent the incident nets of each node as an adjacency list and add $\incnets(v) \setminus \incnets(u)$ to $\incnets(u)$,
as it is done in \Partitioner{KaHyPar}~\cite{KAHYPAR-DIS,KaHyPar-R}. However, this could lead to quadratic memory usage, and is therefore
not practical for large hypergraphs.

When uncontracting a node $v$ from its representative $u$, we first restore $L_v$ from $L_u$. To do this, we additionally store the last node in
$L_v$ at the time $v$ is contracted onto $u$. We then iterate over all nodes $w \in L_v$ and decrement their counters $t_w$ by one. This
reactivates all nets $e \in \incnets(w)$ that became inactive due to contracting $v$, i.e., were part of $\incnets(u) \cap \incnets(v)$
before the contraction. In these nets, we swap $v$ to the active part of their pin-lists again. All previously active nets $e \in \incnets(w)$ (now marked
with $t_{w,e} > t_w$) were part of $\incnets(v) \setminus \incnets(u)$ before the contraction in which we then replace $u$ with $v$ again. Note that we sort all
pins in the inactive part of a pin-list by the batches in which they are uncontracted. Then, all pins of a net $e$ part of the current batch can be restored
simultaneously by appropriately incrementing the size of $e$. Only one thread that triggers the restore case on a net performs the restore operation,
which we ensure with an atomic \textfunc{test-and-set} instruction.

\begin{figure}
	\centering
	\includegraphics[width=\textwidth]{}
	\caption{Contraction operation applied on the dynamic hypergraph data structure.}\label{fig:nlevel_example}
\end{figure}

\paragraph{Removing Single-Pin and Identical Nets}
We remove single-pin nets and aggregate the weight of all identical nets at one representative after a pass over all nodes
in the coarsening phase using the same algorithm as already described in Section~\ref{sec:coarsening}. This adds several synchronization points ($\approx \log{n}$)
at which we have to restore them in the uncoarsening phase. \Partitioner{KaHyPar}~\cite{KaHyPar-R,KAHYPAR-DIS} removes these nets directly
after each contraction operation. However, doing this in the parallel setting would introduce additional dependencies for batches, which is why we
decided against it.

\paragraph{Refinement}
After uncontracting a batch, we run a highly-localized version of label propagation and FM refinement initialized
with the boundary nodes of the current batch. The searches then expand to a small region around the uncontracted nodes.
We complement the localized refinement with a refinement pass on the entire hypergraph after restoring single-pin and identical nets.
Here, we run FM (initialized with all boundary nodes) and flow-based refinement. Additionally, we implemented a concurrent
gain table update procedure for batch uncontractions for which we refer the reader to Ref.~\cite[p.132--133]{HEUER-DIS} for more details.

\section{Unifying Hypergraph and Graph Partitioning}\label{sec:parallel_graph}
Hypergraph partitioning (HGP) is considered \say{inherently more complicated}~\cite{KayaaslanPCA12} and therefore more complex \say{in terms
of implementation and running time}~\cite{BulucMSS016} than graph partitioning (GP). However, the high-level description of
partitioning algorithms often does not reveal any difference between the two. For example, label propagation refinement iterates over all nodes, and
moves each node to the block with the highest gain value. While the algorithm is widely used in GP and HGP, the main difference
in its implementation lies in the representation of the graph data structure and the computation of gain values. GP tools build on
data structures using \emph{one} adjacency array to represent the neighbors of nodes, while HGP requires \emph{two} adjacency arrays storing
the pin-lists of hyperedges and the incident nets of nodes. This results in a better cache utilization and faster access times for graph algorithms.
Moreover, the gain value of a node move for the edge cut metric depends on the block assignments of neighbors for GP,
while HGP tools have to maintain or compute the pin count values of hyperedges to decide whether or not it can be removed from the cut.
In the following, we present an optimized graph and partition data structure for GP implementing the interface of our hypergraph data
structure such that we can use them as a drop-in replacement in our partitioning algorithm. We focus on the multilevel algorithm
and refer the reader to Ref.~\cite[p.150--153]{HEUER-DIS} for a description of the $n$-level graph data structure.

\paragraph{Terminology}
An undirected and weighted graph $G = (V,E,c,\omega)$ can be considered as a hypergraph where each net contains only two pins (also called an \emph{edge}).
Therefore, the definitions and notations for hypergraphs also apply to undirected graphs.
We define the weight of an edge $e = \{u,v\} \in E$ as $\omega(u,v) := \omega(e)$.
If $\{u,v\} \notin E$, then $\omega(u,v) = 0$.
An edge $\{u,u\} \in E$ is called a \emph{selfloop}.
For a subset $V' \subseteq V$, $\omega(u, V') := \sum_{v \in V'} \omega(u,v)$
is the weight of all edges connecting node $u$ to $V'$.

\subsection{Graph Data Structure}\label{sec:multilevel_graph}

We use \emph{one} adjacency array to represent an undirected graph.
The adjacency lists stores the directed edges $(u,v) \in \incnets(u)$ for each node $u$.
Note that we could reduce the memory overhead of the data structure by storing only the neighbors
$v \in \neighbors(u)$.
However, our partitioning algorithms are designed to operate on hypergraphs, which often request
the pin-list of a net using its ID (position in the adjacency array). This is only possible when storing
additional information about the source and target node for each edge.
Thus, our data structure requires twice as much memory as traditional graph data structures.

\paragraph{Contraction}
Our multilevel partitioning algorithm contracts a clustering of the nodes on each level.
The coarsening algorithm stores the clustering in an array $\rep$ where $\rep[u] = v$ stores the representative of $u$'s cluster.
For each representative $v$, we maintain the invariant that $\rep[v] = v$.

The contraction algorithm first remaps cluster IDs to a consecutive range by computing a parallel prefix sum on an array of size $n$ that
has a one at position $v$ if $v$ is a representative of a cluster and zero otherwise.
Then, we accumulate the weights and degrees of nodes in each cluster using atomic \textfunc{fetch-and-add} instructions.
Afterwards, we copy the incident edges of each cluster to a consecutive range in a temporary adjacency array by computing
a parallel prefix sum over the cluster degrees.
We then iterate over the adjacency lists of each cluster in parallel, sort them, and remove selfloops and identical edges except for one representative
at which we aggregate their weights. Finally, we construct the adjacency array of the coarse graph by computing a parallel prefix
sum over the remaining cluster degrees.

\subsection{The Partition Data Structure}\label{sec:graph_partition}

For hypergraphs, our partition data structure stores the block assigments $\Partition$, the block weights $c(V_i)$,
the pin count values $\pinsinpart(e,V_i)$, and connectivity sets $\conset(e)$ for each net $e \in E$ and block
$V_i \in \Partition$. Since a graph edge connects only two nodes, we can remove the pin count values and
connectivity sets as we can calculate them on-the-fly. However, we used the synchronized writes to the pin count
values to update the gain table and compute the attributed gain values. We therefore present alternative approaches
for both techniques that exploit the properties of graphs.

\paragraph{The Gain Table}

The connectivity metric reverts to the edge cut metric for plain graphs (since $\con(e) \le |e| = 2$). The gain value of moving
a node $u$ to another block $V_t$ is then defined as $\gain{u}{t} := \omega(u, V_t) - \omega(u, \nodeblock{u})$ (external minus internal edges).
Thus, the gain table for graph partitioning stores and maintains the $\omega(u,V_i)$ values for each node $u \in V$ and block $V_i \in \Partition$ ($n \cdot k$ entries).
If we move a node $u$ from block $V_s$ to $V_t$, we update the gain table by adding $\omega(u,v)$ to $\omega(v,V_t)$ and $-\omega(u,v)$
to $\omega(v,V_s)$ for each neighbor $v \in \neighbors(u)$ using atomic \textfunc{fetch-and-add} instructions.
Hence, the complexity of the gain table updates when each node is moved at most once is $\sum_{u \in V} 2d(u) = \Oh{m}$.

\paragraph{Attributed Gains}
For a node $u$ moved from block $V_s$ to $V_t$, we attribute a connectivity reduction or increase by
$\omega(e)$ to each net $e \in \incnets(u)$ based on the synchronized writes to $\pinsinpart(e,V_s)$
and $\pinsinpart(e,V_t)$. Since we do not longer maintain the pin count values,
we need another synchronization mechanism to decide if a node move removes an edge from the cut or makes
it a cut edge, and based on that attribute a reduction or an increase by the weight of the edge to the move.

We therefore use an array $B$ of size $m$ (initialized with $\perp$) to synchronize the node moves for each edge.
To compute the attributed gain of a node move $u$ from block $V_s$ to $V_t$, we iterate over all incident edges $e = \{u,v\} \in \incnets(u)$ and write
the target block $V_t$ of $u$ to $B[e]$ using an atomic \textfunc{compare-and-swap} operation. If the operation
succeeds, no other thread has moved its neighbor $v$ yet. In this case, $e$ becomes an internal edge if $V_t = \nodeblock{v}$
(reduces the edge cut by $\omega(e)$) and a cut edge otherwise (increases the edge cut by $\omega(e)$).
If we do not succeed in setting $B[e]$ from $\perp$ to $V_t$, another thread has already moved or is currently
moving $v$ to another block. In both cases, its target block is $B[e]$ and we can compute the attributed gain value
for edge $e$ as before by comparing $V_t$ and $B[e]$. After calculating the attributed gain value for
each net $e \in \incnets(u)$, we set the block ID of $u$ to $V_t$. Note that the algorithm only works when
each node is moved at most once, as it is done in our refinement algorithms ($B[e]$ values are reset to $\perp$ after
each refinement round).

\section{Deterministic Partitioning}\label{sec:deterministic}

A program is \emph{externally deterministic}~\cite{DBLP:conf/ppopp/BlellochFGS12} if, given the same input, it produces the same output, each time it is run.
Sequential programs are usually deterministic by default, whereas parallel programs are non-deterministic by default due to randomness in scheduling.
Yet, researchers have advocated the benefits of deterministic parallel programs for several decades~\cite{DBLP:conf/popl/Steele90,DBLP:journals/computer/Lee06,bocchino2009parallel}.
It is easier to debug the program, to reason about performance and it yields reproducible results: in experiments and applications.
Unfortunately, with the exception of \Partitioner{BiPart}~\cite{BIPART}, all published parallel partitioning algorithms so far are non-deterministic.
This stems from concurrently performed moves affecting other ongoing move decisions.

In this section, we present deterministic versions for a subset of the components in our multilevel framework: label propagation refinement, heavy-edge clustering for coarsening, and the Louvain community detection method~\cite{Louvain} (optimizing the popular modularity metric), which we used to guide coarsening decisions.
These clustering algorithms all follow the local moving scheme.
Nodes are visited asynchronously in parallel and are moved to the best cluster in their neighborhood.

To achieve determinism, we use the synchronous local moving approach which is popular in distributed Louvain implementations for community detection~\cite{SLM}.
Moves are calculated but not applied until the end of a local moving round and thus do not influence one another.
The difficult part and difference to prior work is that not all calculated moves can be applied, for example due to the balance constraint. We must select a subset that is as profitable as possible.
We also break down each round into further sub-rounds, to trade off more frequent synchronization for more accurate gains.

Except for the use of non-internally deterministic sub-routines such as sorting, group-by, and emitting elements to a collection in parallel, our algorithms are internally deterministic, i.e., additionally pass through the same internal states on each run~\cite{DBLP:conf/ppopp/BlellochFGS12}.

\paragraph{Deterministic Label Propagation Refinement}

In synchronous label propagation, we first calculate the highest gain move for each node in the current sub-round.
In a second step, we perform balance-preserving swaps between block pairs, prioritized by the gains of the calculated moves.
This generalizes a previous approach in \Partitioner{SocialHash}~\cite{SHP} to weighted hypergraphs, and thus allows the use in a multilevel framework.

For each block pair $(V_s, V_t)$, we sort the two move sequences $M_{st}$ from $V_s$ to $V_t$ and $M_{ts}$ from $V_t$ to $V_s$ by gain and then select a prefix
$M_{st}[0:i]$ and $M_{ts}[0:j]$, from each sequence to apply.
We use the node ID as tie-breaker for determinism.
Let $x(i,j) \coloneq \sum_{a = 0}^{i-1} c(M_{st}[a]) - \sum_{a = 0}^{j-1} c(M_{ts}[a])$ be the weight added to block $V_t$ and removed from block $V_s$ after swapping the nodes in the corresponding prefixes.
We call $i,j$ feasible if $- (\balancedconstraint{\max} - c(V_s)) \leq x(i,j) \leq \balancedconstraint{\max} - c(V_t)$, i.e., after the swaps the partition is still balanced.
To maximize gain, we look for the longest feasible prefixes.
This can be computed similar to merging two sorted arrays.
Keep two pointers $i,j$ to the current prefixes of $M_{st}[0:i], M_{ts}[0:j]$.
In each step advance the pointer of the sequence whose source block receives more weight, i.e., advance $j$ if $x(i,j) < 0$ or $i$ if $x(i,j) > 0$.
If $x(i,j) = 0$ advance either, if the end of the corresponding sequence is not yet reached.

The parallelization follows a common idea for parallel merging.
We first compute the cumulative gains of the sequences via parallel prefix sum operations.
Then the following algorithm is applied recursively to perform the selection.
We do binary search to find  the smallest index $q$ in the shorter sequence whose cumulative weight is not less than the cumulative weight of the middle of the longer sequence.
The two sub-sequence pairs to the left and right of the middle and $q$ can be searched independently in parallel.
Let $l$ denote the length of the longer sequence, then the algorithm does $\mathcal{O}(l)$ work and has $\mathcal{O}(\log^2(l))$ depth.
There are parallel merge algorithms with $\mathcal{O}(\log(l))$ depth, but these are more complicated and unlikely to yield faster running time in practice.

Additionally, we propose two optimizations that are helpful in practice but do not affect the theoretical running time.
If the right parts contain feasible prefixes we return them as we prefer longer prefixes, otherwise we return the result from the left parts.
If the prefixes at the splitting points are feasible, we can omit the left call.
Further, we can omit the right call if the cumulative weight at the middle of the longer sequence exceeds that at the end of the shorter sequence.

\paragraph{Deterministic Louvain Method}

There is no weight constraint on clusters in the Louvain algorithm.
Therefore, we can apply all of the calculated moves.
However, there is an intricacy with floating point weights, which we need due to the edge weight model employed~\cite{KAHYPAR-CA}.
In modularity optimization with the Louvain method, the cluster volume (weighted degree sum of the cluster) is part of the move decisions.
Usually, the volumes are updated in parallel after a node move~\cite{PARALLEL-LOUVAIN}.
Unfortunaly, floating point arithmetic is not associative: different schedules will lead to slightly different rounded values, which actually resulted in non-deterministic outcomes.
One option is to recompute the volumes after each local moving subround.
However, this is substantially slower than only considering updates from moved nodes, particularly at later stages when fewer nodes are moved.
Therefore, we have to establish an order in which the volume updates of each cluster are aggregated.
First we group the updates by cluster, then sort by node ID. Subsequently, we perform a reduction on each group with static load balancing, which is needed for determinism.

\paragraph{Deterministic Clustering for Coarsening}

During coarsening, we bound the weight of the heaviest cluster by an upper weight limit $c_{\max}$,
to ensure that initial partitioning can find a feasible solution.
The difference to refinement is that we have significantly more clusters, and only unclustered nodes (singletons) can move.
Therefore, the approach from refinement is not applicable here, which is why we use a simpler scheme.

Each unclustered node in a sub-round first determines its desired target cluster according to the heavy-edge rating function.
Then, we group the moves by the target cluster, and sort them in order of ascending node weight and use the node ID as tie-breaker for determinism.
For each group, we compute a prefix sum on the node weights, and apply the longest prefix that does not exceed the weight constraint, rejecting the remaining moves.

As an optimization to reduce the amount of work in the group-by stage (second largest bottleneck), we already sum up cluster weights during the target-cluster calculation step (main bottleneck).
If all moves into a cluster combined do not exceed the weight constraint, we simply approve them all and exclude the target cluster from the group-by stage.

For further details such as implementation details and group-by mechanisms used, as well as initial partitioning and contraction algorithms, we refer to~\cite{GOTT-DIS, MT-KAHYPAR-SDET}.

\section{Experiments}\label{sec:experiments}

All presented algorithms have been made available in the \textbf{M}ulti-\textbf{T}hreaded \textbf{Ka}rlsruhe \textbf{Hy}pergraph \textbf{Par}titioning framework
\Partitioner{Mt-KaHyPar}\footnote{\Partitioner{Mt-KaHyPar} is publicly available from \url{https://github.com/kahypar/mt-kahypar}}.
It implements a parallel multilevel and $n$-level partitioning algorithm, as well as a deterministic version of the multilevel
algorithm and optimized data structures for graph partitioning. \Partitioner{Mt-KaHyPar} optimizes the connectivity metric for
hypergraph partitioning and the edge cut metric for graph partitioning.

The following experimental evaluation is structured as follows: We first decribe the four different benchmark sets composed of
over 800 graphs and hypergraphs, and discuss the experimental setup and methodology used in our experiments. We
then evaluate the time-quality trade-offs and speedups of \Partitioner{Mt-KaHyPar}'s different partitioning configurations, and
analyze the running time of its algorithmic components in Section~\ref{sec:mt_kahypar_evaluation}.
We conclude the evaluation by comparing \Partitioner{Mt-KaHyPar} to $25$ different sequential and parallel graph and hypergraph
partitioning algorithms in Section~\ref{sec:comparison}.

\paragraph{Instances} We assembled four different benchmark sets. Two of the sets consist of graphs (G), while the other two consists of hypergraphs (HG).
The sets are further subdivided into medium-sized (M) and large instances (L). We abbreviate the name of a benchmark set, e.g., with \shortlargehg.
The baseline denotes the size of the instances, while the subscript indicates whether it contains graphs or hypergraphs. We summarize the properties
of the instances contained in the benchmark sets in Figure~\ref{fig:instances}\footnote{
We made all benchmark sets and detailed statistics of their properties publicly available
from \url{https://algo2.iti.kit.edu/heuer/talg/}.}.
Note that all graphs and hypergraphs have unit node and (hyper)edge weights.

The hypergraph instances are derived from four sources encompassing three application domains:
the ISPD98 VLSI Circuit Benchmark Suite~\cite{ISPD98} (\ISPD),
the DAC 2012 Routability-Driven Placement Contest~\cite{DAC} (\DAC),
the SuiteSparse Matrix Collection~\cite{SPM} (\SPM), and the International SAT Competition 2014 \cite{SAT14} (\SAT).
We interprete the rows and columns of a sparse matrix as nets and nodes, and a non-zero entry in a cell $(i,j)$ indicates whether
or not node $j$ is a pin of net $i$~\cite{PATOH}. We translate satisfiability formulas into three different hypergraph representations~\cite{MANN-PAPA14}.
The \Primal~resp.~\Literal~representation interpretes the variables resp.~literals as nodes, while the clauses form the hyperedges spanning the corresponding nodes.
The \Dual~representation models the clauses as nodes and variables as hyperedges.

Set~\shortmediumhg~contains all 488 hypergraphs from the well-established benchmark set of Heuer and Schlag~\cite{KAHYPAR-CA}
(18~\ISPD, 10~\DAC, 184~\SPM, 276~\Primal, \Literal, and \Dual~instances). The benchmark \largehg~is composed of 94 large hypergraphs that
we selected in order to have more inputs where parallelization is important and useful. It contains the 8 largest \SAT~instances from \mediumhg~that we enhanced with 6 even larger
satisfiability formulas from the International SAT Competition 2014~\cite{SAT14} ($3 \cdot 14 = 42$ different hypergraph representations). We also included 42 sparse matrices with at least 15 million non-zeros, randomly sampled from
the SuiteSparse Matrix Collection~\cite{SPM}. Additionally,
we added all \DAC~instances from \mediumhg. The largest hypergraph of \largehg~has roughly two billion pins.

Our graph benchmark sets are composed of instances from the 10th DIMACS Implementation Challenge~\cite{DIMACS} (\Dimacs),
the Stanford Large Network Dataset Collection~\cite{SNAP} and the Laboratory for Web Algorithms~\cite{webgraphWS} (\Social),
the DAC 2012 Routability-Driven Placement Contest~\cite{DAC} (\DAC),
the SuiteSparse Matrix Collection~\cite{SPM,wovsyd2007} (\SPM), and several randomly generated graphs~\cite{KAGEN,WSVR} (\Random).

Set \shortmediumgr~was initially assembled by Gottesbüren~\etal~\cite{KAMINPAR} (195 graphs) from which we excluded the 39 largest graphs
and additionally added 16 social networks from the Stanford Large Network Dataset Collection~\cite{SNAP} (114 \Dimacs, 30 \Social, 15 \Random, 3 \SPM, and
10 \DAC~instances). The benchmark \largegr~contains 38 out of 42 instances from a graph collection assembled by Ahkremtsev~\cite{MT-KAHIP-DIS} (four instances
were considered as too large as they were used to evaluate external memory algorithms). Additionally, we enhanced \largegr~with 15 graphs that we excluded from \mediumgr~and were
not contained in \largegr~yet (16 \Dimacs, 16 \Social, 15 \Random, and 6 \SPM~instances). The largest graph of \largegr~has roughly two billion edges.

\begin{figure}
	\centering
  \ifpdfplots
    \includegraphics{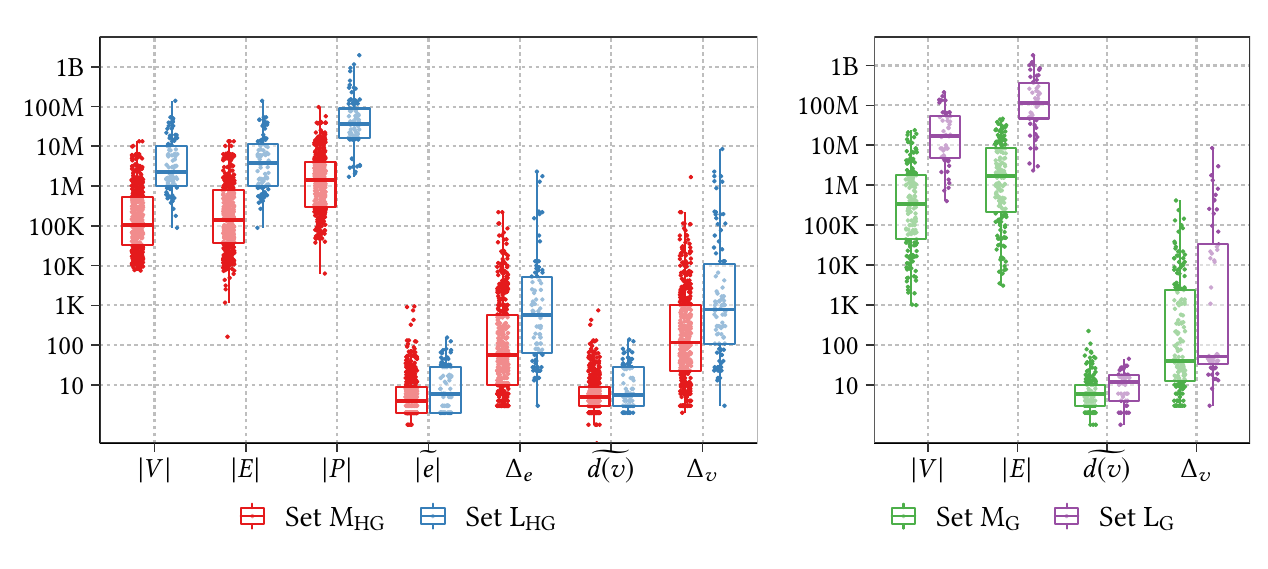}
  \else
    \tikzsetnextfilename{pdf_plots/instances}%
    \input{tikz_plots/instances}%
  \fi
  \vspace{-0.25cm}
  \caption{Summary of different properties for our benchmark sets. It shows for each
    (hyper)graph (points), the number of nodes $|V|$, nets $|E|$ and pins $|P|$,
    as well as the median and maximum net size ($\medsize$ and $\maxsize{e}$) and node degree ($\meddeg$ and $\maxsize{v}$).}\label{fig:instances}
\end{figure}

\paragraph{Experimental Setup}
Experiments on medium-sized instances (\mediumgr~and~\shortmediumhg) run on a cluster of Intel Xeon Gold 6230 processors (2 sockets with 20 cores each)
running at $2.1$ GHz with 96GB RAM.
In these experiments, we partition each (hyper)graph ten times using different random seeds into
$k \in \{2,4,8,16,32,64,128\}$ blocks with an allowed imbalance of $\varepsilon = 3\%$ and a time limit of eight hours.
Experiments on large instances (\largegr~and~\shortlargehg) are done on an AMD EPYC 7702 processor (1 socket with 64 cores) running
at $2.0$--$3.5$ GHz with 1024GB RAM. Here, we partition each (hyper)graph three times using different random seeds into $k \in \{2,8,16,64\}$
blocks with an allowed imbalance of $\varepsilon = 3\%$ and a time limit of two hours. Note that we restrict the parameter space for experiments
on large instances due to limited computational resources. For graph partitioning, we configure the algorithms to optimize the
edge cut metric, while we focus on the connectivity metric for hypergraphs. We will also refer to both metrics as the solution
quality of a partition.

\paragraph{Aggregating Performance Numbers}
We call a (hyper)graph partitioned into $k$ blocks an \emph{instance}. For each instance, we aggregate running times
and the solution quality using the arithmetic mean over all seeds. To further aggregate over multiple instances, we use the geometric mean
for absolute running times and self-relative speedups. If all runs of an algorithm produced an imbalanced partition or ran into the time limit on
an instance, we consider the solution as \emph{infeasible}. In plots, we mark imbalanced solutions with \ding{55} and similarly instances
that timed out with \ClockLogo. Runs with imbalanced partitions are not excluded from aggregated running times. For runs that exceeded the time limit,
we use the time limit itself in the aggregates. When comparing running times,
we say that an algorithm $\mathcal{A}$ is faster than $\mathcal{B}$ by a factor of $x$ on average if
$x := \nicefrac{\gmean{t_\mathcal{B}}}{\gmean{t_\mathcal{A}}} > 1$ where $\gmean{t_\mathcal{A}}$ and
$\gmean{t_\mathcal{B}}$ are geometric mean running times of $\mathcal{A}$ and $\mathcal{B}$.

\paragraph{Performance Profiles}
\emph{Performance profiles} can be used to compare the solution quality of different algorithms~\cite{PERFORMANCE-PROFILES}.
Let $\mathcal{X}$ be the set of all algorithms, $\mathcal{I}$ the set of instances, and $q_{\mathcal{A}}(I)$ the quality of algorithm
$\mathcal{A} \in \mathcal{X}$ on instance $I \in \mathcal{I}$ ($q_{\mathcal{A}}(I)$ is the arithmetic mean over all seeds).
For each algorithm $\mathcal{A}$, performance profiles show the fraction of instances ($y$-axis) for which $q_\mathcal{A}(I) \leq \tau \cdot \text{Best}(I)$, where $\tau$ is on the $x$-axis
and $\text{Best}(I) := \min_{\mathcal{A}' \in \mathcal{X}}q_{\mathcal{A}'}(I)$ is the best solution produced by an algorithm $\mathcal{A}' \in \mathcal{X}$ for an instance $I \in \mathcal{I}$.
For $\tau = 1$, the $y$-value indicates the percentage of instances for which an algorithm $\mathcal{A} \in \mathcal{X}$ performs best.
Achieving higher fractions at smaller $\tau$ values is considered better.
The \ding{55}- and \ClockLogo-tick indicates the fraction of instances for which \emph{all} runs of that algorithm produced an imbalanced solution or timed out.
Note that these plots relate the quality of an algorithm to the best solution and thus do not permit a full ranking of
three or more algorithms.

\paragraph{Effectiveness Tests}
Ahkremtsev \etal~\cite{MT-KAHIP} introduce \emph{effectiveness tests} to compare solution quality when two algorithms are given a
similar running time by performing additional repetitions with the faster algorithm.
Following this approach, we generate virtual instances that we compare using performance profiles.
Consider two algorithms $\mathcal{A}$ and $\mathcal{B}$, and an instance $I$.
We first sample one run of both algorithms for instance $I$.
Let $t_\mathcal{A}^1, t_\mathcal{B}^1$ be their running times and assume that $t_\mathcal{A}^1 \le t_\mathcal{B}^1$.
We then sample additional runs without replacement for $\mathcal{A}$ until their accumulated time exceeds $t_\mathcal{B}^1$ or all runs have been sampled.
Let $t_\mathcal{A}^2, \dots, t_\mathcal{A}^l$ denote their running times.
We accept the last run with probability $(t_\mathcal{B}^1 - \sum_{i = 1}^{l-1} t_\mathcal{A}^i) / t_\mathcal{A}^l$ so that the expected time for the sampled runs of $\mathcal{A}$ equals $t_\mathcal{B}^1$.
The solution quality is the minimum out of the sampled runs.
For each instance, we generate $10$ virtual instances.

\paragraph{Statiscal Significance Tests}
We use the Wilcoxon signed-rank test~\cite{WILCOXON} to determine whether the difference of the solutions produced by two algorithms
is statiscally significant. At a $1\%$ significance level ($p \le 0.01$), a Z-score with $|Z| > 2.576$ is deemed significant~\cite[p.~180]{WilcoxonZValues}.

\subsection{Evaluation of Framework Configurations}\label{sec:mt_kahypar_evaluation}

In this section, we present a detailed evaluation of our shared-memory partitioning algorithm \Partitioner{Mt-KaHyPar}. We first describe its different
configurations and compare them regarding solution quality and running time. We then discuss the running times of the different
algorithmic components, present speedups, and evaluate the impact of our optimizations for plain graphs.

\paragraph{Framework Configurations}

The \Partitioner{Mt-KaHyPar} framework provides a multilevel (\Partitioner{Mt-KaHyPar-D}, \textbf{D}efault) and $n$-level partitioning algorithm
(\Partitioner{Mt-KaHyPar-Q}, \textbf{Q}uality), as well as configurations extending them with flow-based refinement
(\Partitioner{Mt-KaHyPar-D-F} and \Partitioner{Mt-KaHyPar-Q-F}, \textbf{F}lows). It also implements a determistic version of the multilevel
algorithm (\Partitioner{Mt-KaHyPar-SDet}, \textbf{S}peed-\textbf{Det}erministic), which does not use the FM algorithm.
The code is written in \CPP$17$, parallelized using the \ShortTBB~parallelization library~\cite{TBB}, and compiled using \gpp{9.2} with the
flags \texttt{-O3 -mtune=native -march=native}.
All of these algorithms have a large number of configuration options and were carefully tuned to provide the best trade-off between solution
quality and running time. However, a detailed parameter tuning study is beyond the scope of this paper.
We already mentioned specific choices for relevant parameters in the text and refer the reader to our conference publications~\cite{MT-KAHYPAR-SDET,MT-KAHYPAR-D,MT-KAHYPAR-Q,MT-KAHYPAR-FLOWS} and
the dissertation of Heuer~\cite[see Table~5.1 on p.~98--99]{HEUER-DIS} for a detailed overview\footnote{Parameter tuning was done on a subset \shortmediumparameterhg~of \mediumhg~that consists
of $100$ instances not contained in \largehg. We compared the quality produced by different partitioning algorithms on \mediumhg~and \shortmediumhg$\setminus$\shortmediumparameterhg~using performance profiles and found
that they do not differ~\cite[see Figure~8.1 on p.~160]{HEUER-DIS}. Thus, we decided to include the parameter tuning instances in the final evaluation to increase the evidence of the following
experimental results.}.

\paragraph{Time-Quality Trade-Off}

\begin{figure}
  \centering
  \begin{minipage}{\textwidth}
    \centering
  \ifpdfplots
    \includegraphics{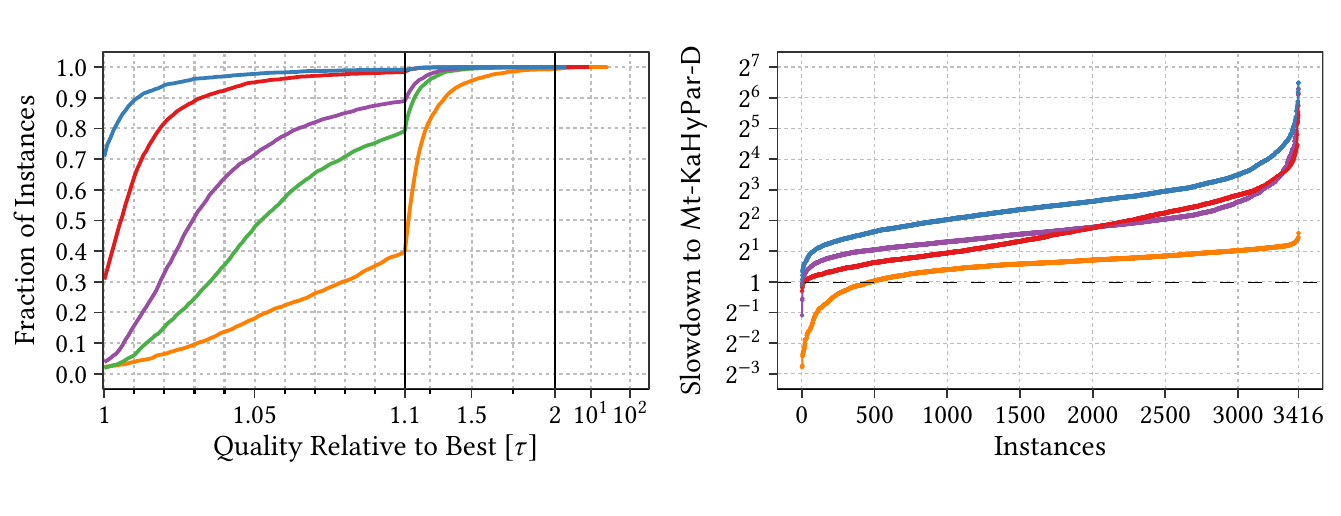}
  \else
    \tikzsetnextfilename{pdf_plots/mt_kahypar_quality_running_time_comparison}%
    \input{tikz_plots/mt_kahypar_quality_running_time_comparison}%
  \fi
  \end{minipage} %
  \begin{minipage}{\textwidth}
    \vspace{-0.25cm}
    \centering
  \ifpdfplots
    \includegraphics{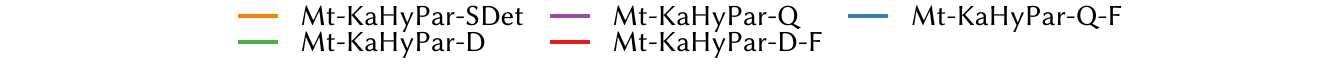}
  \else
    \tikzsetnextfilename{pdf_plots/mt_kahypar_quality_legend}%
    \input{tikz_plots/mt_kahypar_quality_legend}%
  \fi
  \end{minipage} %
  \vspace{-0.5cm}
  \caption{Performance profiles and running times comparing the different configurations
    of \Partitioner{Mt-KaHyPar} executed with $10$ threads on \mediumhg.}\label{fig:mt_kahypar_quality}
\end{figure}

Figure~\ref{fig:mt_kahypar_quality} compares the solution quality of the partitions produced by the different configurations
of \Partitioner{Mt-KaHyPar} and their running times relative to \Partitioner{Mt-KaHyPar-D} on \mediumhg. The
configurations can be ranked from lowest to highest quality as follows: \Partitioner{Mt-KaHyPar-SDet} (\gmeantime~$1.25$s),
\Partitioner{Mt-KaHyPar-D} ($0.88$s), \Partitioner{Mt-KaHyPar-Q} ($2.99$s), \Partitioner{Mt-KaHyPar-D-F} ($2.73$s), and \Partitioner{Mt-KaHyPar-Q-F} ($5.08$s).
The ranking looks similar for running times except for \Partitioner{Mt-KaHyPar-D} which is faster than \Partitioner{Mt-KaHyPar-SDet}.
However, this changes when we compare their running times on the larger instances of \largehg.
Here, our deterministic configuration is faster than \Partitioner{Mt-KaHyPar-D} (\Partitioner{Mt-KaHyPar-SDet}: $3.14$s vs
\Partitioner{Mt-KaHyPar-D}: $4.65$s with 64 threads). For smaller instances, initial partitioning is the most time-consuming component
since we stop coarsening when we reach $160k$ nodes which can be close to the original number of nodes for some instances (e.g., $10240$ nodes for $k = 64$).
To reduce the running time of initial partitioning, \Partitioner{Mt-KaHyPar-D} adaptively adjusts the number
of repetitions of the different algorithms in the bipartitioning portfolio based on their success so far.
For larger instances, the smallest hypergraph is often significantly smaller than the input, and therefore
the running time of initial partitioning becomes negligible compared to the other phases.

The median improvement in solution quality of \Partitioner{Mt-KaHyPar-D} over \Partitioner{Mt-KaHyPar-SDet} is $6\%$, while flow-based refinement (\Partitioner{Mt-KaHyPar-D-F})
improves \Partitioner{Mt-KaHyPar-D} by $4.2\%$ in the median at the cost of a $3$ times slower running time on average.
When we compare the multilevel (\Partitioner{Mt-KaHyPar-D}) and $n$-level partitioning algorithm (\Partitioner{Mt-KaHyPar-Q}), we see that $n$-level partitioning produces
partitions that are $1.9\%$ better than those produced by our multilevel algorithm in the median, but its running time is $3.4$ times slower on average.
The differences in solution quality and running time are less pronounced when both configurations use flow-based refinement (median improvement of \Partitioner{Mt-KaHyPar-Q-F}
over \Partitioner{Mt-KaHyPar-D-F} is $0.6\%$). Note that multilevel partitioning with flow-based refinement produces better partitions than our $n$-level
configuration ($2\%$), while it is also slightly faster.

We have seen that using stronger refinement algorithms leads to substantially better solution quality at the cost of higher running times.
Moreover, traditional multilevel algorithms can produce better partitions than $n$-level algorithms when flow-based refinement is used.

\paragraph{Effectiveness Tests}

Our $n$-level algorithm computes better partitions than our multilevel algorithm without flow-based refinement, but is $3$ times slower on average.
When both configurations use flow-based refinement, the difference in solution quality becomes less pronounced.
We therefore use effectiveness tests to compare \Partitioner{Mt-KaHyPar-D(-F)} and \Partitioner{Mt-KaHyPar-Q(-F)} when both are given the same amount of time by
performing additional repetitions with the faster algorithm until the accumulated running time equals the running time of the slower algorithm.

Figure~\ref{fig:effectiveness_tests} shows the results of these experiments. As we can see, the performance lines of
\Partitioner{Mt-KaHyPar-D(-F)} and \Partitioner{Mt-KaHyPar-Q(-F)} are almost identical in the performance profiles. This means
that \Partitioner{Mt-KaHyPar-D(-F)} computes partitions of comparable quality to its $n$-level counterpart when we give more time for
additional repetitions.

In contrast to the prevalent perception in the literature that more levels lead to better partitioning results~\cite{AlpertHK97,KAHYPAR-DIS,Saab95},
we showed that already a logarithmic numbers of levels suffices to compute solutions of high quality.
However, we still see a large potential in the $n$-level scheme as it provides a greater design space for future improvements.

\begin{figure}
  \centering
  \begin{minipage}{\textwidth}
    \centering
  \ifpdfplots
    \includegraphics{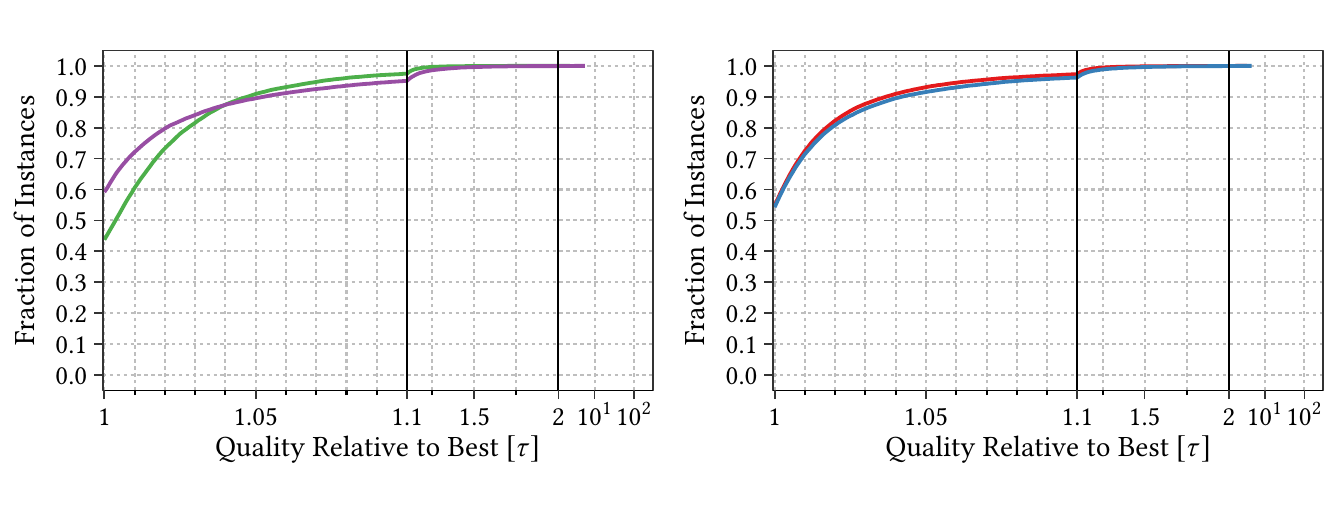}
  \else
    \tikzsetnextfilename{pdf_plots/effectiveness_tests_mt_kahypar}%
    \input{tikz_plots/effectiveness_tests_mt_kahypar}%
  \fi
  \end{minipage} %
  \begin{minipage}{\textwidth}
    \vspace{-0.25cm}
    \centering
  \ifpdfplots
    \includegraphics{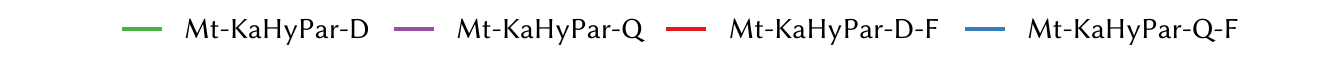}
  \else
    \tikzsetnextfilename{pdf_plots/mt_kahypar_effectiveness_legend}%
    \input{tikz_plots/mt_kahypar_effectiveness_legend}%
  \fi
  \end{minipage} %
  \vspace{-0.5cm}
  \caption{Effectiveness tests comparing \Partitioner{Mt-KaHyPar-D} and
    \Partitioner{Mt-KaHyPar-Q} (left), and both configurations that extend them with flow-based refinement (right) on \mediumhg.}
	\label{fig:effectiveness_tests}
\end{figure}

\paragraph{Running Time of Algorithmic Components}
We now analyze the running times of the different algorithmic components of \Partitioner{Mt-KaHyPar} on \largehg\footnote{Since initial
partitioning uses most of the other components within multilevel recursive bipartitioning,
we evaluate running times on the larger instances of \largehg~such that initial partitioning becomes
less time-consuming as explained earlier. We evaluated the solution quality of \Partitioner{Mt-KaHyPar}'s different configurations on \mediumhg~due to the
effectiveness tests, which require a large number of repetitions per instance (10 repetitions on \mediumhg~vs~3 repetitions on \largehg.)}.
Figure~\ref{fig:component_running_times} shows the fraction of instances (x-axis) for which the share of a component on the total execution
time is $\ge y\%$ for each configuration of \Partitioner{Mt-KaHyPar}. The intersection of $x = 0.5$ with the line of a component is the median share of the component on the
overall partitioning time.

\begin{figure}
  \centering
  \begin{minipage}{\textwidth}
    \centering
  \ifpdfplots
    \includegraphics{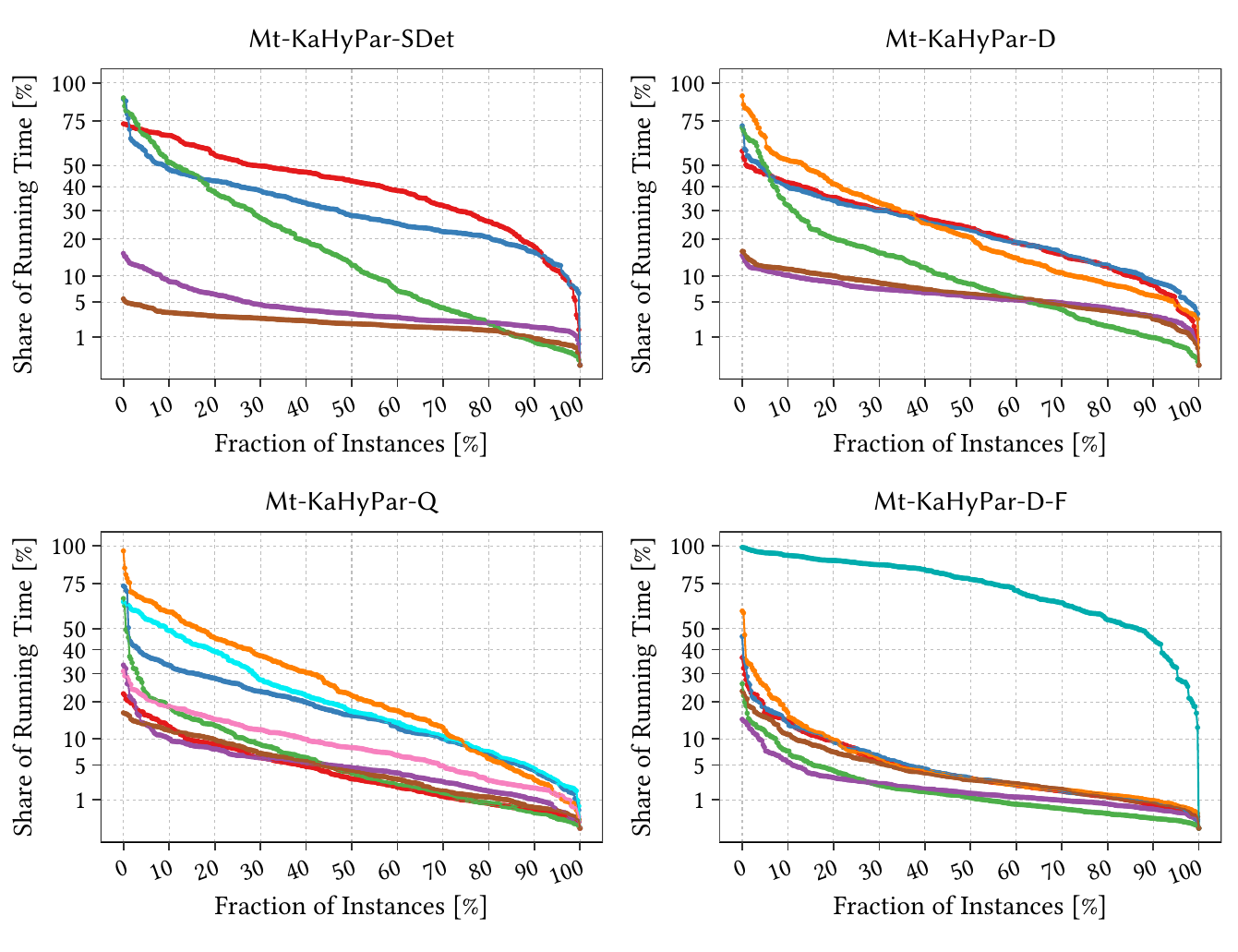}
  \else
    \tikzsetnextfilename{pdf_plots/mt_kahypar_component_running_time}%
    \input{tikz_plots/mt_kahypar_component_running_time}%
  \fi
  \end{minipage} %
  \begin{minipage}{\textwidth}
    \vspace{-0.1cm}
    \centering
  \ifpdfplots
    \includegraphics{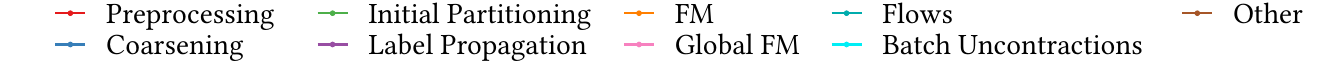}
  \else
    \tikzsetnextfilename{pdf_plots/mt_kahypar_component_legend}%
    \input{tikz_plots/mt_kahypar_component_legend}%
  \fi
  \end{minipage} %
  \vspace{-0.25cm}
  \caption{Running time shares of the algorithmic components on the total execution time of
  the different configurations of \Partitioner{Mt-KaHyPar}. For \Partitioner{Mt-KaHyPar-Q}, label propagation and
  FM corresponds to their localized versions that run after each batch uncontraction, while \emph{Global FM} refers to the
  FM version that runs on the entire hypergraph after restoring identical nets.}
	\label{fig:component_running_times}
\end{figure}

The most time-consuming components of \Partitioner{Mt-KaHyPar-D} are preprocessing (consisting of the community detection algorithm
presented in Section~\ref{sec:community_detection}), coarsening,
and the FM algorithm. These components have similar shares on the total partitioning time, which is between
$21\%$ and $23\%$ in the median. However, there are some long-running outliers for the FM algorithm
on instances with many large hyperedges. Here, the FM searches tend to move more nodes due to many zero-gain moves.
The median share of initial partitioning on the total execution time is $8.3\%$. Longer running times can be observed for
instances where we do not reach the contraction limit as, e.g., social networks with highly-skewed node degree distributions.
The running time of label propagation is negligible on most of the instances.

In the deterministic version of \Partitioner{Mt-KaHyPar}, preprocessing (median share on the total execution time is $42.7\%$) and
coarsening ($28.1\%$) takes the most time, while flow-based refinement ($77.8\%$) dominates the running time of \Partitioner{Mt-KaHyPar-D-F}
(the same is holds for \Partitioner{Mt-KaHyPar-Q-F}, which is why it is omitted in the plot).
In our $n$-level partitioning algorithm, the most time-consuming components are coarsening ($16\%$), batch uncontractions ($17.3\%$), and the localized version
of the FM algorithm ($22.1\%$).

\DTLloaddb[noheader, keys={key,value}]{speedup}{data/scalability.dat}

\begin{table}[!t]
  \centering
  \caption{Geometric mean speedups of the total execution time, preprocessing (P), coarsening (C), initial partitioning (IP), and uncoarsening (UC)
           of the different configurations of \Partitioner{Mt-KaHyPar} over \emph{all} instances and instances with a single-threaded time $\ge 100$s.  }
  \label{tbl:speedups}
  \vspace{-0.25cm}
  \begin{tabular}{lrrrrrrrrr}
     & & \multicolumn{2}{c}{\Partitioner{Mt-KaHyPar-SDet}} & \multicolumn{2}{c}{\Partitioner{Mt-KaHyPar-D}} & \multicolumn{2}{c}{\Partitioner{Mt-KaHyPar-Q}} & \multicolumn{2}{c}{\Partitioner{Mt-KaHyPar-D-F}} \\
    \multicolumn{2}{r}{Num. Threads} & All & $\ge 100$s  & All & $\ge 100$s & All & $\ge 100$s & All & $\ge 100$s \\
    \midrule
    \multirow{3}{*}{Total}
      & 4  & $\placeholder{speedup}{deterministic_total_speed_up_t_4_0}$ & $\placeholder{speedup}{deterministic_total_speed_up_t_4_100}$
           & $\placeholder{speedup}{default_total_speed_up_t_4_0}$ & $\placeholder{speedup}{default_total_speed_up_t_4_100}$
           & $\placeholder{speedup}{quality_total_speed_up_t_4_0}$ & $\placeholder{speedup}{quality_total_speed_up_t_4_100}$
           & $\placeholder{speedup}{default_flows_total_speed_up_t_4_0}$ & $\placeholder{speedup}{default_flows_total_speed_up_t_4_100}$ \\
      & 16 & $\placeholder{speedup}{deterministic_total_speed_up_t_16_0}$ & $\placeholder{speedup}{deterministic_total_speed_up_t_16_100}$
           & $\placeholder{speedup}{default_total_speed_up_t_16_0}$ & $\placeholder{speedup}{default_total_speed_up_t_16_100}$
           & $\placeholder{speedup}{quality_total_speed_up_t_16_0}$ & $\placeholder{speedup}{quality_total_speed_up_t_16_100}$
           & $\placeholder{speedup}{default_flows_total_speed_up_t_16_0}$ & $\placeholder{speedup}{default_flows_total_speed_up_t_16_100}$ \\
      & 64 & $\placeholder{speedup}{deterministic_total_speed_up_t_64_0}$ & $\placeholder{speedup}{deterministic_total_speed_up_t_64_100}$
           & $\placeholder{speedup}{default_total_speed_up_t_64_0}$ & $\placeholder{speedup}{default_total_speed_up_t_64_100}$
           & $\placeholder{speedup}{quality_total_speed_up_t_64_0}$ & $\placeholder{speedup}{quality_total_speed_up_t_64_100}$
           & $\placeholder{speedup}{default_flows_total_speed_up_t_64_0}$ & $\placeholder{speedup}{default_flows_total_speed_up_t_64_100}$ \\
    \midrule
    \multirow{3}{*}{P}
      & 4  & $\placeholder{speedup}{deterministic_preprocessing_speed_up_t_4_0}$ & $\placeholder{speedup}{deterministic_preprocessing_speed_up_t_4_100}$
           & $\placeholder{speedup}{default_preprocessing_speed_up_t_4_0}$ & $\placeholder{speedup}{default_preprocessing_speed_up_t_4_100}$
           & $\placeholder{speedup}{quality_preprocessing_speed_up_t_4_0}$ & $\placeholder{speedup}{quality_preprocessing_speed_up_t_4_100}$
           & $\placeholder{speedup}{default_flows_preprocessing_speed_up_t_4_0}$ & $\placeholder{speedup}{default_flows_preprocessing_speed_up_t_4_100}$ \\
      & 16 & $\placeholder{speedup}{deterministic_preprocessing_speed_up_t_16_0}$ & $\placeholder{speedup}{deterministic_preprocessing_speed_up_t_16_100}$
           & $\placeholder{speedup}{default_preprocessing_speed_up_t_16_0}$ & $\placeholder{speedup}{default_preprocessing_speed_up_t_16_100}$
           & $\placeholder{speedup}{quality_preprocessing_speed_up_t_16_0}$ & $\placeholder{speedup}{quality_preprocessing_speed_up_t_16_100}$
           & $\placeholder{speedup}{default_flows_preprocessing_speed_up_t_16_0}$ & $\placeholder{speedup}{default_flows_preprocessing_speed_up_t_16_100}$ \\
      & 64 & $\placeholder{speedup}{deterministic_preprocessing_speed_up_t_64_0}$ & $\placeholder{speedup}{deterministic_preprocessing_speed_up_t_64_100}$
           & $\placeholder{speedup}{default_preprocessing_speed_up_t_64_0}$ & $\placeholder{speedup}{default_preprocessing_speed_up_t_64_100}$
           & $\placeholder{speedup}{quality_preprocessing_speed_up_t_64_0}$ & $\placeholder{speedup}{quality_preprocessing_speed_up_t_64_100}$
           & $\placeholder{speedup}{default_flows_preprocessing_speed_up_t_64_0}$ & $\placeholder{speedup}{default_flows_preprocessing_speed_up_t_64_100}$ \\
    \midrule
    \multirow{3}{*}{C}
      & 4  & $\placeholder{speedup}{deterministic_coarsening_speed_up_t_4_0}$ & $\placeholder{speedup}{deterministic_coarsening_speed_up_t_4_100}$
           & $\placeholder{speedup}{default_coarsening_speed_up_t_4_0}$ & $\placeholder{speedup}{default_coarsening_speed_up_t_4_100}$
           & $\placeholder{speedup}{quality_coarsening_speed_up_t_4_0}$ & $\placeholder{speedup}{quality_coarsening_speed_up_t_4_100}$
           & $\placeholder{speedup}{default_flows_coarsening_speed_up_t_4_0}$ & $\placeholder{speedup}{default_flows_coarsening_speed_up_t_4_100}$ \\
      & 16 & $\placeholder{speedup}{deterministic_coarsening_speed_up_t_16_0}$ & $\placeholder{speedup}{deterministic_coarsening_speed_up_t_16_100}$
           & $\placeholder{speedup}{default_coarsening_speed_up_t_16_0}$ & $\placeholder{speedup}{default_coarsening_speed_up_t_16_100}$
           & $\placeholder{speedup}{quality_coarsening_speed_up_t_16_0}$ & $\placeholder{speedup}{quality_coarsening_speed_up_t_16_100}$
           & $\placeholder{speedup}{default_flows_coarsening_speed_up_t_16_0}$ & $\placeholder{speedup}{default_flows_coarsening_speed_up_t_16_100}$ \\
      & 64 & $\placeholder{speedup}{deterministic_coarsening_speed_up_t_64_0}$ & $\placeholder{speedup}{deterministic_coarsening_speed_up_t_64_100}$
           & $\placeholder{speedup}{default_coarsening_speed_up_t_64_0}$ & $\placeholder{speedup}{default_coarsening_speed_up_t_64_100}$
           & $\placeholder{speedup}{quality_coarsening_speed_up_t_64_0}$ & $\placeholder{speedup}{quality_coarsening_speed_up_t_64_100}$
           & $\placeholder{speedup}{default_flows_coarsening_speed_up_t_64_0}$ & $\placeholder{speedup}{default_flows_coarsening_speed_up_t_64_100}$ \\
    \midrule
    \multirow{3}{*}{IP}
      & 4  & $\placeholder{speedup}{deterministic_initial_partitioning_speed_up_t_4_0}$ & $\placeholder{speedup}{deterministic_initial_partitioning_speed_up_t_4_100}$
           & $\placeholder{speedup}{default_initial_partitioning_speed_up_t_4_0}$ & $\placeholder{speedup}{default_initial_partitioning_speed_up_t_4_100}$
           & $\placeholder{speedup}{quality_initial_partitioning_speed_up_t_4_0}$ & $\placeholder{speedup}{quality_initial_partitioning_speed_up_t_4_100}$
           & $\placeholder{speedup}{default_flows_initial_partitioning_speed_up_t_4_0}$ & $\placeholder{speedup}{default_flows_initial_partitioning_speed_up_t_4_100}$ \\
      & 16 & $\placeholder{speedup}{deterministic_initial_partitioning_speed_up_t_16_0}$ & $\placeholder{speedup}{deterministic_initial_partitioning_speed_up_t_16_100}$
           & $\placeholder{speedup}{default_initial_partitioning_speed_up_t_16_0}$ & $\placeholder{speedup}{default_initial_partitioning_speed_up_t_16_100}$
           & $\placeholder{speedup}{quality_initial_partitioning_speed_up_t_16_0}$ & $\placeholder{speedup}{quality_initial_partitioning_speed_up_t_16_100}$
           & $\placeholder{speedup}{default_flows_initial_partitioning_speed_up_t_16_0}$ & $\placeholder{speedup}{default_flows_initial_partitioning_speed_up_t_16_100}$ \\
      & 64 & $\placeholder{speedup}{deterministic_initial_partitioning_speed_up_t_64_0}$ & $\placeholder{speedup}{deterministic_initial_partitioning_speed_up_t_64_100}$
           & $\placeholder{speedup}{default_initial_partitioning_speed_up_t_64_0}$ & $\placeholder{speedup}{default_initial_partitioning_speed_up_t_64_100}$
           & $\placeholder{speedup}{quality_initial_partitioning_speed_up_t_64_0}$ & $\placeholder{speedup}{quality_initial_partitioning_speed_up_t_64_100}$
           & $\placeholder{speedup}{default_flows_initial_partitioning_speed_up_t_64_0}$ & $\placeholder{speedup}{default_flows_initial_partitioning_speed_up_t_64_100}$ \\
    \midrule
    \multirow{3}{*}{UC}
      & 4  & $\placeholder{speedup}{deterministic_uncoarsening_speed_up_t_4_0}$ & $\placeholder{speedup}{deterministic_uncoarsening_speed_up_t_4_100}$
           & $\placeholder{speedup}{default_uncoarsening_speed_up_t_4_0}$ & $\placeholder{speedup}{default_uncoarsening_speed_up_t_4_100}$
           & $\placeholder{speedup}{quality_uncoarsening_speed_up_t_4_0}$ & $\placeholder{speedup}{quality_uncoarsening_speed_up_t_4_100}$
           & $\placeholder{speedup}{default_flows_uncoarsening_speed_up_t_4_0}$ & $\placeholder{speedup}{default_flows_uncoarsening_speed_up_t_4_100}$ \\
      & 16 & $\placeholder{speedup}{deterministic_uncoarsening_speed_up_t_16_0}$ & $\placeholder{speedup}{deterministic_uncoarsening_speed_up_t_16_100}$
           & $\placeholder{speedup}{default_uncoarsening_speed_up_t_16_0}$ & $\placeholder{speedup}{default_uncoarsening_speed_up_t_16_100}$
           & $\placeholder{speedup}{quality_uncoarsening_speed_up_t_16_0}$ & $\placeholder{speedup}{quality_uncoarsening_speed_up_t_16_100}$
           & $\placeholder{speedup}{default_flows_uncoarsening_speed_up_t_16_0}$ & $\placeholder{speedup}{default_flows_uncoarsening_speed_up_t_16_100}$ \\
      & 64 & $\placeholder{speedup}{deterministic_uncoarsening_speed_up_t_64_0}$ & $\placeholder{speedup}{deterministic_uncoarsening_speed_up_t_64_100}$
           & $\placeholder{speedup}{default_uncoarsening_speed_up_t_64_0}$ & $\placeholder{speedup}{default_uncoarsening_speed_up_t_64_100}$
           & $\placeholder{speedup}{quality_uncoarsening_speed_up_t_64_0}$ & $\placeholder{speedup}{quality_uncoarsening_speed_up_t_64_100}$
           & $\placeholder{speedup}{default_flows_uncoarsening_speed_up_t_64_0}$ & $\placeholder{speedup}{default_flows_uncoarsening_speed_up_t_64_100}$ \\
  \end{tabular}
\end{table}

\paragraph{Scalability}
In Figure~\ref{fig:scalability_mt_kahypar} and~\ref{fig:scalability_flows}, and Table~\ref{tbl:speedups}, we summarize self-relative speedups
of \Partitioner{Mt-KaHyPar} for each configuration and the different phases of the multilevel scheme with an increasing number of threads $t \in \{1,4,16,64\}$.
The scalability experiments run on \largehg. However, we used a subset for \Partitioner{Mt-KaHyPar-Q} (77 out of 94 hypergraphs) and \Partitioner{Mt-KaHyPar-D-F} (76 out of 94 hypergraphs) to ensure
reasonable running times. This set consists of instances where \Partitioner{Mt-KaHyPar-Q}/\Partitioner{-D-F} was able to finish in under 600 seconds with $64$ threads
for all tested values of $k$. The experiment still took $6$ weeks to complete for each configuration.
Note that we only rerun the experiments for \Partitioner{Mt-KaHyPar-SDet}/\Partitioner{-D} for this work, while the speedups of \Partitioner{Mt-KaHyPar-Q}/\Partitioner{-D-F}
are based on the data from the corresponding conference publications~\cite{MT-KAHYPAR-Q,MT-KAHYPAR-FLOWS} due to the high time requirements.
In the plot, we represent the speedup (y-axis)
of each instance as a point and the centered rolling geometric mean over the points with a window size of $25$ as a line. The x-axis shows the single-threaded
running time of the corresponding configuration resp.~component\footnote{In contrast
to many other publications in the parallel partitioning community, we do not correlate
speedups to any of the common hypergraph metrics (such as the number of pins).
We found that the running time often depends on a variety of different factors. Fitting
suitable parameters for a combination of the metrics seem much more complicated
than plotting against sequential running time, which is often nicely correlated with
speedups.}.

\begin{figure}
  \centering
  \begin{minipage}{\textwidth}
    \centering
  \ifpdfplots
    \includegraphics{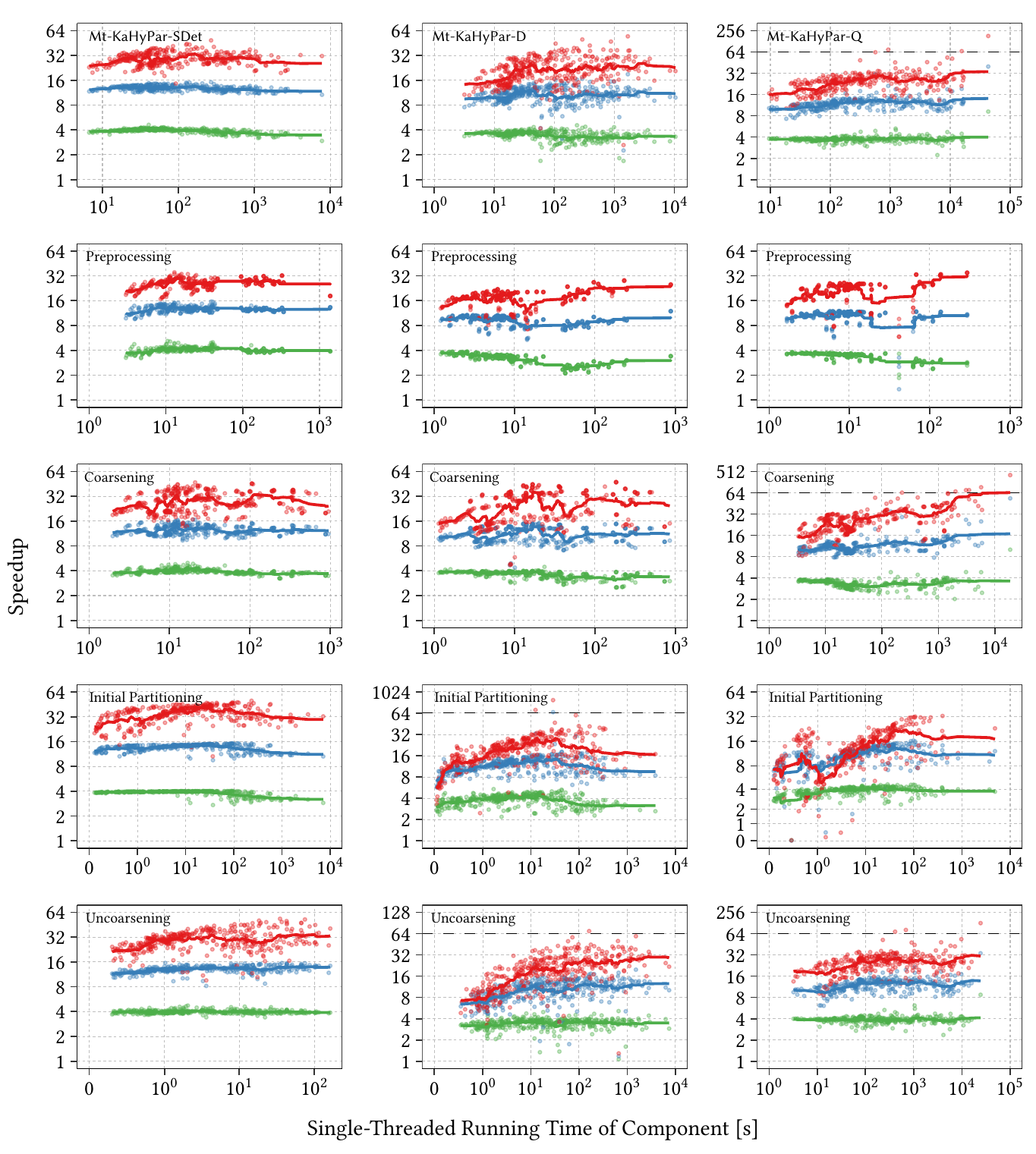}
  \else
    \tikzsetnextfilename{pdf_plots/scalability_mt_kahypar}%
    \input{tikz_plots/scalability_mt_kahypar}%
  \fi
  \end{minipage} %
  \begin{minipage}{\textwidth}
    \vspace{0.1cm}
    \centering
  \ifpdfplots
    \includegraphics{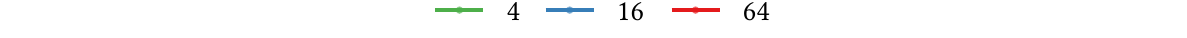}
  \else
    \tikzsetnextfilename{pdf_plots/scalability_legend}%
    \input{tikz_plots/scalability_legend}%
  \fi
  \end{minipage} %
  \vspace{-0.5cm}
  \caption{Speedups of \Partitioner{Mt-KaHyPar-SDet} (left), \Partitioner{Mt-KaHyPar-D} (middle), and \Partitioner{Mt-KaHyPar-Q} (right).}
	\label{fig:scalability_mt_kahypar}
\end{figure}

\begin{figure}
  \centering
  \begin{minipage}{\textwidth}
    \centering
  \ifpdfplots
    \includegraphics{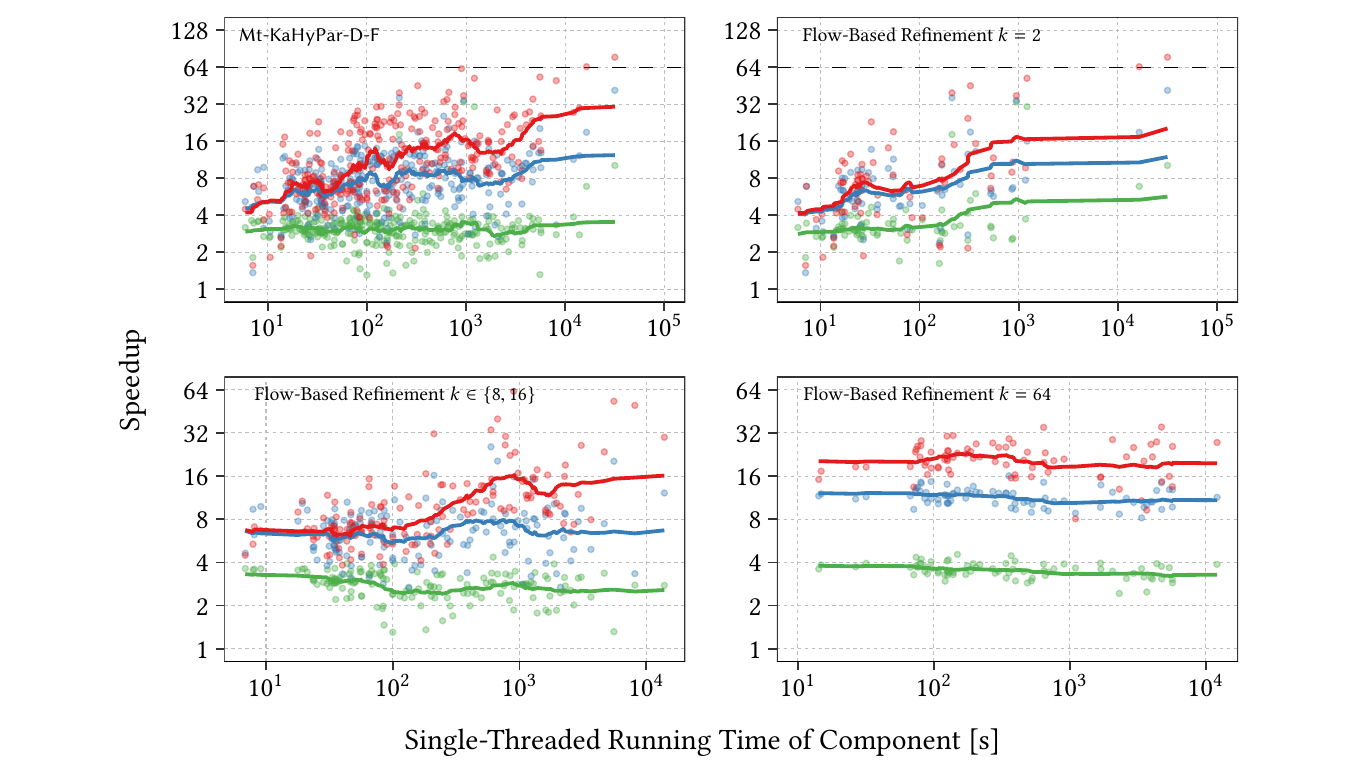}
  \else
    \tikzsetnextfilename{pdf_plots/scalability_flows}%
    \input{tikz_plots/scalability_flows}%
  \fi
  \end{minipage} %
  \begin{minipage}{\textwidth}
    \vspace{0.25cm}
    \centering
  \ifpdfplots
    \includegraphics{pdf_plots/scalability_legend.pdf}
  \else
    \tikzsetnextfilename{pdf_plots/scalability_legend}%
    \input{tikz_plots/scalability_legend}%
  \fi
  \end{minipage} %
  \vspace{-0.5cm}
  \caption{Speedups of \Partitioner{Mt-KaHyPar-D-F} and flow-based refinement for different values of $k$.}
	\label{fig:scalability_flows}
\end{figure}

\begin{figure}
  \centering
  \begin{minipage}{\textwidth}
    \centering
  \ifpdfplots
    \includegraphics{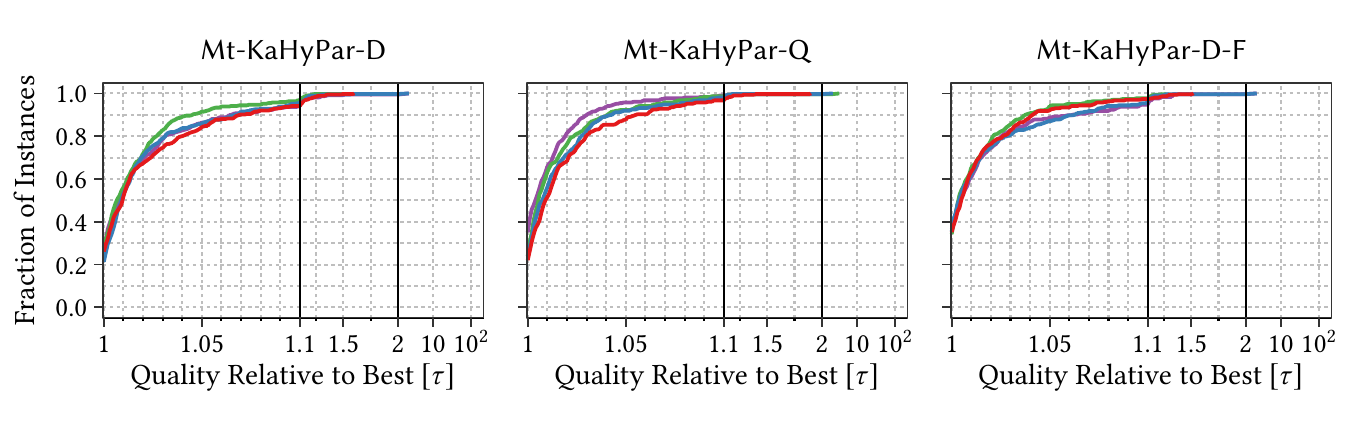}
  \else
    \tikzsetnextfilename{pdf_plots/mt_kahypar_scaling_quality}%
    \input{tikz_plots/mt_kahypar_scaling_quality}%
  \fi
  \end{minipage} %
  \begin{minipage}{\textwidth}
    \vspace{-0.1cm}
    \centering
  \ifpdfplots
    \includegraphics{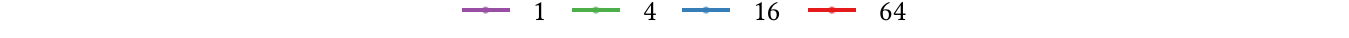}
  \else
    \tikzsetnextfilename{pdf_plots/scaling_quality_legend}%
    \input{tikz_plots/scaling_quality_legend}%
  \fi
  \end{minipage} %
  \vspace{-0.5cm}
  \caption{Performance profiles comparing the solution quality of \Partitioner{Mt-KaHyPar}
    with an increasing number of threads on \largehg.}
	\label{fig:scalability_quality}
\end{figure}

The overall geometric mean speedup of \Partitioner{Mt-KaHyPar-D} is $\placeholder{speedup}{default_total_speed_up_t_4_0}$ for $t = 4$,
$\placeholder{speedup}{default_total_speed_up_t_16_0}$ for $t = 16$, and $\placeholder{speedup}{default_total_speed_up_t_64_0}$ for $t = 64$.
If we only consider instances with a single-threaded running time $\ge 100$s, the geometric mean speedup increases to $\placeholder{speedup}{default_total_speed_up_t_64_100}$
for $t = 64$. For $t = 4$, the speedup is at least $3$ on $\placeholder{speedup}{default_total_speed_up_larger_than_3_t_4}\%$ of the instances.
The community detection algorithm (refered to as \emph{preprocessing}) and coarsening share many similarities in their implementation and both
show reliable speedups for an increasing number of threads. For initial partitioning and the uncoarsening phase, we observe that longer single-threaded
execution times leads to substantially better speedups. The most time-consuming component of the uncoarsening phase is the FM algorithm.
The geometric mean speedup of the FM algorithm is $\placeholder{speedup}{default_fm_speed_up_t_64_0}$ for $t = 64$, which increases to
$\placeholder{speedup}{default_fm_speed_up_t_64_100}$ for instances with sequential time $\ge 100$s.

If we compare the speedups of \Partitioner{Mt-KaHyPar-SDet} to its non-deterministic counterpart \Partitioner{Mt-KaHyPar-D}, we see that it achieves much more reliable speedups.
Especially, the speedups of initial partitioning increases substantially with a geometric mean speedup of
$\placeholder{speedup}{deterministic_initial_partitioning_speed_up_t_64_0}$ for $t = 64$.
Since \Partitioner{Mt-KaHyPar-SDet} does not adaptively adjust the number of repetitions in the bipartitioning
portfolio, it performs more work in the initial partitioning phase and is not affected by non-deterministic decisions, which increases its scalability
(geometric mean running time of initial partitioning is $9.36$s in \Partitioner{Mt-KaHyPar-SDet} vs $4.23$s in \Partitioner{Mt-KaHyPar-D} for $t = 1$).
The overall geometric mean speedup of \Partitioner{Mt-KaHyPar-SDet} is $\placeholder{speedup}{deterministic_total_speed_up_t_4_0}$ for $t = 4$,
$\placeholder{speedup}{deterministic_total_speed_up_t_16_0}$ for $t = 16$, and $\placeholder{speedup}{deterministic_total_speed_up_t_64_0}$ for $t = 64$.

The coarsening and batch uncontraction algorithm are the components that differentiate our $n$-level partitioning algorithm \Partitioner{Mt-KaHyPar-Q}
from the other multilevel algorithms.
Both components exhibit good speedups, while coarsening (geometric mean speedup is
$\placeholder{speedup}{quality_coarsening_speed_up_t_64_0}$ for $t = 64$) scales slightly better than the batch uncontractions
($\placeholder{speedup}{quality_batch_uncontractions_speed_up_t_64_0}$ for $t = 64$). Moreover, the speedups of the localized version of
FM algorithm that runs after each batch uncontraction operation are less pronounced than the speedups of the FM algorithm in \Partitioner{Mt-KaHyPar-D}
(geometric mean speedup $\placeholder{speedup}{quality_fm_speed_up_t_64_0}$ vs $\placeholder{speedup}{default_fm_speed_up_t_64_0}$ for $t = 64$).
Note that we also observe super-linear speedups, which are caused by non-deterministic coarsening decisions.
The geometric mean speedup of \Partitioner{Mt-KaHyPar-Q} is $\placeholder{speedup}{quality_total_speed_up_t_4_0}$ for $t = 4$,
$\placeholder{speedup}{quality_total_speed_up_t_16_0}$ for $t = 16$, and $\placeholder{speedup}{quality_total_speed_up_t_64_0}$ for $t = 64$.

\Partitioner{Mt-KaHyPar-D-F} extends \Partitioner{Mt-KaHyPar-D} with flow-based refinement.
We therefore only show speedups for this component in Figure~\ref{fig:scalability_flows}.
Unfortunely, the speedups are less promising as for the other configurations.
The geometric mean speedup of \Partitioner{Mt-KaHyPar-D-F} is $\placeholder{speedup}{default_flows_total_speed_up_t_4_0}$ for $t = 4$,
$\placeholder{speedup}{default_flows_total_speed_up_t_16_0}$ for $t = 16$, and $\placeholder{speedup}{default_flows_total_speed_up_t_64_0}$ for $t = 64$.
However, we achieve better speedups for larger values of $k$ where all parallelism is leveraged in the scheduling algorithm, and none in
the \Algorithm{FlowCutter} algorithm. For $k = 2$, the scalability depends on our parallel maximum flow algorithm for which we observe similar
speedups as reported in Ref.~\cite{BaumstarkSyncPushRelabel} -- the work on which our parallel implementation is based on.
Thus, increasing the scalability of maximum flow algorithms is an important avenue for future research.

In Figure~\ref{fig:scalability_quality}, we compare the solution quality of the different configurations when increasing the number
of threads. We can see that using more threads adversely affects the solution quality of the partitions produced by
\Partitioner{Mt-KaHyPar-Q}, but only by a small margin (solution are $0.4\%$ better with one compared to 64 threads).
\Partitioner{Mt-KaHyPar-D} and \Partitioner{Mt-KaHyPar-D-F} produce comparable solutions when increasing the number of threads.

%

\paragraph{Effects of Graph Optimizations}

In Section~\ref{sec:parallel_graph}, we presented optimized data structures for graph partitioning used as a drop-in replacement
in our partitioning algorithm. Figure~\ref{fig:graph_vs_hypergraph} shows their impact on the solution quality and speedups for
different algorithmic components of \Partitioner{Mt-KaHyPar-D} on \largegr. As it can be seen, replacing our hypergraph with
the graph data structures does not adversely affect the solution quality of \Partitioner{Mt-KaHyPar-D} as both performance lines
are almost identical and converges quickly towards $y = 1$.

\begin{figure}
  \centering
  \ifpdfplots
    \includegraphics{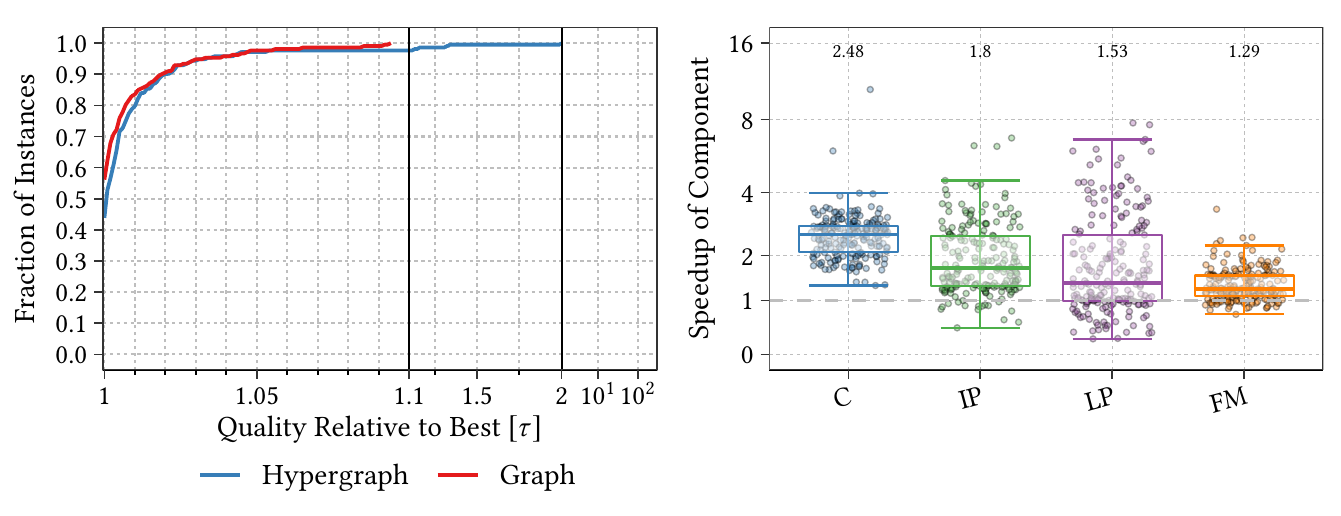}
  \else
    \tikzsetnextfilename{pdf_plots/graph_vs_hypergraph}%
    \input{tikz_plots/graph_vs_hypergraph}%
  \fi
  \vspace{-0.5cm}
  \caption{Performance profile (left) and speedups (right) of coarsening (C), initial partitioning (IP),
    label propagation (LP), and FM refinement comparing \Partitioner{Mt-KaHyPar-D} with and without our optimized
    graph data structure on \largegr.}
	\label{fig:graph_vs_hypergraph}
\end{figure}

The coarsening algorithm benefits most from our optimized graph data structure (geometric mean speedup $2.48$).
The hypergraph version computes a clustering of the nodes by iterating over the pin-lists of nets
to aggregate ratings, and subsequently contract that clustering by collapsing two adjacency arrays (one for the
pin-lists and one for the incident nets). The cache-friendly memory layout for graphs (only one adjacency array for neighbors)
leads to faster access times to enumerate neighbors and to a simpler contraction algorithm. The FM algorithm has the least promising
speedups ($1.29$). One of the most time-consuming parts of the algorithm is retrieving and updating entries from the gain
table, which has the same asymptotic worst-case complexity in both implementations. The initial partitioning phase ($1.8$)
has better speedups than both refinement algorithms but slightly worse speedups than coarsening. This can be explained by
the fact that initial partitioning uses all algorithms within multilevel recursive bipartitioning.

The overall speedup of the graph version of \Partitioner{Mt-KaHyPar-D} over its hypergraph counterpart is $1.75$ on average
(\gmeantime~$10.8$s vs $18.94$s). In the dissertation of Heuer~\cite[p.~150--153]{HEUER-DIS}, we also present an optimized
graph data structure for $n$-level partitioning, which accelerates \Partitioner{Mt-KaHyPar-Q} by a factor of $1.91$ on average
($97.45$s vs $186.32$s). We note that flow-based refinement requires further engineering efforts to handle large graphs efficiently
(currently works only on medium-sized graphs in reasonable running times).
The \Partitioner{FlowCutter} algorithm operates implicitly on the Lawler expansion~\cite{Lawler}
($n + 2m$ nodes and $5m$ edges) of the input (hyper)graph. An optimized version would omit this transformation and
compute a maximum flow directly on the graph representation. However, this issue will be addressed in a future release
of \Partitioner{Mt-KaHyPar}.

\subsection{Comparison to Other Systems}\label{sec:comparison}

We now compare \Partitioner{Mt-KaHyPar} to existing partitioning algorithms to see if it can improve the state-of-the-art.
We did an extensive research on publicly available partitioning tools and were able to include $25$ different sequential and parallel
graph and hypergraph partitioners that we compare on over 800 graphs and hypergraphs. Thus, to the best of our knowledge, this study represents
the most comprehensive comparison of partitioning algorithms to date.
We primarly focus on multilevel algorithms as it has been shown that they provide an excellent trade-off
between solution quality and running time~\cite{DBLP:conf/sc/HendricksonL95,HAUCK}. While there are even faster
partitioning method that omit the multilevel scheme, it has been shown that they are inferior to multilevel
algorithms in terms of solution quality~\cite{HAUCK,KAHYPAR-DIS}. Moreover, algorithms that achieve even higher solution quality than
multilevel algorithms such as evolutionary algorithms~\cite{KAFFPAE,AndreS018}, diffusion-based partitioning \cite{DIBAP,MT-DIBAP,MPI-DIBAP},
and approaches based on integer linear programming~\cite{ILP} would not run in a reasonable time frame on our benchmark sets.

We first provide a description of the partitioning algorithms included in our study and explain how we configured them.
We then identify a subset of Pareto-optimal algorithms to which we then compare \Partitioner{Mt-KaHyPar}\footnote{We made
all experimental results publicly available from \url{https://algo2.iti.kit.edu/heuer/talg/}.}.

\begin{table}[t]
  \centering
  \caption{Listing of graph and hypergraph partitioning algorithms (GP and HGP) included in the experimental evaluation.
    For algorithms publicly available on GitHub, we report the first seven characters of the
    corresponding commit hash indicating the used version.}
  \label{tbl:included_partitioners}
  \begin{tabular}{lp{3cm}rp{3cm}rl}
     & \multicolumn{2}{c}{\bf Sequential} & \multicolumn{3}{c}{\bf Parallel} \\
     & Algorithm & Version & Algorithm & Version & Machine Model \\
    \midrule
    \multirow{6}{*}{\rotatebox[origin=c]{90}{\parbox[c]{1cm}{\centering \bf GP}}} &
      \Partitioner{Metis}~\cite{KarypisK98,DBLP:journals/jpdc/KarypisK98a}  & 5.1.0  & \Partitioner{KaMinPar}~\cite{KAMINPAR} & \githash{29101f61f22b917b5c6c824275040f747f2c4b3b} & Shared-Memory \\
    & \footnotesize{\Partitioner{Metis-R} and \Partitioner{Metis-K}} & & \Partitioner{Mt-Metis}~\cite{MT-METIS,MT-METIS-OPT,MT-METIS-REFINEMENT} & 0.6.0 & Shared-Memory \\
    & \Partitioner{KaFFPa}~\cite{KAFFPA,Schulz-DIS} & \githash{f239f7afd21e73d9db349aeb62d578b4b256b93b} & \Partitioner{ParMetis}~\cite{PARMETIS} & 4.0.3 & Distributed-Memory \\
    & \multicolumn{2}{l}{\footnotesize{\Partitioner{KaFFPa-Fast(S)}/\Partitioner{-Eco(S)}/\Partitioner{-Strong(S)}}} & \Partitioner{Mt-KaHIP}~\cite{MT-KAHIP,MT-KAHIP-DIS} & \githash{30de7371a2348945cbbc1ba0c2b8ae26e0ab6793} & Shared-Memory \\
    & \Partitioner{Scotch}~\cite{PellegriniR96} & 6.1.3 & \Partitioner{ParHIP}~\cite{PARHIP}   & \githash{f239f7afd21e73d9db349aeb62d578b4b256b93b} & Distributed-Memory \\
    & & & \multicolumn{2}{l}{\footnotesize{\Partitioner{ParHIP-Fast} and \Partitioner{ParHIP-Eco}}} \\
    \midrule
    \multirow{7}{*}{\rotatebox[origin=c]{90}{\parbox[c]{1cm}{\centering \bf HGP}}} &
    \Partitioner{PaToH}~\cite{PATOH} & 3.3  & \Partitioner{Zoltan}~\cite{ZOLTAN} & 3.83 & Distributed-Memory \\
    & \multicolumn{2}{l}{\footnotesize{\Partitioner{PaToH-D} and \Partitioner{PaToH-Q}}} & \Partitioner{BiPart}~\cite{BIPART} & \githash{49a59a652bf9492fd1f44d4ecebb1a7af47bbac2} & Shared-Memory \\
    & \Partitioner{hMetis}~\cite{HMETIS,HMETIS-K} & 2.0pre1 \\
    & \multicolumn{2}{l}{\footnotesize{\Partitioner{hMetis-R} and \Partitioner{hMetis-K}}} \\
    & \Partitioner{KaHyPar}~\cite{KAHYPAR-DIS} & \githash{876b7768a92614747e6427e596b89506349c7561} \\
    & \multicolumn{2}{l}{\footnotesize{\Partitioner{KaHyPar-CA}, \Partitioner{$r$KaHyPar}, and \Partitioner{$k$KaHyPar}}} \\
    & \Partitioner{Mondriaan}~\cite{MONDRIAAN} & 4.2.1 \\
  \end{tabular}
\end{table}

\paragraph{Included Algorithms}

Table~\ref{tbl:included_partitioners} lists all partitioning algorithms included in the following experimental evaluation.
Many of these algorithms provide multiple partitioning configurations
offering different trade-offs in running time and solution quality (e.g., \Partitioner{KaFFPa-Fast}/\Partitioner{-Eco}/\Partitioner{-Strong},
or the default (\Partitioner{-D}) and quality preset (\Partitioner{-Q}) of \Partitioner{PaToH}), or are
based on either recursive bipartitioning (e.g., \Partitioner{hMetis-R}) or direct $k$-way partitioning (e.g., \Partitioner{hMetis-K}).
The graph partitioner \Partitioner{KaFFPa} also provides different settings for partitioning \emph{social} networks (\Partitioner{KaFFPa-FastS}/\Partitioner{-EcoS}/\Partitioner{-StrongS}).
Thus, we include all three social configurations as well as their non-social counterparts (\Partitioner{KaFFPa-Fast}/\Partitioner{-Eco}/\Partitioner{-Strong}).
For the $n$-level algorithm \Partitioner{KaHyPar}, we include the recursive bipartitioning (\Partitioner{$r$KaHyPar}) and direct $k$-way version
(\Partitioner{$k$KaHyPar}, which uses similar algorithmic components as \Partitioner{Mt-KaHyPar-Q-F}), as well as a configuration without flow-based refinement (\Partitioner{KaHyPar-CA},
which uses similar algorithmic components as \Partitioner{Mt-KaHyPar-Q}).

Unfortunely, we were not able to include the publicly available versions of \Partitioner{Parkway}~\cite{PARKWAY-2} (distributed-memory),
\Partitioner{PT-Scotch}~\cite{PT-SCOTCH} (distributed-memory), and \Partitioner{Chaco}~\cite{DBLP:conf/sc/HendricksonL95} (sequential). These algorithms
failed with segmentation faults on most instances of our benchmark sets.

\paragraph{Algorithm Configuration}


We configure all graph partitioning algorithms to optimize the edge cut metric, while we optimize the connectivity
metric for hypergraph partitioning. We run \Partitioner{Mt-KaHyPar} using \emph{ten} threads for comparisons to sequential
algorithms as this is a typical number of available cores in a modern commodity workstation.
We add a suffix to the name of parallel algorithms indicating the number of threads
used, e.g., \Partitioner{Mt-KaHyPar 64} for 64 threads. We omit the suffix for sequential algorithms.
For graph partitioning, \Partitioner{Mt-KaHyPar} uses the partition and graph data structure presented in Section~\ref{sec:parallel_graph}.

We use the default settings provided by the authors to configure the different partitioning algorithms. However, for algorithms based on recursive bipartitioning,
we adjust the input imbalance parameter $\varepsilon$ to $\varepsilon' := (1 + \varepsilon)^{\frac{1}{\lceil \log_2{k} \rceil}}$ (based
on Equation~\ref{eq:adaptive_epsilon} by applying it to the first bipartitioning step) when we observed that most of the computed partitions are imbalanced.
This applies to \Partitioner{Metis-R}, \Partitioner{hMetis-R}, and \Partitioner{BiPart}.
We further set \Partitioner{hMetis} to optimize the \emph{sum-of-external-degree} metric $\osoed(\Partition) := \sum_{e \in \cutnets} \lambda(e) \cdot \omega(e)
= \ocon(\Partition) + \ocut(\Partition)$ (connectivity plus cut-net metric) and calculate the connectivity metric accordingly.
We additionally configure \Partitioner{Mt-Metis} to use its hill-scanning refinement algorithm~\cite{MT-METIS-REFINEMENT}.
Moreover, we do not perform multiple repetitions when running \Partitioner{Scotch} or \Partitioner{BiPart} as both do not provide
a command line parameter for setting a seed value.


\begin{table}[t]
  \centering
  \footnotesize
  \caption{Summary of algorithms (first column) outperforming others (second column). It shows the median improvement in the connectivity
           resp.~edge cut metric in percent for each baseline over the outperformed algorithm and the average slowdown of the outperformed
           relative to the baseline algorithm.}
  \label{tbl:outperformed}
  \begin{tabular}{lllcc|llcc}
     & \multicolumn{4}{c|}{\bf Sequential} & \multicolumn{4}{c}{\bf Parallel (64 threads)} \\
     & Base Algo. & Outperformed & Med. [$\%$] & Rel. Slow. & Base Algo. & Outperformed & Med. [$\%$] & Rel. Slow.  \\
     \midrule
     \multirow{5}{*}{\rotatebox[origin=c]{90}{\parbox[c]{1cm}{\centering \bf GP}}} &
     \Partitioner{Metis-K}  & \Partitioner{Metis-R} & 2.9 & 1.4 & \Partitioner{KaMinPar} & \Partitioner{Mt-Metis} & 0 & 9.11 \\
     & \Partitioner{Metis-K} & \Partitioner{KaFFPa-Fast} & 5.8 & 4.3 & \Partitioner{KaMinPar} & \Partitioner{ParMetis} & 4.4 & 211.2 \\
     & \Partitioner{Metis-K} & \Partitioner{KaFFPa-FastS} & 2.2 & 4,79 & \Partitioner{KaMinPar} & \Partitioner{ParHIP-Fast} & 2.8 & 8.18 \\
     & \Partitioner{Metis-K} & \Partitioner{Scotch} & 2.5 & 4.66 & \Partitioner{Mt-KaHIP} & \Partitioner{ParHIP-Eco} & 2.2 & 11.62 \\
     & \Partitioner{KaFFPa-EcoS} & \Partitioner{KaFFPa-Eco} & 3.2 & 1.04 &  \\
     \midrule
     \multirow{4}{*}{\rotatebox[origin=c]{90}{\parbox[c]{1cm}{\centering \bf HGP}}} &
       \Partitioner{PaToH-D} & \Partitioner{Mondriaan} & 0.6 & 5.63 & \Partitioner{Zoltan} & \Partitioner{BiPart} & 69 & 2.31 \\
     & \Partitioner{KaHyPar-CA} & \Partitioner{hMetis-R} & 0.5 & 3.31 \\
     & \Partitioner{KaHyPar-CA} & \Partitioner{hMetis-K} & 2.6 & 2.62 \\
     & \Partitioner{$k$KaHyPar} & \Partitioner{$r$KaHyPar} & 2.1 & $\sim 1$ \\
  \end{tabular}
\end{table}

\paragraph{Identifying Competitors}

Since some of the included algorithms already outperform others with regards to solution quality and running time,
we compare \Partitioner{Mt-KaHyPar} only to a subset of Pareto-optimal partitioning algorithms.
Table~\ref{tbl:outperformed} presents a summary of the results that we used to identify our main competitors.
The data is based on a detailed evaluation that can be found in the dissertation of Heuer~\cite[see Section~8.2 on p.~160--167]{HEUER-DIS}.
We added the performance profiles and running time plots used for this evaluation in Appendix~\ref{appendix:comparison}.
In the table, the algorithms in the second column are outperformed by the algorithms in the first column and are therefore excluded
from the following experimental evaluation.
The included systems can be classified into fast partitioning methods (\Partitioner{PaToH-D}, \Partitioner{Zoltan}, \Partitioner{Metis-K}, and
\Partitioner{KaMinPar}), configurations providing a good trade-off between solution quality and running time (\Partitioner{PaToH-Q}, \Partitioner{KaFFPa-EcoS},
and \Partitioner{Mt-KaHIP}), and high-quality partitioning algorithms (\Partitioner{KaHyPar-CA}, \Partitioner{$k$KaHyPar}, and \Partitioner{KaFFPa-Strong}/\Partitioner{-StrongS}).
To simplify the following evaluation, we compare the high quality algorithms to \Partitioner{Mt-KaHyPar-Q-F} (highest quality configuration)
and all others to \Partitioner{Mt-KaHyPar-D} (fastest configuration).

\paragraph{Comparison to Sequential Systems}

Figure~\ref{fig:sequential_hgp} compares \Partitioner{Mt-KaHyPar} to the sequential hypergraph partitioners
\Partitioner{PaToH} and \Partitioner{KaHyPar} on \mediumhg.
In an individual comparison,
\Partitioner{Mt-KaHyPar-D} (\gmeantime~$0.88$s) computes better partitions than
\Partitioner{PaToH-D} ($1.17$s) and \Partitioner{PaToH-Q} ($5.85$s) on $82.9\%$ and $58.34\%$ of the instances
(median improvement is $6.6\%$ and $1.2\%$)\footnote{It appears that \Partitioner{Mt-KaHyPar-D} performs slightly
worse than \Partitioner{PaToH-Q} in the performance profiles. However, if we would compare them in a performance profile
individually, we would see that the performance line of \Partitioner{Mt-KaHyPar-D} lies strictly above the line of
\Partitioner{PaToH-Q}. We therefore point out that performance profiles do not permit a full ranking between three or more algorithms.},
while it achieves a speedup of $1.32$ \wrt~\Partitioner{PaToH-D}
and $6.6$ \wrt~\Partitioner{PaToH-Q} with ten threads on average.
Thus, \Partitioner{Mt-KaHyPar-D} outperforms \Partitioner{PaToH-D} and \Partitioner{PaToH-Q}.

We can also see that the performance lines of \Partitioner{Mt-KaHyPar-Q-F} and \Partitioner{$k$KaHyPar}
-- the currently best sequential hypergraph partitioning algorithm -- are almost identical, which means that both compute
partitions of comparable solution quality. \Partitioner{Mt-KaHyPar-Q-F} ($5.08$s)
is faster than \Partitioner{KaHyPar-CA} ($28.14$s) and \Partitioner{$k$KaHyPar} ($48.97$s)
on almost all instances with ten threads ($\ge 99\%$). This shows that we achieved
the same solution quality as the currently highest-quality sequential partitioning algorithm, while being almost an
order of magnitude faster with only ten threads.
Moreover, \Partitioner{Mt-KaHyPar-Q-F} is also slightly faster than \Partitioner{PaToH-Q}, while
it computes  better partitions than \Partitioner{PaToH-Q} on $87.7\%$ of the instances
(median improvement is $6.4\%$).

\begin{figure}
  \centering
  \begin{minipage}{\textwidth}
    \centering
  \ifpdfplots
    \includegraphics{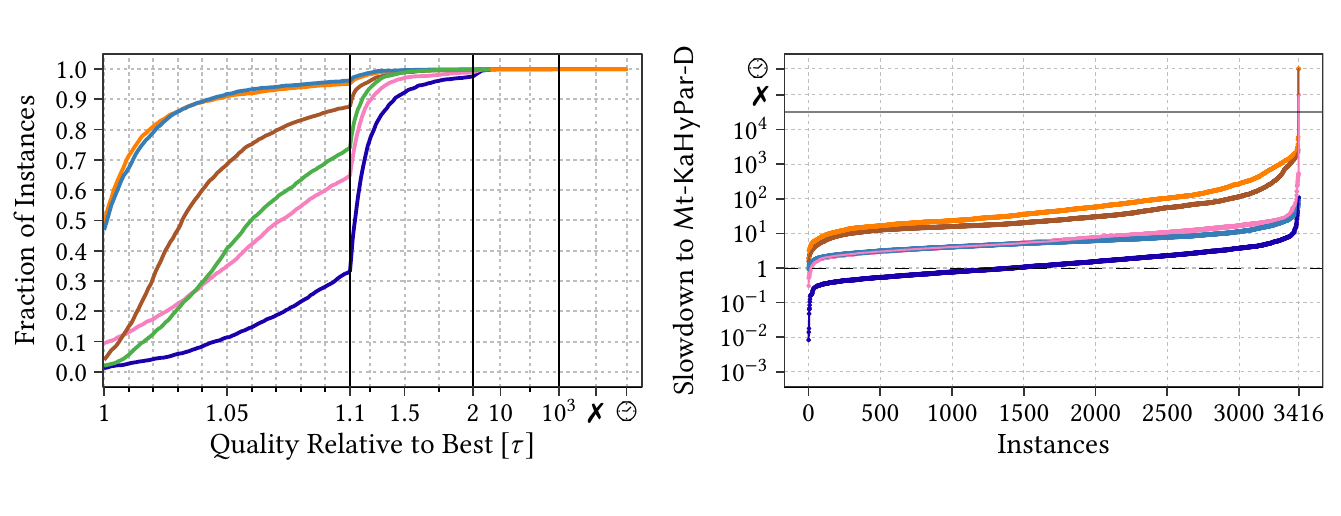}
  \else
    \tikzsetnextfilename{pdf_plots/sequential_hgp_quality}%
    \input{tikz_plots/sequential_hgp_quality}%
  \fi
  \end{minipage} %
  \begin{minipage}{\textwidth}
    \vspace{-0.25cm}
    \centering
  \ifpdfplots
    \includegraphics{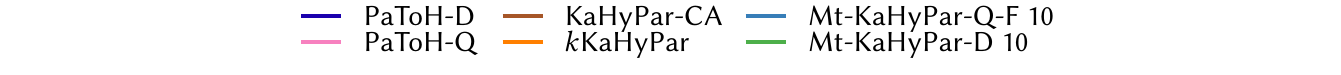}
  \else
    \tikzsetnextfilename{pdf_plots/sequential_hgp_quality_legend}%
    \input{tikz_plots/sequential_hgp_quality_legend}%
  \fi
  \end{minipage} %
  \vspace{-0.5cm}
  \caption{Performance profiles and running times comparing \Partitioner{Mt-KaHyPar} to
           \Partitioner{PaToH} and \Partitioner{KaHyPar} on \mediumhg.}
	\label{fig:sequential_hgp}
\end{figure}

Figure~\ref{fig:sequential_gp} compares \Partitioner{Mt-KaHyPar} to the sequential graph partitioners \Partitioner{Metis-K}
and \Partitioner{KaFFPa} on \mediumgr.
\Partitioner{Mt-KaHyPar-D} (\gmeantime~$0.55$s) is slightly slower than \Partitioner{Metis-K} ($0.39$s) with ten threads but produces significantly better edge cuts
(median improvement is $5.9\%$). If we disable the FM algorithm in \Partitioner{Mt-KaHyPar-D}, we obtain a configuration that is slightly
faster than \Partitioner{Metis-K}, while the edge cuts are comparable (see Figure~\ref{fig:metis_k_vs_mt_kahypar_s} in Appendix~\ref{appendix:comparison}).

\Partitioner{Mt-KaHyPar-Q-F} ($5.22$s) is faster than \Partitioner{KaFFPa-EcoS} ($10.51$s) and produces better edge cuts
by $2.9\%$ in the median. The differences between the edge cuts computed by
\Partitioner{Mt-KaHyPar-Q-F} and \Partitioner{KaFFPa-Strong} ($162.83$s) are not statiscally
significant ($Z = -2.3101$ and $p = 0.02088$).
Out of all tested algorithms, \Partitioner{KaFFPa-StrongS} ($201.99$s) is the only algorithm producing slightly better edge cuts than
\Partitioner{Mt-KaHyPar-Q-F} (median improvement is $1\%$). However, this comes at the cost of a $38.66$ times
longer running time on average, making the quality improvement questionable in practice.

\begin{figure}
  \centering
  \begin{minipage}{\textwidth}
    \centering
  \ifpdfplots
    \includegraphics{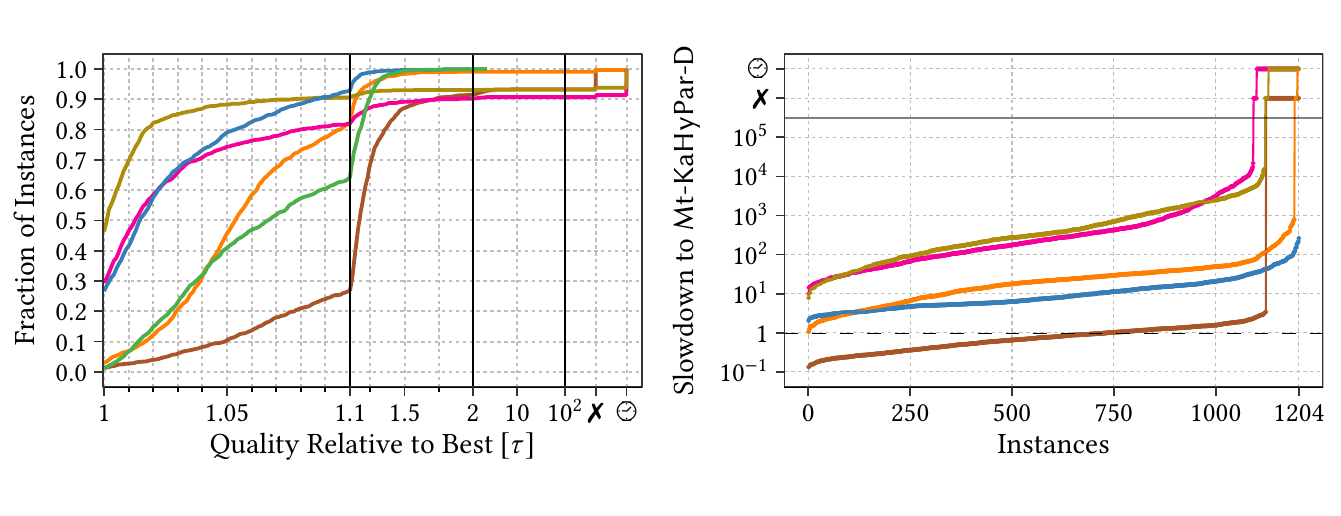}
  \else
    \tikzsetnextfilename{pdf_plots/sequential_gp_quality}%
    \input{tikz_plots/sequential_gp_quality}%
  \fi
  \end{minipage} %
  \begin{minipage}{\textwidth}
    \vspace{-0.25cm}
    \centering
  \ifpdfplots
    \includegraphics{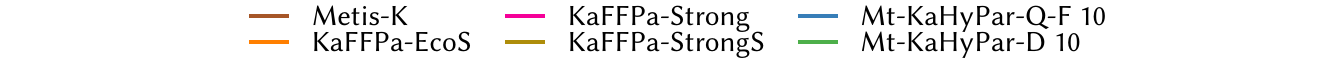}
  \else
    \tikzsetnextfilename{pdf_plots/sequential_gp_quality_legend}%
    \input{tikz_plots/sequential_gp_quality_legend}%
  \fi
  \end{minipage} %
  \vspace{-0.5cm}
  \caption{Performance profiles and running times comparing \Partitioner{Mt-KaHyPar} to
           \Partitioner{Metis} and \Partitioner{KaFFPa} on \mediumgr.}
	\label{fig:sequential_gp}
\end{figure}

\begin{table}[t]
  \centering
  \small
  \caption{Geometric mean running times of different sequential (hyper)graph partitioning algorithm and \Partitioner{Mt-KaHyPar-D/-Q-F}
           with an increasing number of threads on \mediumgr~(left) and \shortmediumhg~(right).}
  \label{tbl:gmean_times}
  \begin{tabular}{lr|rrr|lr|rrr}
              &        &         & \multicolumn{2}{c|}{\Partitioner{Mt-KaHyPar}} &             &        &         & \multicolumn{2}{c}{\Partitioner{Mt-KaHyPar}} \\
  Seq. Algo.  & $t[s]$ & Threads & \Partitioner{-D} & \Partitioner{-Q-F}  &
  Seq. Algo.  & $t[s]$ & Threads & \Partitioner{-D} & \Partitioner{-Q-F}  \\
  \midrule
  \Partitioner{Metis-K}        & $0.39$    & 16      & $0.45$ & $4.23$  & \Partitioner{PaToH-D}      & $1.17$        & 16      & $0.74$    & $3.98$ \\
  \Partitioner{Metis-R}        & $0.55$    & 10      & $0.55$ & $5.22$  & \Partitioner{PaToH-Q}      & $5.86$        & 10      & $0.88$    & $5.08$  \\
  \Partitioner{KaFFPa-Fast}    & $1.69$    & 8       & $0.61$ & $5.74$  & \Partitioner{Mondriaan}    & $6.62$        & 8       & $1.08$    & $5.58$  \\
  \Partitioner{Scotch}         & $1.84$    & 4       & $0.98$ & $8.98$  & \Partitioner{KaHyPar-CA}   & $28.14$       & 4       & $1.83$    & $9.07$  \\
  \Partitioner{KaFFPa-FastS}   & $1.88$    & 2       & $1.69$ & $15.64$ & \Partitioner{$r$KaHyPar}   & $46.10$       & 2       & $3.33$    & $16.04$  \\
  \Partitioner{KaFFPa-EcoS}    & $10.51$   & 1       & $3.00$ & $28.56$ & \Partitioner{$k$KaHyPar}   & $48.98$       & 1       & $6.24$    & $29.52$  \\
  \Partitioner{KaFFPa-Eco}     & $10.94$   &         &        &         & \Partitioner{hMetis-K}     & $73.75$       &         &           &   \\
  \Partitioner{KaFFPa-Strong}  & $162.83$  &         &        &         & \Partitioner{hMetis-R}     & $93.21$       &         &           &   \\
  \Partitioner{KaFFPa-StrongS} & $201.99$  &         &        &         &             &              &         &           &   \\
  \end{tabular}
\end{table}

As we have seen, \Partitioner{Mt-KaHyPar-D} is faster than most of the sequential algorithms using ten threads.
This raises the question whether or not the result still holds when we use less threads. We therefore
compare the running times of \Partitioner{Mt-KaHyPar-D/-Q-F} with an increasing number of threads to the different sequential
algorithms on \mediumgr~and \shortmediumhg~in Table~\ref{tbl:gmean_times}\footnote{Note that increasing the number of threads
does not affect the solution quality of \Partitioner{Mt-KaHyPar-D/-Q-F}, as shown in Figure~\ref{fig:scalability_quality}.}.
On \mediumgr, \Partitioner{Mt-KaHyPar-D} is faster than most of the sequential algorithms using two threads.
\Partitioner{Metis-K} is still faster than \Partitioner{Mt-KaHyPar-D}, but their running times become comparable when we use $16$ threads.
The sequential time of \Partitioner{Mt-KaHyPar-Q-F} is almost an order of magnitude faster than the running time of the best sequetial
partitioner \Partitioner{KaFFPa-StrongS}, and it becomes faster than \Partitioner{KaFFPa-EcoS} when we use four threads.
On \mediumhg, we have to run \Partitioner{Mt-KaHyPar-D} with eight threads to achieve comparable speed to \Partitioner{PaToH-D}.
However, this number decreases to two threads when we compare their running times on the larger instances of \largehg~\cite[see Fig.~4.17]{GOTT-DIS}.
The sequential time of \Partitioner{Mt-KaHyPar-D} is comparable to \Partitioner{PaToH-Q}, and \Partitioner{Mt-KaHyPar-Q-F} is
significantly faster than its sequential counterpart \Partitioner{$k$KaHyPar} when we use only one thread.

\paragraph{Comparison to Parallel Systems}

Figure~\ref{fig:parallel_hgp} compares \Partitioner{Mt-KaHyPar} to the hypergraph partitioners
\Partitioner{Zoltan} (distributed-memory), \Partitioner{BiPart} (deterministic shared-memory), and
\Partitioner{PaToH} (sequential) on \largehg. Note that \Partitioner{PaToH-D} is fast enough to conduct the experiments on \largehg~in a reasonable time frame,
while this is not the case for any of the other sequential partitioners.
Despite the fact that \Partitioner{Zoltan} has been shown to outperform \Partitioner{BiPart} (see Table~\ref{tbl:outperformed}),
we have included it for a direct comparison to our deterministic configuration \Partitioner{Mt-KaHyPar-SDet}.

The median improvement of \Partitioner{Mt-KaHyPar-SDet} (\gmeantime~$3.14$s) over
\Partitioner{BiPart} ($29.19$s) -- the only existing competitor for deterministic partitioning --
is $200\%$, while it is almost an order of magnitude faster. Our deterministic algorithm
also outperforms \Partitioner{Zoltan} ($12.63$s, median improvement is $12\%$) and
the Wilcoxon signed-ranked test reveals that there is no statistically significant difference between the
solutions produced by \Partitioner{Mt-KaHyPar-SDet} and \Partitioner{PaToH-D} ($51.2$s,
$Z = 1.7314$ and $p = 0.08337$).

\Partitioner{Mt-KaHyPar-D} ($4.64$s) is slightly slower than \Partitioner{Mt-KaHyPar-SDet}, but it computes solutions
that are $23\%$ resp.~$6.6\%$ better than those of \Partitioner{Zoltan} resp.~\Partitioner{PaToH-D} in the median and
is still significantly faster than both algorithms. When flow-based refinement is used (\Partitioner{Mt-KaHyPar-D-F}, not shown in the plots),
we achieve a median improvement over \Partitioner{Zoltan} of $34\%$.
This shows that \Partitioner{Mt-KaHyPar} can partition extremely large hypergraph with high solution quality,
which was previously only possible with sequential codes on medium-sized instances.

\begin{figure}
  \centering
  \begin{minipage}{\textwidth}
    \centering
  \ifpdfplots
    \includegraphics{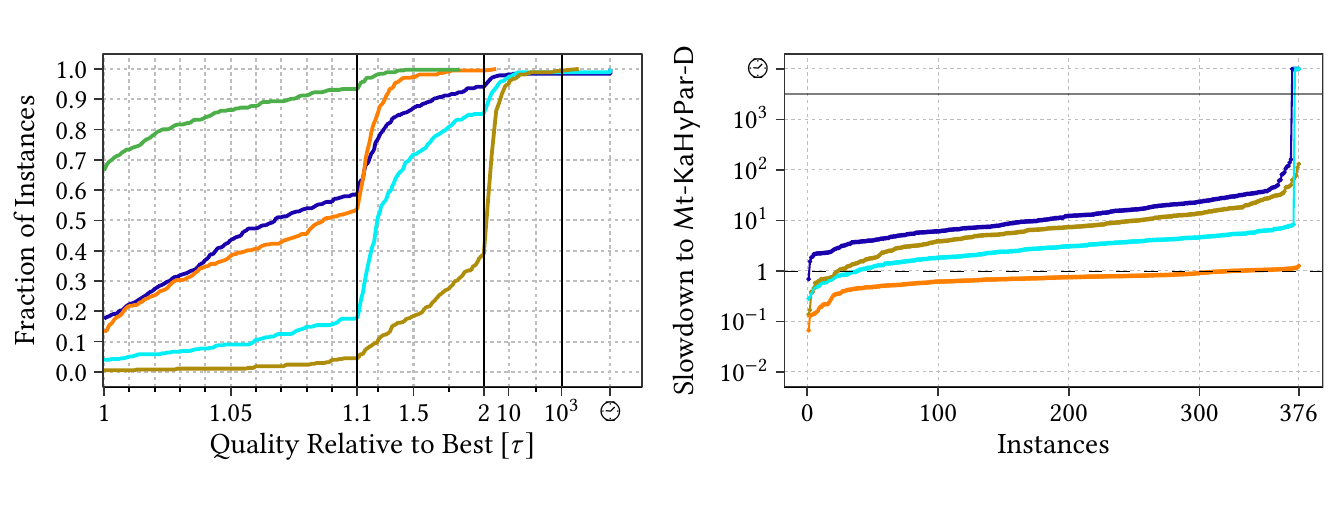}
  \else
    \tikzsetnextfilename{pdf_plots/parallel_hgp_quality}%
    \input{tikz_plots/parallel_hgp_quality}%
  \fi
  \end{minipage} %
  \begin{minipage}{\textwidth}
    \vspace{-0.25cm}
    \centering
  \ifpdfplots
    \includegraphics{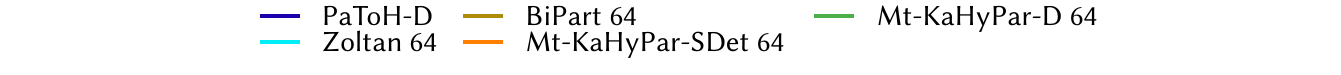}
  \else
    \tikzsetnextfilename{pdf_plots/parallel_hgp_quality_legend}%
    \input{tikz_plots/parallel_hgp_quality_legend}%
  \fi
  \end{minipage} %
  \vspace{-0.5cm}
  \caption{Performance profiles and running times comparing \Partitioner{Mt-KaHyPar} to
           \Partitioner{PaToH} and \Partitioner{Zoltan} on \largehg.}
	\label{fig:parallel_hgp}
\end{figure}

Figure~\ref{fig:parallel_gp} compares \Partitioner{Mt-KaHyPar} to the parallel graph partitioner
\Partitioner{KaMinPar} (shared-memory) and \Partitioner{Mt-KaHIP} (shared-memory, also implements
a parallel version of the FM algorithm) on \largegr. We can see that \Partitioner{Mt-KaHyPar-D} ($10.8$s)
computes on most of the instances the best solutions. The median improvement of \Partitioner{Mt-KaHyPar-D}
over \Partitioner{Mt-KaHIP} ($13.69$s) is $2.1\%$, while it is also slightly faster. Out of
all tested algorithms, \Partitioner{KaMinPar} ($2.69$s) is the only algorithm that is faster than \Partitioner{Mt-KaHyPar-D},
but the edge cuts produced by \Partitioner{KaMinPar} are worse than those of \Partitioner{Mt-KaHyPar-D} by
$9.9\%$ in the median. On larger graph instances, \Partitioner{KaMinPar} is the method of choice when speed is more important
than quality, and \Partitioner{Mt-KaHyPar} should be used if one aims for high solution quality.

\begin{figure}
  \centering
  \begin{minipage}{\textwidth}
    \centering
  \ifpdfplots
    \includegraphics{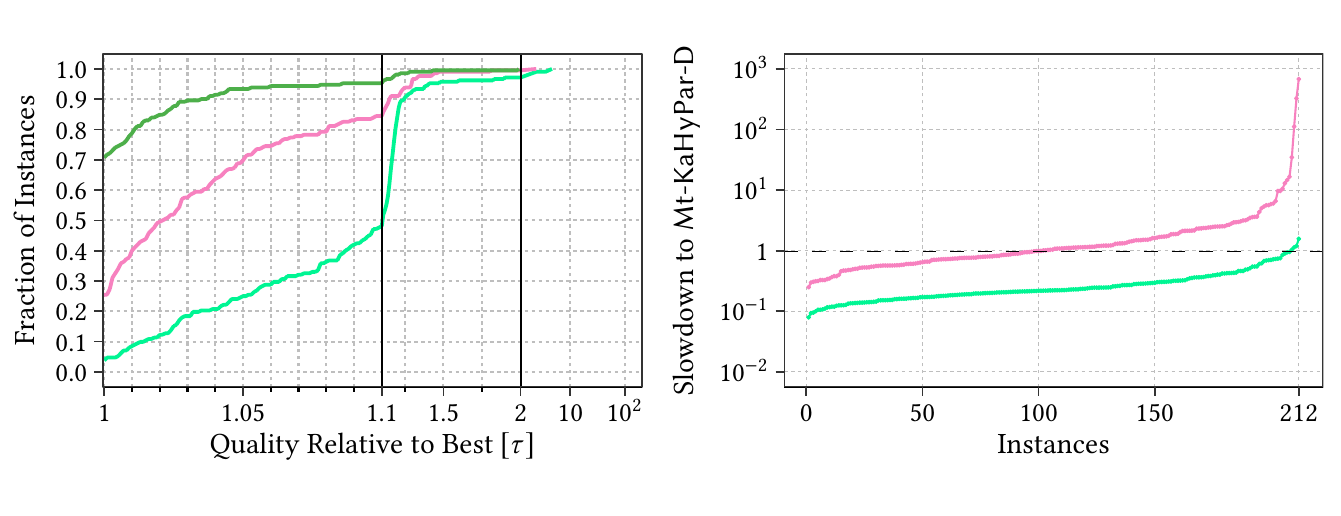}
  \else
    \tikzsetnextfilename{pdf_plots/parallel_gp_quality}%
    \input{tikz_plots/parallel_gp_quality}%
  \fi
  \end{minipage} %
  \begin{minipage}{\textwidth}
    \vspace{-0.25cm}
    \centering
  \ifpdfplots
    \includegraphics{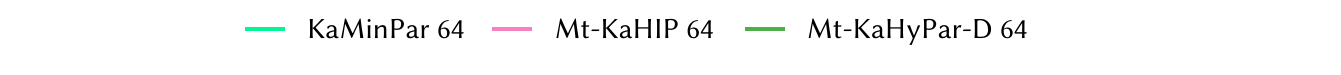}
  \else
    \tikzsetnextfilename{pdf_plots/parallel_gp_quality_legend}%
    \input{tikz_plots/parallel_gp_quality_legend}%
  \fi
  \end{minipage} %
  \vspace{-0.5cm}
  \caption{Performance profiles and running times comparing \Partitioner{Mt-KaHyPar} to
           \Partitioner{KaMinPar} and \Partitioner{Mt-KaHIP} on \largegr.}
	\label{fig:parallel_gp}
\end{figure}

\paragraph{Limitations}
In this study, we partitioned (hyper)graphs in up to $128$ blocks with an allowed imbalance of $\varepsilon = 3\%$.
We want to point out that there are still settings where the results of this evaluation do not apply.
For example, \Partitioner{KaMinPar} is specifically designed for partitioning graphs into a
large number of blocks (e.g., $k \in \Oh{\sqrt{n}}$). In this setting, existing algorithms struggle to find
balanced solutions or do not complete in a reasonable time frame~\cite{KAMINPAR}.
We are integrating KaMinPar's deep multilevel partitioning scheme in Mt-KaHyPar and hope to offer support for very large $k$ in the near future.
Another limitation is the restriction of our algorithms to running in-memory on a single machine, and thus instances are restricted to the size of currently available RAM.
Finally, partitioning (hyper)graphs with a tight balance constraint (e.g., $\varepsilon \approx 0$) poses additional
challenges for traditional refinement algorithms as this drastically reduces the set of possible moves.

\section{Conclusion}\label{sec:conclusion}

We have presented the \emph{first} set of shared-memory algorithms for partitioning hypergraphs. Our solver \Partitioner{Mt-KaHyPar}
produces solutions on par with the best sequential codes, while it is faster than most of the existing parallel algorithms.
We demonstrated this achievement in our extensive experimental evaluation with 25 sequential and parallel
graph and hypergraph partitioners tested on over $800$ (hyper)graphs.
We contributed parallel formulations for all phases of the multilevel scheme: a parallel clustering-based coarsening
algorithm guided by the community structure of the input hypergraph obtained via a parallel community detection algorithm,
initial partitioning via parallel recursive bipartitioning using work-stealing, the first fully-parallel FM implementation,
and a parallelization of flow-based refinement.
Perhaps the most suprising result is the efficient parallelization of the $n$-level partitioning scheme, even though
we showed that traditional multilevel algorithms can compute comparable solutions when flow-based refinement is used.
Furthermore, we presented multiple techniques to accurately (re)compute gain values
for concurrent node moves, which had not been addressed in
parallel partitioning algorithms before. We also proposed data structure optimizations
for plain graphs, making \Partitioner{Mt-KaHyPar} the state-of-the-art solver for graph partitioning.
Additionally, we devised a deterministic version of our multilevel algorithm based on the synchronous local moving scheme.

Given that quality improvements often come at the cost of significantly longer running times, it may be interesting to
evaluate the quality-time trade-off of existing tools for applications before advancing the
field of high-quality partitioning. For instances that do not fit into the main memory of a single machine,
translating the techniques presented in this work into the distributed-memory setting is also
an important area for future research. We see further algorithmic improvements in a localized version
of flow-based refinement that runs after each batch uncontraction in the $n$-level scheme as well as
improving clustering decisions in the coarsening phase.

\begin{acks}
The authors thank Michael Hamann, Daniel Seemaier, Christian Schulz and Dorothea Wagner for helpful discussions over the course of this research.
This work was supported in part by DFG grants WA654/19-2 and SA933/11-1. The
authors acknowledge support by the state of Baden-Württemberg through bwHPC.
\end{acks}


\bibliographystyle{ACM-Reference-Format}
\bibliography{references}

\clearpage

\appendix

\section{Comparison to Other Systems}\label{appendix:comparison}

\begin{figure*}[!htb]
  \centering
  \begin{minipage}{\textwidth}
    \centering
  \ifpdfplots
    \includegraphics{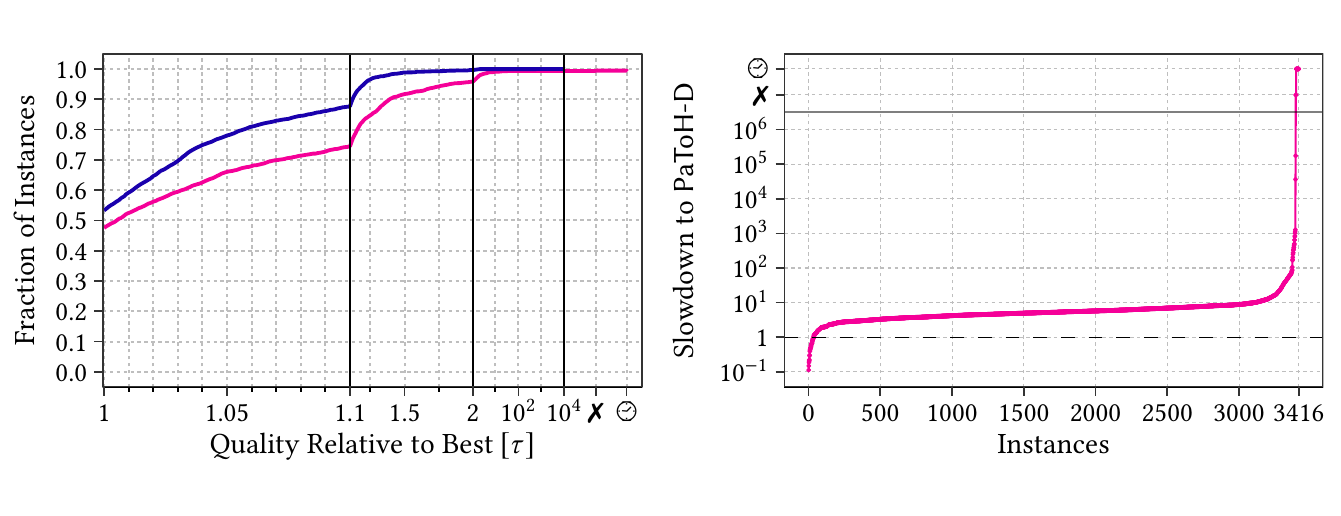}
  \else
    \tikzsetnextfilename{pdf_plots/patoh_vs_mondriaan}%
    \input{tikz_plots/patoh_vs_mondriaan}%
  \fi
  \end{minipage} %
  \begin{minipage}{\textwidth}
    \vspace{-0.25cm}
    \centering
  \ifpdfplots
    \includegraphics{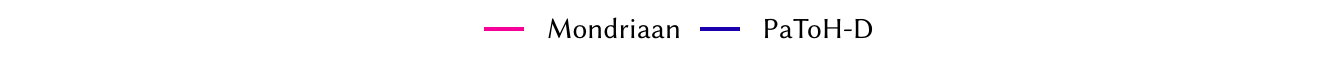}
  \else
    \tikzsetnextfilename{pdf_plots/patoh_vs_mondriaan_legend}%
    \input{tikz_plots/patoh_vs_mondriaan_legend}%
  \fi
  \end{minipage} %
  \vspace{-0.5cm}
  \caption{Performance profiles and running times comparing \Partitioner{PaToH-D} and
           \Partitioner{Mondriaan} on \mediumhg.}
	\label{fig:patoh_vs_mondriaan}
\end{figure*}

\begin{figure*}[!htb]
  \centering
  \begin{minipage}{\textwidth}
    \centering
  \ifpdfplots
    \includegraphics{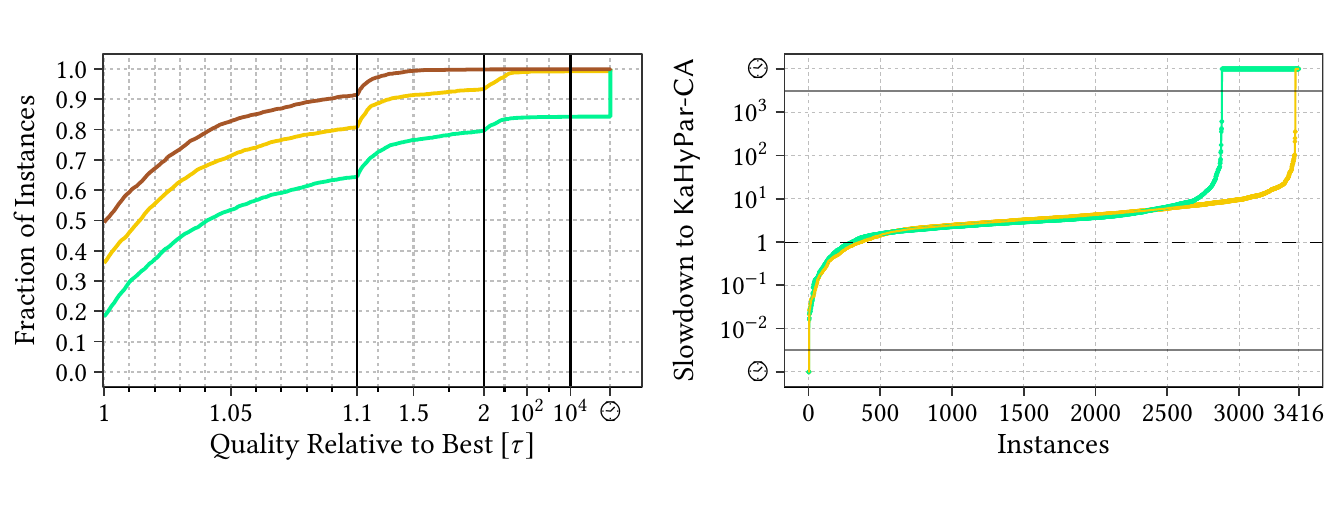}
  \else
    \tikzsetnextfilename{pdf_plots/kahypar_vs_hmetis}%
    \input{tikz_plots/kahypar_vs_hmetis}%
  \fi
  \end{minipage} %
  \begin{minipage}{\textwidth}
    \vspace{-0.25cm}
    \centering
  \ifpdfplots
    \includegraphics{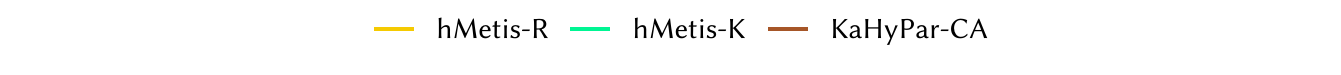}
  \else
    \tikzsetnextfilename{pdf_plots/kahypar_vs_hmetis_legend}%
    \input{tikz_plots/kahypar_vs_hmetis_legend}%
  \fi
  \end{minipage} %
  \vspace{-0.5cm}
  \caption{Performance profiles and running times comparing \Partitioner{KaHyPar-CA} and
           \Partitioner{hMetis} on \mediumhg.}
	\label{fig:kahypar_vs_hmetis}
\end{figure*}

\begin{figure*}[!htb]
  \centering
  \begin{minipage}{\textwidth}
    \centering
  \ifpdfplots
    \includegraphics{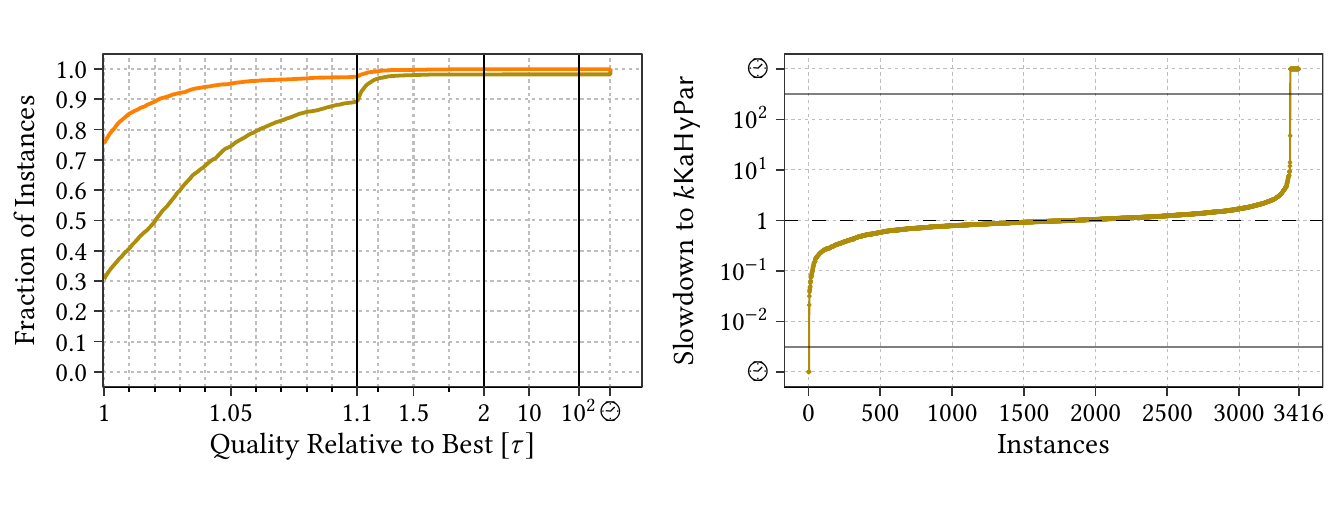}
  \else
    \tikzsetnextfilename{pdf_plots/kahypar_r_vs_k}%
    \input{tikz_plots/kahypar_r_vs_k}%
  \fi
  \end{minipage} %
  \begin{minipage}{\textwidth}
    \vspace{-0.25cm}
    \centering
  \ifpdfplots
    \includegraphics{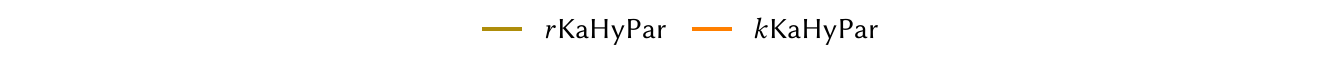}
  \else
    \tikzsetnextfilename{pdf_plots/kahypar_r_vs_k_legend}%
    \input{tikz_plots/kahypar_r_vs_k_legend}%
  \fi
  \end{minipage} %
  \vspace{-0.5cm}
  \caption{Performance profiles and running times comparing \Partitioner{$r$KaHyPar} and
           \Partitioner{$k$KaHyPar} on \mediumhg.}
	\label{fig:kahypar_r_vs_k}
\end{figure*}

\begin{figure*}[!htb]
  \centering
  \begin{minipage}{\textwidth}
    \centering
  \ifpdfplots
    \includegraphics{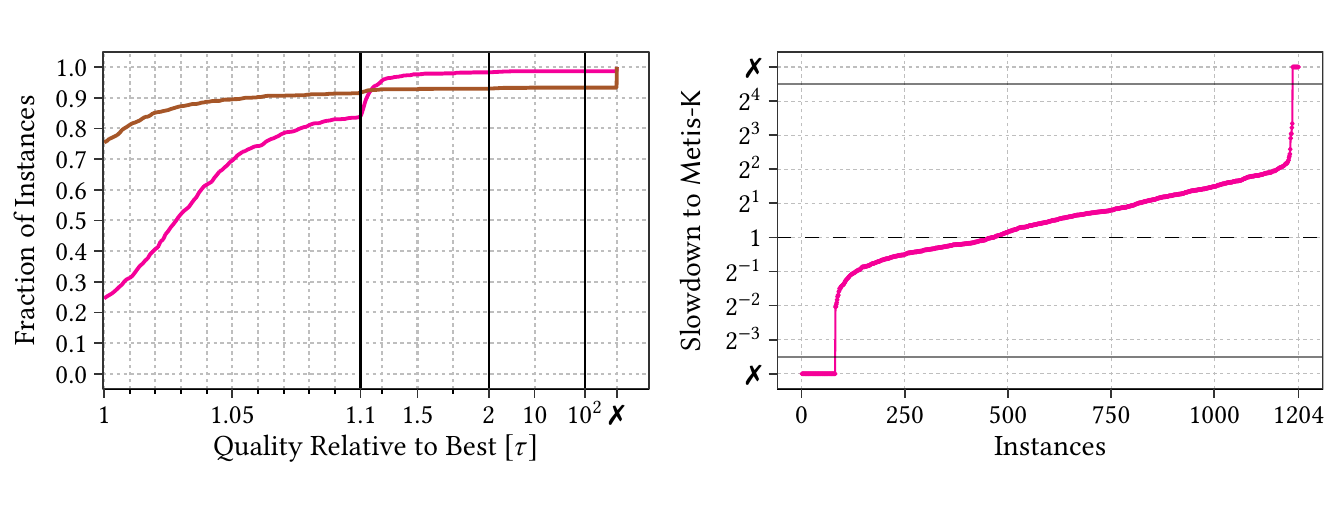}
  \else
    \tikzsetnextfilename{pdf_plots/metis_k_vs_r}%
    \input{tikz_plots/metis_k_vs_r}%
  \fi
  \end{minipage} %
  \begin{minipage}{\textwidth}
    \vspace{-0.25cm}
    \centering
  \ifpdfplots
    \includegraphics{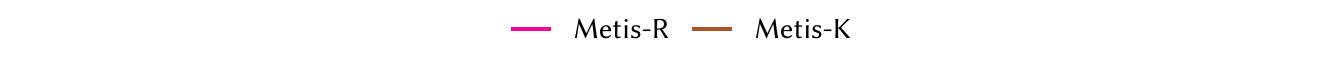}
  \else
    \tikzsetnextfilename{pdf_plots/metis_k_vs_r_legend}%
    \input{tikz_plots/metis_k_vs_r_legend}%
  \fi
  \end{minipage} %
  \vspace{-0.5cm}
  \caption{Performance profiles and running times comparing \Partitioner{Metis-R} and
           \Partitioner{Metis-K} on \mediumgr.}
	\label{fig:metis_k_vs_r}
\end{figure*}

\begin{figure*}[!htb]
  \centering
  \begin{minipage}{\textwidth}
    \centering
  \ifpdfplots
    \includegraphics{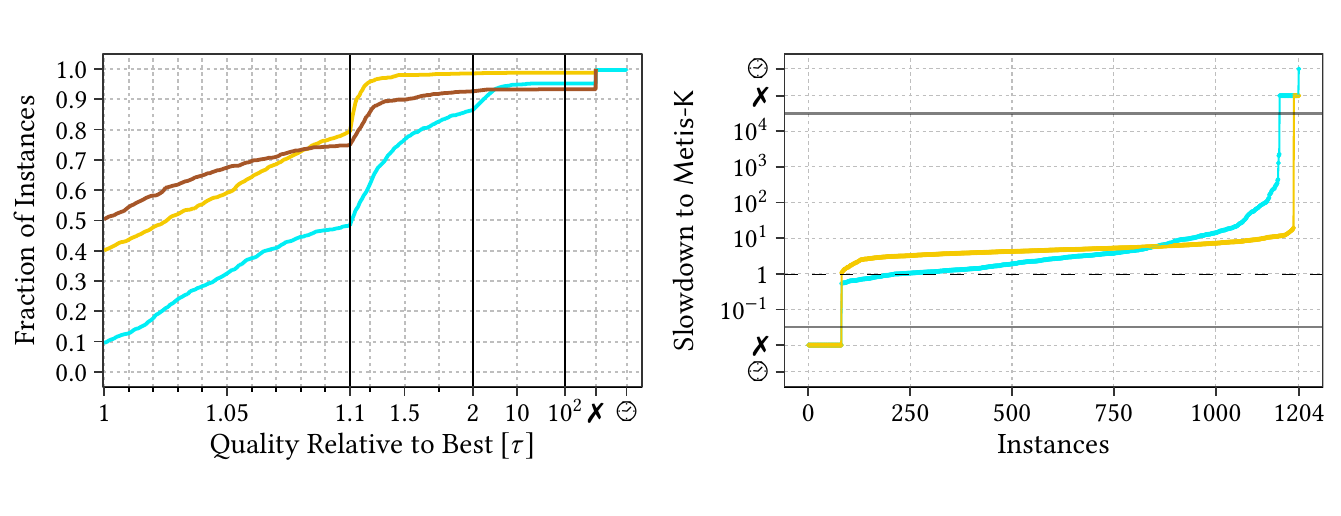}
  \else
    \tikzsetnextfilename{pdf_plots/metis_k_vs_kaffpa_fast}%
    \input{tikz_plots/metis_k_vs_kaffpa_fast}%
  \fi
  \end{minipage} %
  \begin{minipage}{\textwidth}
    \vspace{-0.25cm}
    \centering
  \ifpdfplots
    \includegraphics{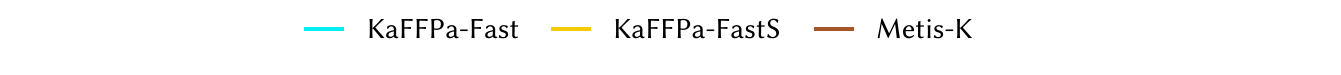}
  \else
    \tikzsetnextfilename{pdf_plots/metis_k_vs_kaffpa_fast_legend}%
    \input{tikz_plots/metis_k_vs_kaffpa_fast_legend}%
  \fi
  \end{minipage} %
  \vspace{-0.5cm}
  \caption{Performance profiles and running times comparing \Partitioner{Metis-K} and
           \Partitioner{KaFFPa-Fast}/\Partitioner{-FastS} on \mediumgr.}
	\label{fig:metis_k_vs_kaffpa_fast}
\end{figure*}

\begin{figure*}[!htb]
  \centering
  \begin{minipage}{\textwidth}
    \centering
  \ifpdfplots
    \includegraphics{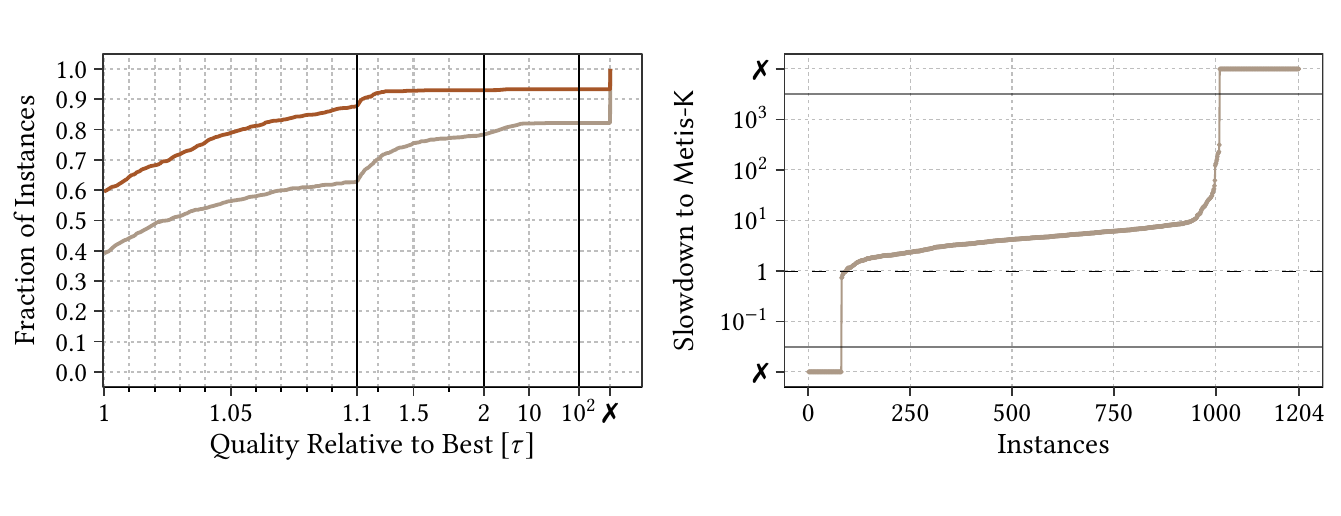}
  \else
    \tikzsetnextfilename{pdf_plots/metis_k_vs_scotch}%
    \input{tikz_plots/metis_k_vs_scotch}%
  \fi
  \end{minipage} %
  \begin{minipage}{\textwidth}
    \vspace{-0.25cm}
    \centering
  \ifpdfplots
    \includegraphics{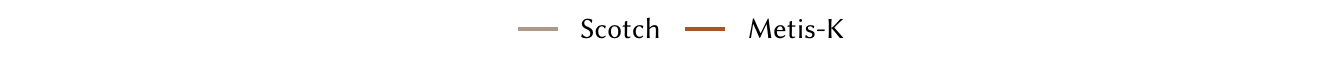}
  \else
    \tikzsetnextfilename{pdf_plots/metis_k_vs_scotch_legend}%
    \input{tikz_plots/metis_k_vs_scotch_legend}%
  \fi
  \end{minipage} %
  \vspace{-0.5cm}
  \caption{Performance profiles and running times comparing \Partitioner{Metis-K} and
           \Partitioner{Scotch} on \mediumgr.}
	\label{fig:metis_k_vs_scotch}
\end{figure*}

\begin{figure*}[!htb]
  \centering
  \begin{minipage}{\textwidth}
    \centering
  \ifpdfplots
    \includegraphics{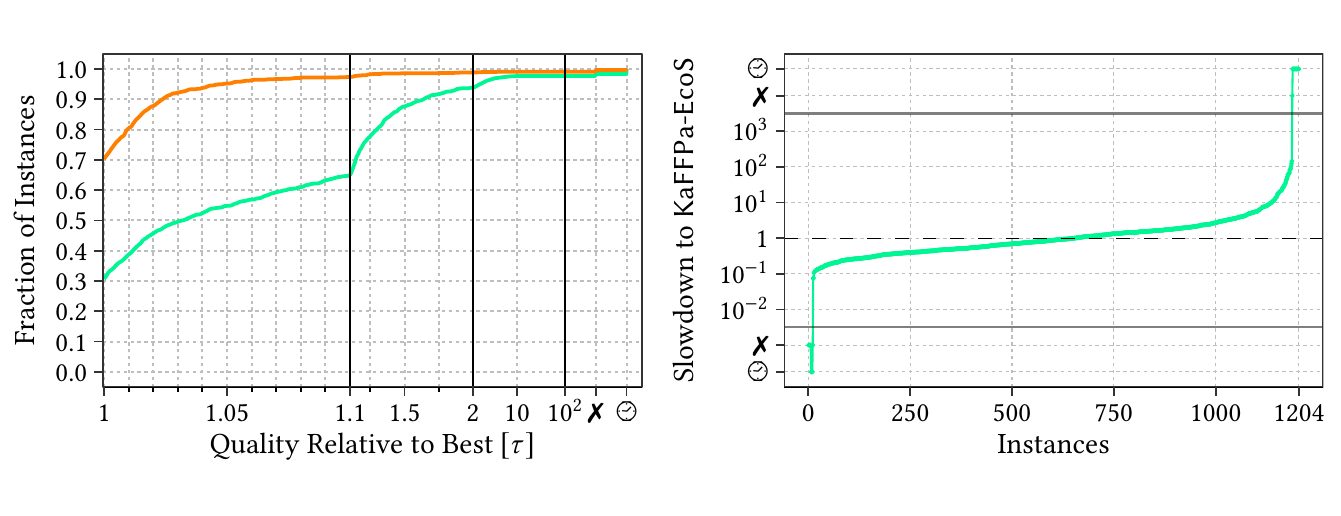}
  \else
    \tikzsetnextfilename{pdf_plots/kaffpa_eco_vs_ecos}%
    \input{tikz_plots/kaffpa_eco_vs_ecos}%
  \fi
  \end{minipage} %
  \begin{minipage}{\textwidth}
    \vspace{-0.25cm}
    \centering
  \ifpdfplots
    \includegraphics{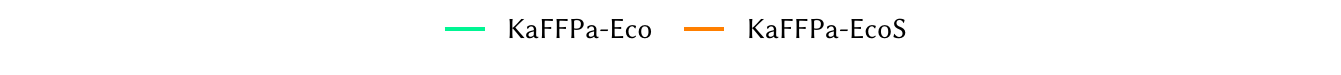}
  \else
    \tikzsetnextfilename{pdf_plots/kaffpa_eco_vs_ecos_legend}%
    \input{tikz_plots/kaffpa_eco_vs_ecos_legend}%
  \fi
  \end{minipage} %
  \vspace{-0.5cm}
  \caption{Performance profiles and running times comparing \Partitioner{KaFFPa-Eco} and
           \Partitioner{KaFFPa-EcoS} on \mediumgr.}
	\label{fig:kaffpa_eco_vs_ecos}
\end{figure*}

\begin{figure*}[!htb]
  \centering
  \begin{minipage}{\textwidth}
    \centering
  \ifpdfplots
    \includegraphics{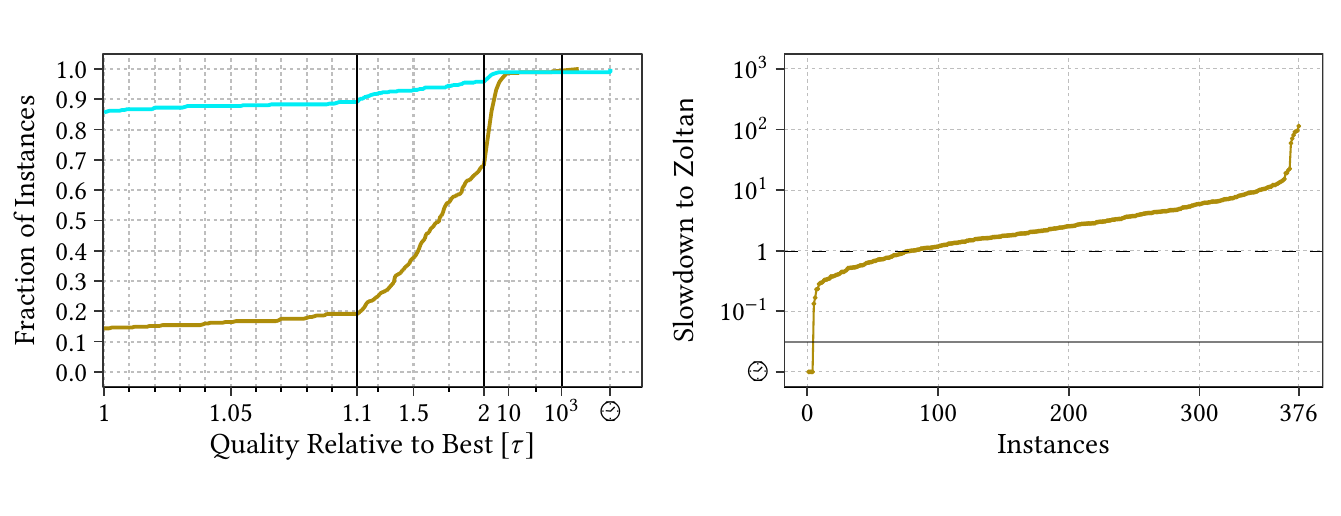}
  \else
    \tikzsetnextfilename{pdf_plots/zoltan_vs_bipart}%
    \input{tikz_plots/zoltan_vs_bipart}%
  \fi
  \end{minipage} %
  \begin{minipage}{\textwidth}
    \vspace{-0.25cm}
    \centering
  \ifpdfplots
    \includegraphics{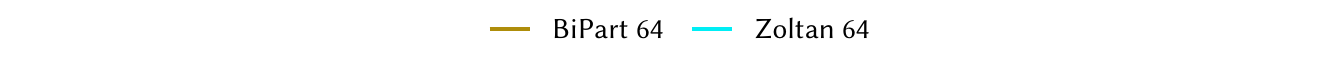}
  \else
    \tikzsetnextfilename{pdf_plots/zoltan_vs_bipart_legend}%
    \input{tikz_plots/zoltan_vs_bipart_legend}%
  \fi
  \end{minipage} %
  \vspace{-0.5cm}
  \caption{Performance profiles and running times comparing \Partitioner{Zoltan} and
           \Partitioner{BiPart} on \largehg.}
	\label{fig:zoltan_vs_bipart}
\end{figure*}

\begin{figure*}[!htb]
  \centering
  \begin{minipage}{\textwidth}
    \centering
  \ifpdfplots
    \includegraphics{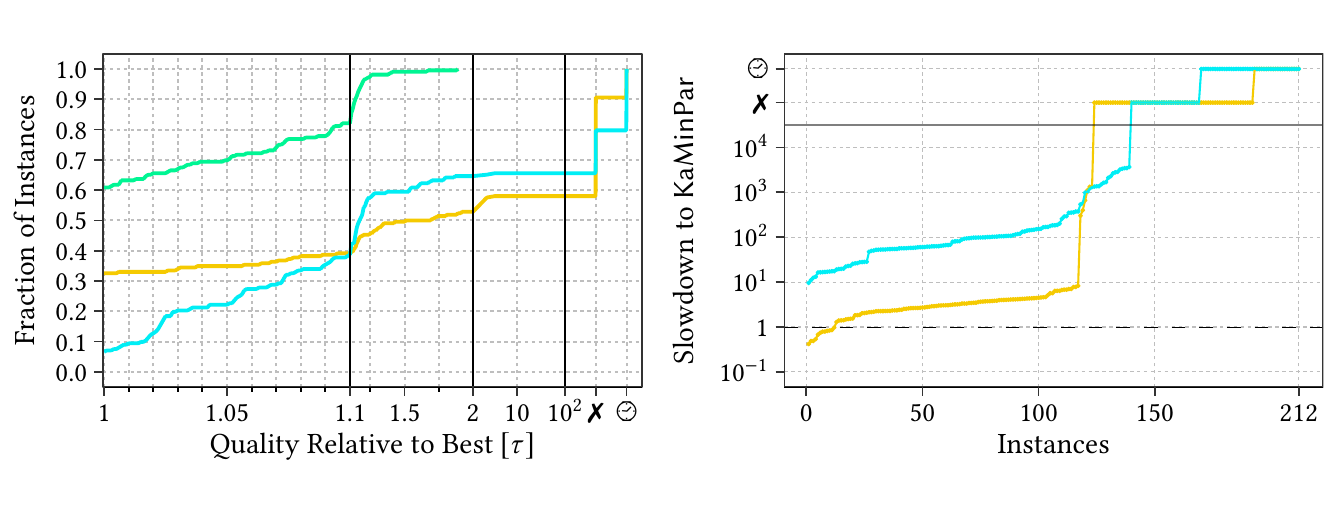}
  \else
    \tikzsetnextfilename{pdf_plots/kaminpar_vs_metis}%
    \input{tikz_plots/kaminpar_vs_metis}%
  \fi
  \end{minipage} %
  \begin{minipage}{\textwidth}
    \vspace{-0.25cm}
    \centering
  \ifpdfplots
    \includegraphics{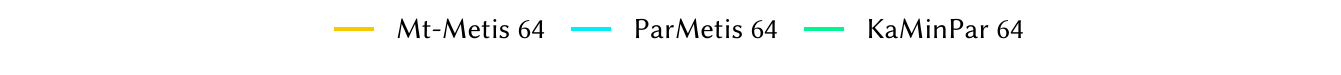}
  \else
    \tikzsetnextfilename{pdf_plots/kaminpar_vs_metis_legend}%
    \input{tikz_plots/kaminpar_vs_metis_legend}%
  \fi
  \end{minipage} %
  \vspace{-0.5cm}
  \caption{Performance profiles and running times comparing \Partitioner{KaMinPar} to
           \Partitioner{Mt-Metis} and \Partitioner{ParMetis} on \largegr.}
	\label{fig:kaminpar_vs_metis}
\end{figure*}

\begin{figure*}[!htb]
  \centering
  \begin{minipage}{\textwidth}
    \centering
  \ifpdfplots
    \includegraphics{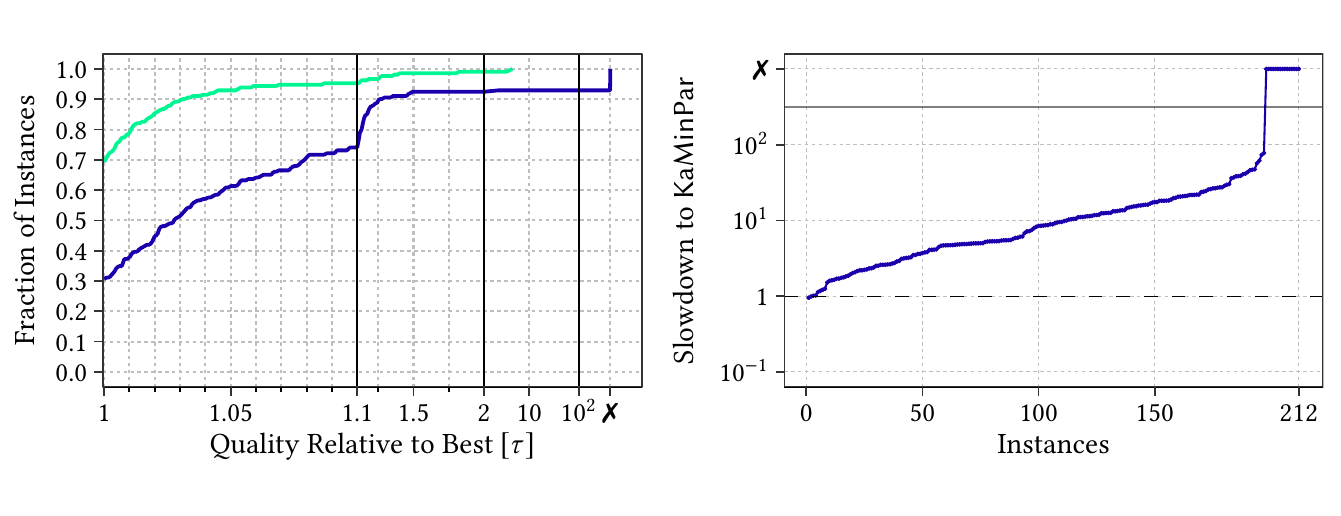}
  \else
    \tikzsetnextfilename{pdf_plots/kaminpar_vs_parhip}%
    \input{tikz_plots/kaminpar_vs_parhip}%
  \fi
  \end{minipage} %
  \begin{minipage}{\textwidth}
    \vspace{-0.25cm}
    \centering
  \ifpdfplots
    \includegraphics{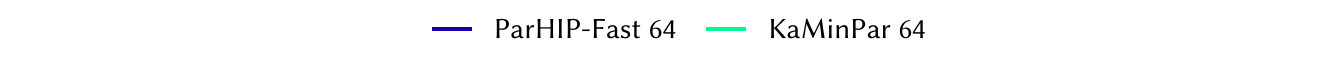}
  \else
    \tikzsetnextfilename{pdf_plots/kaminpar_vs_parhip_legend}%
    \input{tikz_plots/kaminpar_vs_parhip_legend}%
  \fi
  \end{minipage} %
  \vspace{-0.5cm}
  \caption{Performance profiles and running times comparing \Partitioner{KaMinPar} to
           \Partitioner{ParHIP-Fast} on \largegr.}
	\label{fig:kaminpar_vs_parhip}
\end{figure*}

\begin{figure*}[!htb]
  \centering
  \begin{minipage}{\textwidth}
    \centering
  \ifpdfplots
    \includegraphics{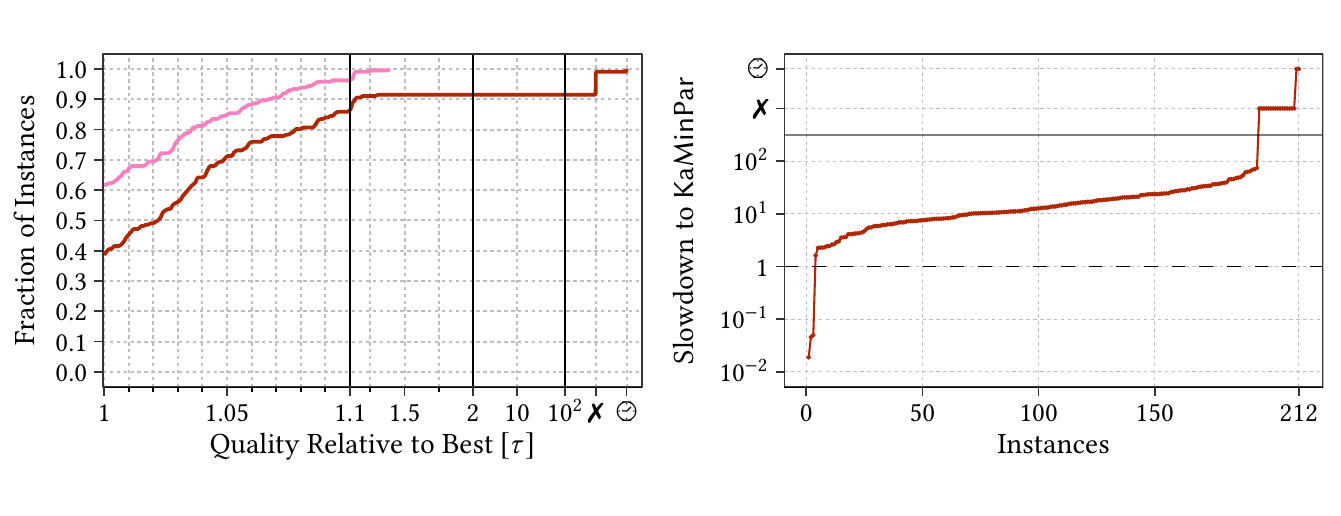}
  \else
    \tikzsetnextfilename{pdf_plots/mt_kahip_vs_parhip}%
    \input{tikz_plots/mt_kahip_vs_parhip}%
  \fi
  \end{minipage} %
  \begin{minipage}{\textwidth}
    \vspace{-0.25cm}
    \centering
  \ifpdfplots
    \includegraphics{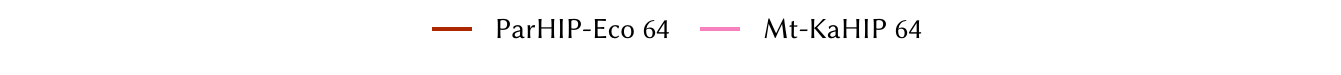}
  \else
    \tikzsetnextfilename{pdf_plots/mt_kahip_vs_parhip_legend}%
    \input{tikz_plots/mt_kahip_vs_parhip_legend}%
  \fi
  \end{minipage} %
  \vspace{-0.5cm}
  \caption{Performance profiles and running times comparing \Partitioner{Mt-KaHIP} to
           \Partitioner{ParHIP-Eco} on \largegr.}
	\label{fig:mt_kahip_vs_parhip}
\end{figure*}

\begin{figure*}[!htb]
  \centering
  \begin{minipage}{\textwidth}
    \centering
  \ifpdfplots
    \includegraphics{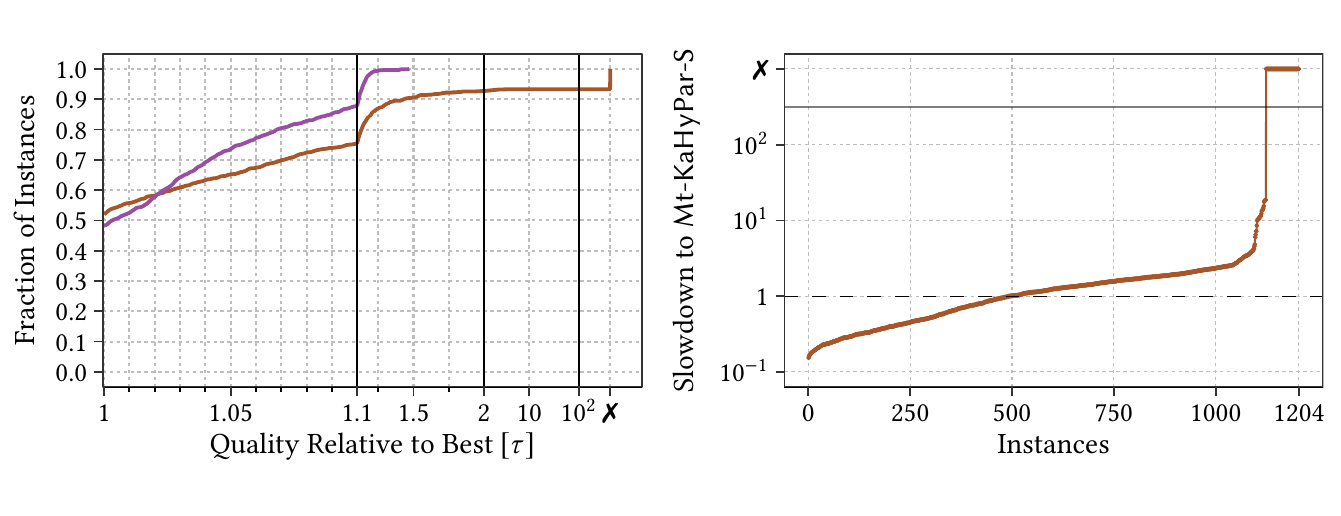}
  \else
    \tikzsetnextfilename{pdf_plots/metis_k_vs_mt_kahypar_s}%
    \input{tikz_plots/metis_k_vs_mt_kahypar_s}%
  \fi
  \end{minipage} %
  \begin{minipage}{\textwidth}
    \vspace{-0.25cm}
    \centering
  \ifpdfplots
    \includegraphics{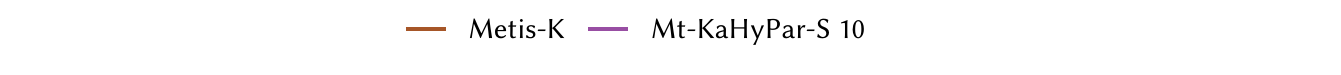}
  \else
    \tikzsetnextfilename{pdf_plots/metis_k_vs_mt_kahypar_s_legend}%
    \input{tikz_plots/metis_k_vs_mt_kahypar_s_legend}%
  \fi
  \end{minipage} %
  \vspace{-0.5cm}
  \caption{Performance profiles and running times comparing \Partitioner{Mt-KaHyPar-S} (\textbf{S}peed, \Partitioner{Mt-KaHyPar-D} without FM refinement) to
           \Partitioner{Metis-K} on \mediumgr.}
	\label{fig:metis_k_vs_mt_kahypar_s}
\end{figure*}

\end{document}
\endinput